\providecommand{\U}[1]{\protect\rule{.1in}{.1in}}
\newtheorem{thm}{Theorem}[section]
\newtheorem{Assumption}{\bf Assumption}
\newtheorem{lem}{Lemma}[section]
\newtheorem{prop}{Proposition}[section]
\newtheorem{rem}{Remark}[section]
\newenvironment{proof}[1][Proof]{\noindent\textbf{#1.} }{\ \rule{0.5em}{0.5em}}
\numberwithin{equation}{section}
\definecolor{linkcolor}{rgb}{0,0,0.7}
\begin{document}

\title{Optimal portfolio under ratio-type periodic evaluation in stochastic factor models under convex trading constraints}
\author{Wenyuan Wang\thanks{School of Mathematics and Statistics, Fujian Normal University, Fuzhou, 350007, China; and School of Mathematical Sciences, Xiamen University, Xiamen, 361005, China. Email: wwywang@xmu.edu.cn}
\and
Kaixin Yan\thanks{School of Mathematical Sciences, Xiamen University, Xiamen, 361005, China. Email: kaixinyan@stu.xmu.edu.cn}
\and
Xiang Yu\thanks{Department of Applied Mathematics, The Hong Kong Polytechnic University, Kowloon, Hong Kong. E-mail: xiang.yu@polyu.edu.hk}
}

\date{\ }

\maketitle
\vspace{-0.6in}

\begin{abstract}
This paper studies a type of periodic utility maximization problem for portfolio management in incomplete stochastic factor models with convex trading constraints. The portfolio performance is periodically evaluated on the relative ratio of two adjacent wealth levels over an infinite horizon, featuring the dynamic adjustments in portfolio decision according to past achievements. Under power utility, we transform the original infinite horizon optimal control problem into an auxiliary terminal wealth optimization problem under a modified utility function. To cope with the convex trading constraints, we further introduce an auxiliary unconstrained optimization problem in a modified market model and develop the martingale duality approach to establish the existence of the dual minimizer such that the optimal unconstrained wealth process can be obtained using the dual representation. With the help of the duality results in the auxiliary problems, the relationship between the constrained and unconstrained models as well as some fixed point arguments, we derive and verify the optimal constrained portfolio process for the original problem over an infinite horizon.\\
\ \\
\textbf{Keywords}: Periodic evaluation, relative portfolio performance, incomplete market, stochastic factor model, convex trading constraints, martingale duality approach\\

\end{abstract}

\section{Introduction}
It has been well documented that for portfolio management by institutional managers, the long term portfolio performance often dictates the daily decision making for the fund management. Various long-run criteria for portfolio optimization have been proposed and studied in the literature, leading to several well-known stochastic control and optimizaiton problems over a large or an infinite horizon. The long term optimal growth rate, also named Kelly's criterion, has been popularized thanks to its tractability and simple financial implications. 
Moving beyond utility maximization, the so-called risk-sensitive portfolio management was introduced by \cite{BPK99} and \cite{FlemingS99} to encode diverse risk attitudes. Later, \cite{Pham03} formulated the long-run outperformance criterion as a large deviation probability control problem to incorporate the benchmark tracking for the fund management.  Recently, another new long-run portfolio criterion was proposed in \cite{TZ23}, which suggests to roll over the same utility sequentially for infinite periods. In  \cite{TZ23}, the utility during each period is generated by the difference between the wealth levels at two adjacent evaluation dates. In particular, the S-shaped utility with the same risk aversion parameter was adopted in \cite{TZ23} to accommodate both cases that the current wealth may outperform or underperform the benchmark level from the preceding evaluation date. This type of periodic evaluation can partially match with the practical exercises in the annual appraisal review in the fund industry. 
However, a key implication of the S-shaped utility is that while falling below the prior benchmark triggers risk-seeking, the optimal solution under loss aversion may prescribe abandoning risky investments altogether if continued exposure risks driving wealth towards bankruptcy, due to the severe disutility of ruin.

In the present paper, we are interested in generalizing the periodic evaluation in \cite{TZ23} by considering its variation when the relative performance is generated by the ratio of wealth processes at two evaluation dates. One advantage of the ratio-type relative performance is that the opportunity of investment cessation is ruled out from the admissible portfolio processes as the resulting optimal wealth process always stays above zero, yielding the optimal portfolio strategy truly supported over the infinite horizon. Moreover, in contrast to \cite{TZ23} that focus on the Black-Scholes model, we study the ratio-type periodic evaluation problem in incomplete market models with price dynamics driven by unhedgeable stochastic factors and general convex constraints on portfolio strategies.

On one hand, stochastic factor models have been extensively employed in optimal investment to integrate the asset predictability of the return and the stochastic volatility. For example, some early studies on the predictability of stock returns using the stochastic factor can be found in \cite{Fama77}, \cite{Ferson93}, \cite{Bre97}, \cite{Bre98}, \cite{Camp99},  \cite{Watchter} among others. The optimal investment under the stochastic volatility and unhedgeable risk or the combination of the stochastic returns and stochastic volatility was investigated in \cite{French87}, \cite{Kim96},  \cite{Scruggs}, \cite{FPS2000}, \cite{ZT01}, \cite{Pham02}, \cite{FH2003}, \cite{Chacko05}, \cite{Kraft05}, \cite{CH05}, \cite{Liu}, \cite{F13}, \cite{Hata18}, \cite{F20}, \cite{DL20}, to name a few.

On the other hand, convex trading constraints have also attracted a lot of attention in academic studies in portfolio management problems, which have been used as regulatory ways to stabilize stock prices in the financial market and as effective methods to minify the portfolio risk in fund management. Two typical examples are the short-selling constraint and borrowing constraint, which have been extensively studied in the literature. See, for instance, some empirical studies on the short-selling constraint and the economic influence in the financial market among \cite{AMP93}, \cite{CHS02}, \cite{ABCC04}, \cite{GS06}, \cite{BP13}, \cite{BJ13}, \cite{GM15}, and et al. See also some relevant studies on portfolio optimization with no short-selling such as the utility maximization (\cite{XS92(a)}, \cite{XS92(b)}, \cite{SH94}); the mean-variance portfolio optimization (\cite{LZ02}); the arbitrage theory in general market models (\cite{PS14}). For portfolio management problems under borrowing constraints, we refer to \cite{FlemingZ91}, \cite{Grossman92}, \cite{Te00}, \cite{Dyb11}, \cite{Lee19}, \cite{Jeon24} and references therein. In the main result of the paper, we consider the general convex trading constraints similar to \cite{CK92} and \cite{Esco23}, and aim to develop a general convex duality approach.  

Inspired by \cite{TZ23}, we first reformulate the infinite horizon optimization problem into an auxiliary terminal wealth optimization problem based on dynamic programming principle, leading to two subsequent questions: (1) The existence and the characterization of the optimal portfolio for the auxiliary terminal wealth utility maximization problem; (2) The fixed point argument to characterize the original value function and the verification proof of the constructed optimal portfolio for the original problem over the infinite horizon. The mixture of convex trading constraints and the stochastic factor model render both questions significantly more challenging. For technical convenience, we focus on the power utility for the periodic evaluation in this study.
Firstly, to address the auxiliary terminal wealth optimization problem under a modified utility function (see the induced utility in \eqref{2.8} and \eqref{problem3}), we need to develop the convex duality approach in the incomplete market model in the face of infinitely many dual processes (see \eqref{dual.pro}) due to unhedgeable stochastic factor process and convex trading constraints. We must overcome new challenges in defining a properly dual space and establishing the existence of the dual minimizer, since our modified utility \eqref{2.8} violates standard assumptions in the literature. Specifically:
(1) The simultaneous incorporation of stochastic factors and convex trading constraints necessitates a non-trivial modification of the dual space to align with the primal problem’s structure;
(2) When the risk aversion parameter $\alpha$ is negative, the Arrow-Pratt measure of relative risk aversion fails to be less than 1. To bypass this, we construct a fictitious market to ensure well-posedness of the dual optimization problem.
Secondly, due to the presence of the stochastic factor, the fixed point in  \eqref{A*.def} depends on the variable $y$ instead of the constant fixed point in \cite{TZ23}. More importantly, the verification proof for the constructed portfolio using the result from the auxiliary problem also becomes more technical as the we first need to establish the connection between the constrained optimization problem and the unconstrained optimization problem in a modified model and then leverage some duality arguments to verify the optimality condition.

The present paper contributes to the methodology in resolving the aforementioned issues. To address the existence of the optimal solution in the auxiliary terminal wealth optimization problem fulfilling the convex constraints, in the first step, we introduce and study the auxiliary optimization problem formulated as a combination of the stochastic factor model in \cite{CH05} and the general convex constraints in \cite{CK92}. That is, we introduce a fictitious market model without portfolio constraints described by \eqref{new-fic-model} such that the dual processes enjoy the local martingale characterization in \eqref{Z.process} depending on two parameter processes $\nu\in\mathcal{D}$ and $\eta\in\mathcal{H}$ defined by \eqref{DHspace}, referring as the constraint parameter process and the market completion parameter process respectively. We fix the constraint parameter process $\nu\in\mathcal{D}$ and first study the unconstrained primal optimization problem \eqref{primal:prob:new} under the modified utility function given in \eqref{2.8}. To this end, we establish the existence of the dual minimizer to the auxiliary dual problem in \eqref{dual.pro} over the market completion parameter processes $\eta\in\mathcal{H}$ and the Lagrange multipliers. In particular, due to the modified utility function, some new technical proofs are developed in the present paper for the case when the risk aversion parameter $\alpha<0$ (see Proposition \ref{existence.2}). The core innovation lies in constructing a fictitious market to circumvent the failure of the Arrow-Pratt measure of relative risk aversion being less than 1. In the second step, we first provide a sufficient condition on the constraint parameter process $\nu\in\mathcal{D}$ such that the unconstrained optimal portfolio process in the modified model coincide with the optimal solution to the constrained terminal wealth optimization problem \eqref{A*.def} in the original market model; see Proposition \ref{prop4.4}. Building upon this condition, we can further consider the dual problem \eqref{dual.pro.nu} associated to the constrained terminal wealth optimization problem \eqref{A*.def} formulated as a minimization problem over both parameter processes $\nu\in\mathcal{D}$ and $\eta\in\mathcal{H}$. By showing the existence of the dual optimizer pair $(\nu^*, \eta^*)$ and choosing the appropriate Lagrange multiplier, we can verify that the optimal wealth process induced by the dual representation satisfies the convex constraints and is therefore the optimal solution to the constrained problem \eqref{A*.def}; see Proposition \ref{primaldualcons}. To tackle the second question regarding an optimal portfolio for the original periodic evaluation problem, we show the existence of the unique fixed point by the contraction mapping in the space of bounded and continuous nonnegative functions; see Proposition \ref{fixed.power}. Based on the characterization of the dual minimizer in the auxiliary problem and the fixed point result, we are able to verify the optimality of the concatenated wealth process period by period (see Theorem \ref{thm3.1}) using some convergence arguments involving the dual processes and the duality relationship.    

The remainder of this paper is organized as follows. Section \ref{sec-formulation} introduces the stochastic factor model and formulates the ratio-type periodic evaluation problem under power utility and convex trading constraints over an infinite horizon. Based on dynamic programming principle, Section \ref{sec:auxiliary} reformulates the periodic evaluation problem as a fixed point problem where the operator becomes an auxiliary constrained terminal wealth optimization problem with a modified utility. To tackle this auxiliary problem, Section \ref{sec:parameter} introduces a parameterized market model and an unconstrained optimization problem, for which the dual problem is thoroughly studied. In Section \ref{sec:duality}, by passing from the unconstrained optimization problem to the constrained optimization problem, the convex duality theorem for the auxiliary constrained terminal wealth optimization problem is developed and the characterization of the optimal portfolio fulfilling the constraints is derived. Based on the convex duality results in the auxiliary problem, Section \ref{sec:fix} completes the proof of the fixed point and establishes the main result on the constructed optimal portfolio in a periodic manner for the original problem over an infinite horizon.

\section{Market Model and Problem Formulation}\label{sec-formulation}
Let $(\Omega,\mathcal{F},\{\mathcal{F}\}_{t\geq0},\mathbb{P})$ be a standard filtered probability space supporting a $n+1$-dimensional Brownian motion $W=(W^{\mathsf{T}}_{1},W_{2})^{\mathsf{T}}$ with $W^{\mathsf{T}}_{1}=(W^1_{1t},...,W^n_{1t})
_{t\geq0}\in\mathbb{R}^n$ and $W_{2}=(W_{2t})_{t\geq0}\in\mathbb{R}$. We consider a market $\mathcal{M}$ with $n$ risky assets, one risk-free asset and some external economic factor. {\color{black}  The price of the risk-free asset is given by $B_t=\exp\big(\int_0^tr(Y_s)ds\big)$,\,$t\geq0$, and the price process of the $i$-th risky asset $S^i=(S^i_t)_{t\geq0}$ is governed by 
\begin{eqnarray}\label{SDE2.1}
\frac{dS^i_t}{S^i_t}=\mu_i(Y_t) dt+\sum_{j=1}^n\sigma_{ij}(Y_t) dW^{{j}}_{1t},\quad t\geq 0,\quad S^i_0:=1,\,i=1,...,n,
\end{eqnarray}
where $r(\cdot)\in C^1({\mathbb{R}},\mathbb{R})$ is the interest rate, $ \mu(\cdot)=(\mu_1(\cdot),...,\mu_n(\cdot))^{\mathsf{T}}\in C^1({\mathbb{R}}, \mathbb{R}^n)$ is the mean return, and $\sigma(\cdot)\in C^1({\mathbb{R}},\mathbb{R}^{n\times n})$
is the volatility matrix. Here, we consider the incomplete market model where the interest rate, drift and volatility depend on an external stochastic factor process $Y_t$ described by
\begin{eqnarray}\label{SDE2.2}
    dY_t = b(Y_t)dt+\beta(Y_t)(q dW_{1t}+\sqrt{1-\|q\|^2} dW_{2t}),\quad t\geq0,
\end{eqnarray}
where $Y_0=y\in\mathbb{R},$ $q=(q_1,...,q_n)\in\mathbb{R}^n$ with $\|q\|\leq1$, $C^1({\mathbb{R}}, \mathbb{R})\ni\beta(\cdot)\neq0$, and, $b(\cdot)\in C^1({\mathbb{R}}, \mathbb{R})$ is the drift of the process $Y_t$. To guarantee that a unique solution to \eqref{SDE2.1} and \eqref{SDE2.2} exists respectively, we assume that $\mu(\cdot),\,\sigma(\cdot),\,b(\cdot),$ and $\beta(\cdot)$ satisfy the global Lipschitz and linear growth conditions
\begin{eqnarray}
\|f(x)-f(y)\|\leq C|x-y|,\quad x,y\in\mathbb{R},\nonumber\\
\|f(y)\|^2\leq C^2(1+y^2),\quad y\in\mathbb{R},\nonumber
\end{eqnarray}
where $C$ is a positive constant and $f$ represents $\mu,\sigma,b$ and $\beta$.}
Furthermore, we assume the strong non-degeneracy condition holds that
$$a^{\mathsf{T}}\sigma(y)\sigma(y)^{\mathsf{T}}a\geq \kappa_0\|a\|^2,\quad (y,a)\in\mathbb{R}\times\mathbb{R}^n,$$ for some $\kappa_0>0.$ 
Let us denote the Sharpe ratio
\begin{eqnarray}\label{xi}
\theta(y):=\sigma(y)^{-1}{(\mu(y)-r(y)\mathbf{1})},\quad y\in\mathbb{R},
\end{eqnarray}
where $\mathbf{1}=(1,...,1)^{\mathsf{T}}$. We assume that $\|\theta(\cdot)\|^2\leq M_0$ for some positive constant $M_0$, a condition that is satisfied, for instance, when both $\|\sigma^{-1}(y)\|$ and $\|\mu(y)-r(y)\mathbf{1}\|$ are bounded, or when $\|\sigma^{-1}(y)\|$ is asymptotically inversely proportional to a linear function of $y$ and $\|\mu(y)-r(y)\mathbf{1}\|$ has linear growth in $y$.

A trading strategy $\pi=(\pi^1_t,...,\pi^n_t)^{\mathsf{T}}_{t\geq0}$ is $\mathcal{F}_t$-progressively measurable with $\pi^i_t$ representing the proportion of the wealth invested in the $i$-th risky asset at time $t$. Under the given trading strategy $\pi$, we denote by $X^{\pi}=(X^{\pi}_t)_{t\geq0}$ the self-financing wealth process with the initial capital $X^{\pi}_0=x>0$ that satisfies the dynamics that 
\begin{eqnarray}
X_t^{\pi}=x+\sum_{i=1}^n\int_0^t{\pi^i_u}X_u^{\pi}\frac{dS^i_u}{S^i_u}+\int_0^tX^{\pi}_u(1-\mathbf{1}^{\mathsf{T}}\pi_u)\frac{dB_u}{B_u},\quad t\in[0,\infty).\nonumber
\end{eqnarray}

In the present paper, we are particularly interested in the portfolio constraint that $\pi$ takes values in a convex set $K$.

The dynamics of the controlled wealth $X^{\pi}$ (denoted by $X$ for short) can be rewritten as
\begin{eqnarray}\label{dX}
dX_t
\hspace{-0.3cm}&=&\hspace{-0.3cm}
[r(Y_t)+\pi_t^{\mathsf{T}}(\mu(Y_t)-r(Y_t)\mathbf{1})]X_tdt+X_t\pi_t^{\mathsf{T}}\sigma(Y_t) dW_{1t},\quad t\geq0.
\end{eqnarray}

Motivated by the recent study of \cite{TZ23}, we investigate an infinite-horizon utility maximization problem for wealth $(X_t)_{t \geq 0}$, where performance evaluation occurs periodically at a sequence of prescribed dates $(T_i)_{i \geq 0}$ with $T_0 := 0$ and $T_i \rightarrow \infty$ as $i \rightarrow \infty$. For simplicity, we assume equally spaced intervals $T_i = i\tau$ for $i \geq 0$ with some $\tau > 0$, implying that the portfolio is evaluated every $\tau$ units of time (e.g., monthly or annually).
In contrast to \cite{TZ23}, where portfolio performance is measured by the difference $X_{T_i} - \gamma X_{T_{i-1}}$ between consecutive epochs, we consider a relative wealth performance criterion based on a periodic ratio-type evaluation. Specifically, we define
\[
P_i := \frac{X_{T_i}}{(e^{{\kappa} \tau} X_{T_{i-1}})^{\gamma}},\quad i \geq 1,
\]
where $\gamma \in [0, 1]$ is a relative performance parameter and $\kappa > 0$ represents the benchmark growth rate used in the periodic evaluation. In applications, $\kappa$ may be chosen as the risk-free rate, a contractual hurdle rate, or an exogenously specified target growth rate.
This formulation captures the economic trade-off between relative and absolute wealth concerns: as $\gamma \rightarrow 1$, the agent prioritizes performance relative to the previous wealth benchmark $X_{T_{i-1}}$; conversely, as $\gamma \rightarrow 0$, the agent focuses primarily on the absolute value of current wealth. 
One important implication of this ratio-type structure is that it inherently imposes a positive wealth constraint, ensuring that the optimal wealth process remains strictly positive at all evaluation epochs over the infinite horizon. By contrast, the difference-type evaluation and $S$-shaped utility in \cite{TZ23} can effectively captures both risk aversion over gains and risk-seeking behavior over losses up to the bankruptcy time. In addition, our ratio-type performance measure induces a ``strong baseline amplification effect". Holding $X_{T_i}$ fixed, $P_i$ varies nonlinearly with the lagged wealth $X_{T_{i-1}}$ via the negative exponent $X_{T_{i-1}}^{-\gamma}$. Specifically,
\[
\frac{\partial P_i}{\partial X_{T_i}} = (e^{\kappa\tau} X_{T_{i-1}})^{-\gamma} \quad \text{and} \quad \frac{\partial P_i}{\partial X_{T_{i-1}}} = -\gamma \frac{P_i}{X_{T_{i-1}}}.
\]
Consequently, marginal incentives with respect to $X_{T_i}$ (and sensitivity to $X_{T_{i-1}}$) become significantly amplified when $X_{T_{i-1}}$ is small. Economically, this implies that in ``catch-up" states (i.e., periods of low $X_{T_{i-1}}$), the contemporaneous performance metric is mechanically inflated, potentially triggering strong state-dependent risk-taking incentives driven by the denominator effect.
For comparison, the difference-type evaluation in \cite{TZ23} depends on the previous-period wealth only in a  linear fashion. Thus, while both criteria exhibit path dependence, the ratio-type specification introduces a qualitatively stronger nonlinear baseline effect.

Mathematically speaking, our objective is to solve the optimal portfolio problem over an infinite horizon by employing a periodic evaluation of the relative performance defined by              
\begin{eqnarray}\label{utilitymax}
\sup_{X\in\mathcal{U}_0(x,y)}\mathbb{E}\left[\sum_{i=1}^{\infty}e^{-\rho T_i}U\left({P_{i}},Y_{T_i}\right)\right],
\end{eqnarray}
where $\rho>0$ is the agent’s subjective discount factor, and $\mathcal{U}_0(x,y)$ is the set of the wealth processes under admissible portfolio defined by 
\begin{eqnarray}\label{admissible_set}
\mathcal{U}_0(x,y)
\hspace{-0.3cm}&:=&\hspace{-0.3cm}
\bigg\{X^{x,y,\pi}:\pi
\text{ is a $(\mathcal{F}_t)_{t\geq 0}$-predictable and locally square-integrabale process}
\nonumber\\
\hspace{-0.3cm}&&\hspace{-1cm}
\text{such that }
\pi\in K,\text{ and }X^{x,y,\pi}_t=x+\sum_{i=1}^n\int_0^t{\pi^i_u}X^{x,y,\pi}_u\frac{dS^i_u}{S^i_u}+\int_0^tX^{x,y,\pi}_u(1-\mathbf{1}^{\mathsf{T}}\pi_u)\frac{dB_u}{B_u}>0
\nonumber\\
\hspace{-0.3cm}&&\hspace{-1cm} \text{for } t\geq 0,
\color{black}\text{ and } 
\sum_{i=1}^{\infty}e^{-\rho T_i}\mathbb{E}\left[\left(U\left(\frac{X^{x,y,\pi}_{T_i}}{\big(e^{{\kappa}\tau}X^{x,y,\pi}_{T_{i-1}}\big)^{\gamma}},Y_{T_i}\right)\right)_{-}\right]<\infty
\bigg\},\quad x\in\mathbb{R}_+,
\end{eqnarray}
with $a_{-}=\max\{-a,0\}$ and $K$ being the closed, convex set, and $Y_0=y\in\mathbb{R}$. We also note that the last condition is of the integrability type, which is needed to ensure the well-posedness of the periodic portfolio evaluation and optimization problem in Section \ref{sec:fix}.

In the present paper, we only focus on  the power utility function $U(x,y)=\frac{1}{\alpha}x^{\alpha}h(y)$ with $\alpha\in(-\infty,0)\cup(0,1)$  with $\mathbb{R}\ni y\mapsto h(y)\in\mathbb{R}_+$ being a continuous function defined on the factor level $y$.  Similar to \cite{ZT01}, it is assumed that $m\leq h(y) \leq 1$, $y\in\mathbb{R}$, for some constant $m\in(0,1)$.

\section{Auxiliary Constrained Terminal Wealth Optimization Problem with Modified Utility}\label{sec:auxiliary}

Let us denote the value function of the periodic evaluation problem by
\begin{eqnarray}\label{problem}
V(x,y)
\hspace{-0.3cm}&:=&\hspace{-0.3cm}
\sup_{X\in\mathcal{U}_0(x,y)}\mathbb{E}\left[\frac{1}{\alpha}\sum_{i=1}^{\infty}e^{-\rho T_i}\left(\frac{X_{T_i}}{\left(e^{{\kappa} T_1}X_{T_{i-1}}\right)^{\gamma}}\right)^{\alpha}h({Y_{T_i}})\right],
\end{eqnarray}
where $\mathcal{U}_0(x,y)$ is given by \eqref{admissible_set}.
By the Markov property of the stochastic factor model, one can easily derive the following dynamic programming principle that
\begin{eqnarray}\label{ddp}
V(x,y)=\sup_{X\in\mathcal{U}_0(x,y)}\mathbb{E}\left[\frac{1}{\alpha}e^{-\rho T_1}\left(\frac{X_{T_1}}{e^{\gamma{\kappa} T_1}x^{\gamma}}\right)^{\alpha}h({Y_{T_1}})+e^{-\rho T_1}V(X_{T_1},Y_{T_1})\right].
\end{eqnarray}

For the well-posedness of the problem, the following standing assumption is imposed throughout the paper.
\begin{Assumption}
\label{ass1}
The interest rate function $r(\cdot)\in C^1(\mathbb{R}, \mathbb{R})$ is bounded from above by  $\overline{r}>0$ and bounded below by $\underline{r}>0$.
The model parameters satisfy $\rho>\zeta(\alpha(1-\gamma)){\vee0}$ with $\zeta$ being defined as $\zeta(x):=\overline{r}x+x M_0/2(1-x)$.
\end{Assumption}

The following result gives the upper and lower bounds of the value function $V$.

\begin{prop}\label{pro:bound:V}
It holds that
\begin{align}
\frac{me^{(\underline{r}\alpha-\rho-\kappa\gamma\alpha)\tau}}{\alpha(1-e^{-(\rho-\underline{r}\alpha(1-\gamma))\tau})}x^{\alpha(1-\gamma)}&\leq V(x,y)\leq \frac{e^{(\zeta(\alpha)-\rho-\kappa\gamma\alpha)\tau}}{\alpha(1-e^{(\zeta(\alpha(1-\gamma))-\rho)\tau})}x^{\alpha(1-\gamma)},\quad \alpha\in(0,1),\nonumber\\
\frac{e^{(\underline{r}\alpha-\rho-\kappa\gamma\alpha)\tau}}{\alpha(1-e^{-(\rho-\underline{r}\alpha(1-\gamma))\tau})}x^{\alpha(1-\gamma)}&\leq V(x,y)\leq \frac{me^{(\zeta(\alpha)-\rho-\kappa\gamma\alpha)\tau}}{\alpha(1-e^{(\zeta(\alpha(1-\gamma))-\rho)\tau})}x^{\alpha(1-\gamma)},\quad \alpha\in(-\infty,0),
\end{align}
where $m\in(0,1)$ is the lower bound of the function $h(\cdot)$.
\end{prop}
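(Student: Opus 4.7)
The plan is to treat the lower and upper bounds with two complementary techniques: exhibit a concrete admissible portfolio for the lower bound, and derive a uniform-in-$\pi$ moment estimate on the ratio returns for the upper bound. In both cases the result reduces to summing a geometric series whose convergence is governed by Assumption \ref{ass1}.

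For the lower bound, I would plug in the admissible strategy $\pi\equiv 0$ (which lies in $K$ under the standard convention $0\in K$). Then $X_t=x\exp(\int_0^t r(Y_s)ds)$, and using the lower bound $\underline{r}$ on the interest rate one gets
$$\frac{X_{T_i}}{(e^{\rho\tau}X_{T_{i-1}})^{\gamma}}\ \geq\ x^{1-\gamma}\,e^{-\gamma\rho\tau}\,e^{\underline r\tau}\,e^{\underline r(1-\gamma)(i-1)\tau}.$$
Raising to the $\alpha$-th power (preserving the inequality if $\alpha>0$ and reversing it if $\alpha<0$), then multiplying by $h(Y_{T_i})/\alpha$ and bounding $h$ by either $m$ (when $\alpha\in(0,1)$) or $1$ (when $\alpha<0$, since $\tfrac{1}{\alpha}(\cdot)^{\alpha}\le 0$), produces exactly the two stated lower bounds after summing the geometric series $\sum_j e^{-(\rho-\underline r\alpha(1-\gamma))j\tau}$, whose convergence is a direct consequence of Assumption \ref{ass1}.

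For the upper bound, the key ingredient is the moment estimate
$$\mathbb{E}\bigl[(X_{T_i}/X_{T_{i-1}})^{\kappa}\,|\,\mathcal{F}_{T_{i-1}}\bigr]\ \le\ e^{\zeta(\kappa)\tau},\qquad \kappa\in(0,1),$$
for every admissible $\pi$. To derive it, set $u_s:=\sigma^{\mathsf T}(Y_s)\pi_s$ and rewrite
$$(X_{T_i}/X_{T_{i-1}})^{\kappa}=Z^{\kappa}_{T_{i-1},T_i}\exp\!\int_{T_{i-1}}^{T_i}\!\Bigl[\kappa r(Y_s)+\kappa u_s^{\mathsf T}\theta(Y_s)-\tfrac{\kappa(1-\kappa)}{2}\|u_s\|^2\Bigr]ds,$$
where $Z^{\kappa}:=\mathcal{E}(\kappa\!\int u^{\mathsf T}dW_1)$ is a non-negative local martingale, hence supermartingale; completing the square in $u$ and invoking $r\le\overline r$, $\|\theta\|^2\le M_0$ bounds the integrand above by $\zeta(\kappa)$. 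Combining this bound (with $\kappa=\alpha$) with the factorization $\bigl(X_{T_i}/X_{T_{i-1}}^{\gamma}\bigr)^{\alpha}=X_{T_{i-1}}^{\alpha(1-\gamma)}(X_{T_i}/X_{T_{i-1}})^{\alpha}$, the tower property, and an iterated application (with $\kappa=\alpha(1-\gamma)\in[0,1)$) to bound $\mathbb E[X_{T_{i-1}}^{\alpha(1-\gamma)}]\le x^{\alpha(1-\gamma)}e^{\zeta(\alpha(1-\gamma))(i-1)\tau}$, and finally using $h\le 1$ in the summand, yields the claimed upper bound after summing the geometric series convergent under Assumption \ref{ass1}.

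The $\alpha<0$ upper bound proceeds analogously, but with the reverse inequality: square-completion now bounds the same integrand \emph{below} by $\zeta(\alpha)$, and combined with $h\ge m$ and $1/\alpha<0$ one obtains the claimed upper bound. The main technical obstacle I foresee is precisely here: to pass from the pointwise lower bound on the exponent to $\mathbb E[(X_{T_i}/X_{T_{i-1}})^{\alpha}|\mathcal F_{T_{i-1}}]\ge e^{\zeta(\alpha)\tau}$ one must upgrade the supermartingale $Z^{\alpha}$ to a true martingale (to justify the implicit change of measure). I would handle this via a standard localization argument, using the boundedness of $\theta$ together with the locally square-integrable requirement in $\mathcal{U}_0(x,y)$, or by an approximation with bounded $\pi^n$ and passing to the limit. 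Once this is in place, all remaining steps are algebraic geometric-series manipulations identical to the $\alpha\in(0,1)$ case.
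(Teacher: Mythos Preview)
Your lower-bound argument coincides with the paper's: both plug in $\pi\equiv0$, bound $h$ by $m$ or $1$ according to the sign of $\alpha$, and sum the resulting geometric series.

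For the upper bound, your route is genuinely different. The paper does \emph{not} derive the moment estimate $\mathbb{E}[(X_{T_i}/X_{T_{i-1}})^{\kappa}\mid\mathcal{F}_{T_{i-1}}]\le e^{\zeta(\kappa)\tau}$ from first principles; instead it relaxes the constraint set (replacing $\mathcal{U}$ by the unconstrained class $\mathcal{A}$), invokes the strong duality theorem of \cite{CH05} for the one-period power-utility problem, and reads off the bound $\sup_{X}\tfrac{1}{\alpha}\mathbb{E}[X_{T_i}^{\alpha}\mid\mathcal{F}_{T_{i-1}}]\le \tfrac{1}{\alpha}e^{\zeta(\alpha)\tau}$ from their estimate (4.18). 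Your completing-the-square plus supermartingale argument is more self-contained and arguably more transparent for $\alpha\in(0,1)$, since only $\mathbb{E}[Z^{\kappa}]\le 1$ is needed. The paper's approach, on the other hand, buys uniformity in the sign of $\alpha$: the duality bound in \cite{CH05} is stated for the value function $\tfrac{1}{\alpha}x^{\alpha}W(\tau,\cdot)^{1-\alpha}$ and holds for all $\alpha\ne 0$, so no separate martingale verification is needed when $\alpha<0$. This is exactly the technical obstacle you flag: to turn your pointwise lower bound on the exponent into $\mathbb{E}[(X_{T_i}/X_{T_{i-1}})^{\alpha}\mid\mathcal{F}_{T_{i-1}}]\ge e^{\zeta(\alpha)\tau}$ you need $\mathbb{E}[Z^{\alpha}]=1$ rather than $\le 1$, and the same issue recurs for the factor $\mathbb{E}[X_{T_{i-1}}^{\alpha(1-\gamma)}]$. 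Your proposed localization fix is workable, but the paper sidesteps the whole question by importing the duality estimate.
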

\begin{proof}
We first assume that $\alpha\in(0,1)$. 
The lower bound of $V(x,y)$ can be derived by noting that $X=(xe^{\int_0^tr(Y_s)ds})_{t\geq 0}$ is an admissible portfolio in $\mathcal{U}_0(x,y)$, and
\begin{eqnarray}
\hspace{-0.3cm}&&\hspace{-0.3cm}
\sup_{X\in\mathcal{U}_{0}(x,y)}\mathbb{E}\left[\sum_{i=1}^{n}e^{-\rho T_i}\frac{1}{\alpha}\bigg(\frac{X_{T_i}}{(e^{{\kappa}\tau}X_{T_{i-1}})^{\gamma}}\bigg)^{\alpha}h(Y_{T_i})\right]
\nonumber\\
\hspace{-0.3cm}&=&\hspace{-0.3cm}
\sup_{X\in\mathcal{U}_{0}(x,y)}\sum_{i=1}^{n}e^{-\rho T_{i}}e^{-{\kappa}\gamma\alpha\tau}\mathbb{E}\left[\mathbb{E}\left[\frac{1}{\alpha}\bigg(\frac{X_{T_i}}{X_{T_{i-1}}}\bigg)^{\alpha}X_{T_{i-1}}^{\alpha(1-\gamma)}{h(Y_{T_{i}})}\bigg|\mathcal{F}_{T_{i-1}}\right]\right]
\nonumber\\
\hspace{-0.3cm}&\geq&\hspace{-0.3cm}
\sum_{i=1}^{n}e^{-\rho T_{i}}e^{-{\kappa}\gamma\alpha\tau}m\frac{1}{\alpha}e^{\underline{r}\alpha\tau}e^{\alpha(1-\gamma)\underline{r}T_{i-1}}x^{\alpha(1-\gamma)},
\end{eqnarray}
where in the last inequality we have used the fact that the lower bound of $h(\cdot)$ is $m$. The desired lower bound in Proposition \ref{pro:bound:V} follows after sending $n\rightarrow\infty$ on both sides.

We next turn to derive the upper bound of $V$. 
For any $X\in\mathcal{U}_0(x,y)$ and $n\geq1$, using the upper bound of $h(\cdot)$, one has
\begin{eqnarray}
\label{3.11.v0}
\hspace{-0.3cm}&&\hspace{-0.3cm}
\mathbb{E}\left[\sum_{i=1}^{n}e^{-\rho T_i}\frac{1}{\alpha}\bigg(\frac{X_{T_i}}{(e^{{\kappa}\tau}X_{T_{i-1}})^{\gamma}}\bigg)^{\alpha}h(Y_{T_i})\right]
\nonumber\\
\hspace{-0.3cm}&=&\hspace{-0.3cm}
\sum_{i=1}^{n}e^{-\rho T_{i}}e^{-{\kappa}\gamma\alpha\tau}\mathbb{E}\left[\mathbb{E}\left[\frac{1}{\alpha}\bigg(\frac{X_{T_i}}{X_{T_{i-1}}}\bigg)^{\alpha}X_{T_{i-1}}^{\alpha(1-\gamma)}{h(Y_{T_{i}})}\bigg|\mathcal{F}_{T_{i-1}}\right]\right]
\nonumber\\
\hspace{-0.3cm}&\leq&\hspace{-0.3cm}
\sum_{i=1}^{n}e^{-\rho T_{i}}e^{-{\kappa}\gamma\alpha\tau}\mathbb{E}\left[\mathbb{E}\left[\frac{1}{\alpha}\bigg(\frac{X_{T_i}}{X_{T_{i-1}}}\bigg)^{\alpha}\bigg|\mathcal{F}_{T_{i-1}}\right]X_{T_{i-1}}^{\alpha(1-\gamma)}\right]
\nonumber\\
\hspace{-0.3cm}&\leq&\hspace{-0.3cm}
\sum_{i=1}^{n}e^{-\rho T_{i}}e^{-{\kappa}\gamma\alpha\tau}
\mathbb{E}\left[\left(\sup_{X\in\mathcal{U}_{T_{i-1}}(1,Y_{T_{i-1}})}\mathbb{E}\left[\frac{1}{\alpha}X_{T_i}^{\alpha}|\mathcal{F}_{T_{i-1}}\right]\right)X_{T_{i-1}}^{\alpha(1-\gamma)}\right],
\end{eqnarray}
where $\mathcal{U}_{T_{i-1}}(x,y)$ is defined by
\begin{eqnarray}\label{U_T}
\mathcal{U}_{T_{i-1}}(x,y)
\hspace{-0.3cm}&:=&\hspace{-0.3cm}
\bigg\{X^{\pi}:\pi
\text{ is $(\mathcal{F}_t)_{t\geq 0}$-progressively measurable and locally square-integrable such that}
\nonumber\\
\hspace{-0.3cm}&&\hspace{-2cm}
\pi\in K,\text{ and }X^{\pi}_t=X^{\pi}_{T_{i-1}}+\sum_{k=1}^n\int_{T_{i-1}}^t{\pi^k_u}X^{\pi}_u\frac{dS^k_u}{S^k_u}+\int_{T_{i-1}}^tX^{\pi}_u(1-\mathbf{1}^{\mathsf{T}}\pi_u)\frac{dB_u}{B_u}>0 \text{ for } t\geq T_{i-1},
\nonumber\\
\hspace{-0.3cm}&&\hspace{-2cm}
\color{black}\text{ and } 
\sum_{k=i}^{\infty}e^{-\rho T_i}\mathbb{E}\left[\left(U\left(\frac{X^{x,y,\pi}_{T_k}}{\big(e^{{\kappa}\tau}X^{x,y,\pi}_{T_{k-1}}\big)^{\gamma}},Y_{T_k}\right)\right)_{-}\right]<\infty\,\text{with }\,X^{\pi}_{T_{i-1}}=x,Y_{T_{i-1}}=y
\bigg\},
\end{eqnarray}
with $(x,y)\in\mathbb{R}_+\times\mathbb{R}$.
Note that
\begin{eqnarray}\label{inv.pro}
\sup_{X\in\mathcal{U}_{T_{i-1}}(1,Y_{T_{i-1}})}\mathbb{E}\left[\frac{1}{\alpha}X_{T_i}^{\alpha}\bigg|\mathcal{F}_{T_{i-1}}\right]
\hspace{-0.3cm}&\leq&\hspace{-0.3cm}
\sup_{X\in\mathcal{A}_{T_{i-1}}(1,Y_{T_{i-1}})}\mathbb{E}\left[\frac{1}{\alpha}X_{T_i}^{\alpha}\bigg|\mathcal{F}_{T_{i-1}}\right],
\end{eqnarray}
where $\mathcal{A}_{T_{i-1}}(1,Y_{T_{i-1}})$ is defined the same as $\mathcal{U}_{T_{i-1}}(1,Y_{T_{i-1}})$ with $K$ replaced by $\mathbb{R}^n$.
This replacement implies that the right-hand side optimization problem degenerates to the unconstrained case considered in \cite{CH05}, where only stochastic factors are present without convex trading restrictions.
Hence, by \eqref{inv.pro} and (4.18) in \cite{CH05}, 
we have 
\begin{eqnarray}
\sup_{X\in\mathcal{U}_{T_{i-1}}(1,Y_{T_{i-1}})}\mathbb{E}\left[\frac{1}{\alpha}X_{T_i}^{\alpha}\bigg|\mathcal{F}_{T_{i-1}}\right]\leq
\sup_{X\in\mathcal{A}_{T_{i-1}}(1,Y_{T_{i-1}})}\mathbb{E}\left[\frac{1}{\alpha}X_{T_i}^{\alpha}\bigg|\mathcal{F}_{T_{i-1}}\right]
\leq 
\frac{1}{\alpha}e^{\zeta(\alpha)\tau}.
\end{eqnarray}
Similarly, it holds that
\begin{eqnarray}
\label{3.13.v0}
\sup_{X\in\mathcal{U}_0(x,y)}\mathbb{E}\left[\frac{1}{\alpha(1-\gamma)}X_{T_{i-1}}^{\alpha(1-\gamma)}\right]\leq \frac{x^{\alpha(1-\gamma)}}{\alpha(1-\gamma)}e^{\zeta(\alpha(1-\gamma))(i-1)\tau}.
\end{eqnarray}
Combining \eqref{3.11.v0}-\eqref{3.13.v0}, we obtain that
\begin{eqnarray}
\hspace{-0.3cm}&&\hspace{-0.3cm}
\mathbb{E}\left[\frac{1}{\alpha}\sum_{i=1}^{n}e^{-\rho T_i}\bigg(\frac{X_{T_i}}{(e^{{\kappa}\tau}X_{T_{i-1}})^{\gamma}}\bigg)^{\alpha}h(Y_{T_i})\right]
\nonumber\\
\hspace{-0.3cm}&\leq&\hspace{-0.3cm}
\sum_{i=1}^{n}e^{-\rho T_{i}}e^{-{\kappa}\gamma\alpha\tau}
\frac{1}{\alpha}\mathbb{E}\left[\left(\sup_{X\in\mathcal{U}_{T_{i-1}}(1,Y_{T_{i-1}})}\mathbb{E}\left[X_{T_i}^{\alpha}|\mathcal{F}_{T_{i-1}}\right]\right)X_{T_{i-1}}^{\alpha(1-\gamma)}\right]
\nonumber\\
\hspace{-0.3cm}&\leq&\hspace{-0.3cm}
\sum_{i=1}^{n}e^{-\rho T_{i}}e^{-{\kappa}\gamma\alpha\tau}e^{\zeta(\alpha)\tau}\sup_{X\in\mathcal{U}_0(x,y)}\mathbb{E}\left[\frac{1}{\alpha}X_{T_{i-1}}^{\alpha(1-\gamma)}\right]
\nonumber\\
\hspace{-0.3cm}&\leq&\hspace{-0.3cm}
\sum_{i=1}^{n}e^{-\rho T_{i}}e^{-{\kappa}\gamma\alpha\tau}\frac{1}{\alpha}e^{\zeta(\alpha)\tau}e^{\zeta(\alpha(1-\gamma))(i-1)\tau}x^{\alpha(1-\gamma)}.
\end{eqnarray}
The desired upper bound in Proposition \ref{pro:bound:V} follows after sending $n\rightarrow\infty$ on both sides.
The proof for the case $\alpha\in(-\infty,0)$ can be easily modified, and it is hence omitted.  
\end{proof}

In view of the scaling property of the utility function $U(x,y)=x^{\alpha}U(1,y)$ and the fact that the value function $V(x,y)$ is bounded with respect to $y$, we heuristically conjecture that our value function takes the form of $V(x,y)={\frac{1}{\alpha}e^{-{\kappa}\tau\gamma\alpha}}A^*(y)x^{\alpha(1-\gamma)}$ for some continuous, bounded, and non-negative function $A^*(\cdot)$. Substituting this form of $V$ into \eqref{ddp} and then dividing both sides by ${\frac{1}{\alpha }e^{-{\kappa}\tau\gamma\alpha}x^{\alpha(1-\gamma)}}$, one obtains
\begin{eqnarray}\label{A*.def}
A^*(y)
\hspace{-0.3cm}&=&\hspace{-0.3cm}
\alpha\sup_{X\in\mathcal{U}_0(x,y)}\mathbb{E}\left[e^{-\rho T_1}\frac{1}{\alpha}\left(\frac{X_{T_1}}{x}\right)^{\alpha}h({Y_{\tau}})+e^{-\rho T_1}\frac{1}{\alpha}A^*(Y_{T_1})\left(\frac{X_{T_1}}{x}\right)^{\alpha(1-\gamma)}\right]
\nonumber\\
\hspace{-0.3cm}&=&\hspace{-0.3cm}
\alpha\sup_{X\in\mathcal{U}_0(1,y)}\mathbb{E}\left[e^{-\rho\tau}\frac{1}{\alpha}X_{\tau}^{\alpha}h({Y_{\tau}})+e^{-\rho \tau}\frac{1}{\alpha}A^*(Y_{\tau})X_{\tau}^{\alpha(1-\gamma)}\right].
\end{eqnarray}
Hence, under power utility, the characterization of $V$ now simplifies to the characterization of the unknown, non-negative, continuous, and bounded function $A^*(\cdot)$. More importantly, the unknown function $A^*(\cdot)$ will be proven as the the fixed point of a contraction operator defined on the function space ${C}_b^{+}(\mathbb{R})$ consisting of all continuous, bounded, and non-negative functions on $\mathbb{R}$. To this end, we will first establish the existence of the optimizer to the auxiliary one-period terminal wealth optimization problem \eqref{A*.def} for a fixed $A^*$, and then show the existence of the unique fixed point $A^*$ to the operator.

Let us then consider the modified utility function
$\mathbb{R}_+\times\mathbb{R}\ni (x,y)\mapsto h_A(x,y)\in \mathbb{R}$ given by
\begin{eqnarray}\label{2.8}
h_A(x,y)
:=\frac{1}{\alpha}x^{\alpha}h(y)+{\frac{1}{\alpha}}A(y)x^{\alpha(1-\gamma)},\quad (x,y)\in \mathbb{R}_+\times\mathbb{R},
\end{eqnarray}
where $A(\cdot)\in C^+_b(\mathbb{R})$ is viewed as a parameter of the function $h_A$.  As a preparation for the main result, we first derive some preliminary properties of the function $h_A$, which play pivotal roles in later proofs.

\begin{lem}
\label{lem2.1}
The function $h_A(x,y)$ defined by \eqref{2.8} is strictly increasing and strictly concave in $x\in\mathbb{R}_+$, and it holds that $\frac{\partial}{\partial x}h_A(0+,y)=\infty$ and $\frac{\partial}{\partial x}h_A(\infty,y)=0$ for any $y\in\mathbb{R}$. 
In addition, there exist some constants $\vartheta\in(0,1)$ and $\varrho\in(1,\infty)$ such that 
\begin{eqnarray}\label{h_a'>h_a'}
\vartheta \frac{\partial}{\partial x}h_A(x,y)\geq \frac{\partial}{\partial x}h_A(\varrho x,y),\quad (x,y)\in\mathbb{R}_+\times\mathbb{R}.
\end{eqnarray} 
Furthermore, when $\alpha\in(0,1)$, there exist some constants $\kappa_1\in(0,\infty)$ and $\rho_1\in(0,1)$ such that
\begin{eqnarray}\label{0<h}
0< h_A(x,y)\leq \kappa_1(1+x^{\rho_1}),\quad (x,y)\in\mathbb{R}_+\times\mathbb{R};
\end{eqnarray}
and when $\alpha\in(-\infty,0)$, there exist some constants $\kappa_2\in(-\infty,0)$ and $\rho_2\in(-\infty,0)$ such that
\begin{eqnarray}\label{0>h}
0>h_A(x,y)\geq \kappa_2(1+x^{\rho_2}),\quad (x,y)\in\mathbb{R}_+\times\mathbb{R}.
\end{eqnarray}
\end{lem}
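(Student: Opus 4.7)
\medskip
\noindent\textbf{Proof proposal.} The plan is to reduce every claim in Lemma~\ref{lem2.1} to elementary manipulations of the two power functions $x^{\alpha}$ and $x^{\alpha(1-\gamma)}$. The key structural observation is that, because $\alpha<1$ and $1-\gamma\in[0,1)$, both $\alpha-1$ and $\alpha(1-\gamma)-1$ are strictly negative in either regime $\alpha\in(0,1)$ or $\alpha<0$, while $\alpha$ and $\alpha(1-\gamma)$ share the same sign (with $\alpha(1-\gamma)=0$ when $\gamma=1$). Combined with $h(y)\in[m,1]$ and $A(y)\in[0,\|A\|_{\infty}]$, this handles all the signs encountered below.

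First I would differentiate directly to get
\[
\frac{\partial h_A}{\partial x}(x,y)=x^{\alpha-1}h(y)+(1-\gamma)A(y)x^{\alpha(1-\gamma)-1},
\]
\[
\frac{\partial^{2}h_A}{\partial x^{2}}(x,y)=(\alpha-1)x^{\alpha-2}h(y)+(1-\gamma)\bigl(\alpha(1-\gamma)-1\bigr)A(y)x^{\alpha(1-\gamma)-2}.
\]
Monotonicity ($\partial_xh_A>0$) and strict concavity ($\partial_x^2h_A<0$) then follow termwise from $h\geq m>0$, $A\geq 0$, $1-\gamma\geq 0$, $\alpha-1<0$ and $\alpha(1-\gamma)-1<0$. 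The Inada-type limits $\partial_xh_A(0+,y)=+\infty$ and $\partial_xh_A(+\infty,y)=0$ are immediate, since $x^{\alpha-1}\to\infty$ as $x\to 0+$ and both $x^{\alpha-1}$ and $x^{\alpha(1-\gamma)-1}$ tend to $0$ as $x\to\infty$. For the scaling inequality \eqref{h_a'>h_a'}, I would rewrite $\partial_xh_A(\varrho x,y)=\varrho^{\alpha-1}x^{\alpha-1}h(y)+\varrho^{\alpha(1-\gamma)-1}(1-\gamma)A(y)x^{\alpha(1-\gamma)-1}$, fix any $\varrho>1$, and set $\vartheta:=\max\{\varrho^{\alpha-1},\varrho^{\alpha(1-\gamma)-1}\}$, which lies in $(0,1)$ because both exponents are strictly negative; the claim \eqref{h_a'>h_a'} then follows termwise from the nonnegativity of $h$ and $A$.

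For the growth bounds \eqref{0<h}--\eqref{0>h}, the main tool is the elementary inequality $x^{\alpha(1-\gamma)}\leq 1+x^{\alpha}$ for every $x>0$, which I would verify by splitting into $(0,1]$ and $[1,\infty)$ and comparing the exponent $\alpha(1-\gamma)$ with $0$ and with $\alpha$. With $h(y)\leq 1$ and $A(y)\leq\|A\|_{\infty}$, this yields for $\alpha\in(0,1)$ that $h_A(x,y)\leq\tfrac{1+\|A\|_{\infty}}{\alpha}(1+x^{\alpha})$, which is \eqref{0<h} with $\rho_1=\alpha\in(0,1)$, while the strict positivity $h_A>0$ is immediate from $h\geq m>0$ and $A\geq 0$. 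In the regime $\alpha<0$ the same chain of estimates, with every inequality reversed by the negative factor $1/\alpha$, delivers $h_A(x,y)\geq\tfrac{1+\|A\|_{\infty}}{\alpha}(1+x^{\alpha})$, which is \eqref{0>h} with $\rho_2=\alpha<0$, and strict negativity $h_A<0$ comes from $\tfrac{1}{\alpha}h(y)x^{\alpha}<0$ together with the non-positivity of the remaining term. No single step is deep; the only place that requires care is the direction of the inequalities in the $\alpha<0$ case, where multiplying by $1/\alpha<0$ flips every bound, so I would keep track of upper and lower estimates for each term separately throughout.
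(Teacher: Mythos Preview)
Your proposal is correct and follows essentially the same route as the paper's proof: direct differentiation for monotonicity, concavity, and the Inada limits; the choice $\vartheta=\max\{\varrho^{\alpha-1},\varrho^{\alpha(1-\gamma)-1}\}$ for the scaling inequality; and $\rho_1=\rho_2=\alpha$ with a constant proportional to $\max\{1,\|A\|_\infty\}/\alpha$ for the growth bounds. The paper is slightly terser about why $x^{\alpha(1-\gamma)}\leq 1+x^{\alpha}$, and uses $\kappa_i=\tfrac{2}{\alpha}\max\{1,\sup_y A(y)\}$ instead of your $\tfrac{1+\|A\|_\infty}{\alpha}$, but these are cosmetic differences only.
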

\begin{proof}
The results follow from elementary calculus. Recall that the function $h(\cdot)\in[m,1]$ with $m\in(0,1)$ and $A(\cdot)\in {C}_b^{+}(\mathbb{R})$. Differentiating twice the both sides of \eqref{2.8} gives
\begin{eqnarray}
\label{2.28.v0}
\frac{\partial}{\partial x}h_A(x,y)= x^{\alpha-1}{h(y)}+A(y)(1-\gamma)x^{\alpha(1-\gamma)-1},\quad (x,y)\in\mathbb{R}_+\times\mathbb{R},
\end{eqnarray}
and
\begin{eqnarray}
\frac{\partial^2}{\partial x^2}h_A(x,y)=(\alpha-1) x^{\alpha-2}{h(y)}+A(y)(1-\gamma)(\alpha-1-\alpha\gamma)x^{\alpha(1-\gamma)-2},\quad (x,y)\in\mathbb{R}_+\times\mathbb{R}.
\end{eqnarray}
Due to the facts of $\gamma\in[0,1]$ and {$\alpha\in(-\infty,0)\cup(0,1)$}, we have $\frac{\partial}{\partial x}h_A(x,y)>0$ and $\frac{\partial^2}{\partial x^2}h_A(x,y)<0$ for any $(x,y)\in\mathbb{R}_+\times\mathbb{R}$, implying that the function $h_A(x,y)$ is strictly concave and strictly increasing  with respect to $x$ on $\mathbb{R}_+$. Furthermore, by \eqref{2.28.v0}, one can easily get $\frac{\partial}{\partial x}h_A(0+,y)=\infty$ and $\frac{\partial}{\partial x}h_A(\infty,y)=0$ for any $y\in\mathbb{R}$. To prove the second claim, for any constant $\varrho\in(1,\infty)$, one can take $\vartheta=\varrho^{\alpha-1}\vee \varrho^{\alpha(1-\gamma)-1}\in(0,1)$. Then 
\begin{align}
    \vartheta \frac{\partial}{\partial x}h_A(x,y)&=\vartheta x^{\alpha-1}h(y)+A(y)(1-\gamma)\vartheta x^{\alpha(1-\gamma)-1}
    \nonumber\\
    &\geq (\varrho x)^{\alpha-1}h(y)+A(y)(1-\gamma)(\varrho x)^{\alpha(1-\gamma)-1}
    \nonumber\\
    &= \frac{\partial}{\partial x}h_A(\varrho x,y),\quad (x,y)\in\mathbb{R}_+\times\mathbb{R}.\nonumber
\end{align}
Finally, when $\alpha\in(0,1)$, taking $\kappa_1=\frac{2}{\alpha}\max\{1,\sup_{y\in\mathbb{R}}A(y)\}\in(0,
\infty)$ and $\rho_1=\alpha\in(0,1)$ yields \eqref{0<h}; when $\alpha\in(-\infty,0)$, taking $\kappa_2=\frac{2}{\alpha}\max\{1,\sup_{y\in\mathbb{R}}A(y)\}\in(-\infty,0)$ and $\rho_2=\alpha\in(-\infty,0)$ yields \eqref{0>h}.
The proof is then complete.
\end{proof}

\section{Auxiliary Parameterized Model and
Unconstrained Problem}\label{sec:parameter}

In the presence of both stochastic factor and portfolio constraints, we combine some techniques in \cite{CH05} and \cite{CK92} to employ the convex duality approach and formulate a fictitious market model with an auxiliary unconstrained optimization problem and the associated dual problem. 
Let us first outline the plan of this section: Proposition \ref{prop2.1} establishes the primal-dual equivalence for the auxiliary unconstrained problem \eqref{primal:prob:new}, equivalently \eqref{problem3}, in the fictitious market with a fixed constraint parameter \(\nu\), provided that the dual optimizer of \eqref{dual.pro} exists and satisfies the budget identity. Propositions \ref{existence.1} and \ref{existence.2} then verify the required existence of the dual optimizer for \eqref{dual.pro} in the two cases \(\alpha\in(0,1)\) and \(\alpha<0\), respectively. Finally, Lemma \ref{lemma4.1} translates the dual optimizer of \eqref{dual.pro} into a financed wealth process in the fictitious market. Together, these results solve the one-period auxiliary problem \eqref{A*.def} after \(\nu\) is fixed, preparing for the optimization over \(\nu\) in Section \ref{sec:duality}.

We now consider a modified market model using the original market model together with two parameter processes. Let us first introduce some notations. Denote by $\mathcal{H}_0$ the set of progressively measurable processes $(\nu_t)_{t\geq0}$ with values in $\mathbb{R}^n$ such that $\mathbb{E}[\int_0^{\tau}\|\nu_s\|^2ds]<\infty$. 
We introduce the classes of processes
\begin{align}
    \mathcal{D}&:=\left\{\nu\in\mathcal{H}_0:\,\mathbb{E}\left[\int_0^{\tau}\delta(\nu_s)ds\right]<\infty\text{ and }
    \mathbb{E}\left[\exp\left(\frac{1}{2}\int_0^t\|\theta^{\nu}(Y_s)\|^2ds\right)\right]<\infty\text{ for all }t\in[0,\tau]
    \right\},\nonumber\\
    \mathcal{H}&:=\left\{\eta\in\mathcal{H}_0:
    \mathbb{E}\left[\exp\left(\frac{1}{2}\int_0^t\|\eta_s\|^2ds\right)\right]<\infty\text{ for all }t\in[0,\tau]
    \right\},\label{DHspace}
\end{align}
where $$\delta(x)\equiv \delta(x|K):=\sup_{\pi\in K}(-\pi^{\mathsf{T}}x),\quad x\in \mathbb{R}^{n},$$
is the support function of the nonempty, closed, convex set $-K$ in $\mathbb{R}^n$, and, 
\begin{eqnarray}
\theta^{\nu}(Y_t):={\sigma^{-1}(Y_t)}(\mu(Y_t)-r(Y_t)\mathbf{1}+\nu_t)=\theta(Y_t)+\sigma^{-1}(Y_t)\nu_t,\nonumber
\end{eqnarray}
with $\nu=(\nu_t)_{t\in[0,\tau]}\in\mathcal{D}$.
Denote the effective domain of $\delta$ by $$\Tilde{K}:=\{x\in\mathbb{R}^n;\delta(x|K)<\infty\},$$
which is a convex cone (called the barrier cone of $-K$). Furthermore, we follow \cite{CK92} to assume that the function $\delta(\cdot|K)$ is continuous on $\Tilde{K}$ and bounded below on $\mathbb{R}^n$ that $\delta(x|K)\geq \delta_0$ for all $x\in\mathbb{R}^n$ and some $\delta_0\in\mathbb{R}$. In addition, the following subadditivity property holds
\begin{eqnarray}\label{subad.property}
\delta(x+y)\leq \delta(x)+\delta(y),\quad x,y\in\mathbb{R}^n.
\end{eqnarray}

For any given $\nu\in\mathcal{D}$, we consider a new financial market $\mathcal{M}_{\nu}$ with one bond and $n$ stocks
\begin{align}
dB^{\nu}_t&=B^{\nu}_t(r(Y_t)+\delta(\nu_t))dt, \nonumber\\
dS^{i,\nu}_t&=S^{i,\nu}_t\left[(\mu_i(Y_t)+\nu^i_t+\delta(\nu_t))dt+\sum_{j=1}^{n}\sigma_{ij}(Y_t)dW^j_{1t}\right].\label{new-fic-model}
\end{align}
Furthermore, for given $\nu\in\mathcal{D},\eta\in\mathcal{H}$ we denote
\begin{align}
Z_t^{\nu,\eta}&:=\exp\left(-\int_0^t[\theta^{\nu}(Y_s)^{\mathsf{T}}dW_{1s}-\eta_sdW_{2s}]-\frac{1}{2}\int_0^t[\|\theta^{\nu}(Y_s)\|^2+\|\eta_s\|^2]ds\right),\label{Z.process}\nonumber\\
W_{1t}^{\nu}&:=W_{1t}+\int_0^{t}\theta^{\nu}(Y_s)ds\quad\text{and}\quad W_{2t}^{\eta}:=W_{2t}-\int_0^{t}\eta_sds.
\end{align}

We can interpret the parameter process $\nu \in \mathcal{D}$ as the constraint parameter that transforms the original constrained market into a fictitious unconstrained market. Specifically, when $\nu$ satisfies some conditions, the unconstrained optimal solution in the fictitious market model \eqref{new-fic-model} coincides with the optimal solution in the original market model that satisfies the convex trading constraints $K$. 
We also interpret the parameter process $\eta\in\mathcal{H}$ as the market completion parameter because the dual problem can be formulated as a minimization problem over $\eta\in\mathcal{H}$, and the dual optimizer will provide the optimal wealth for the auxiliary constrained terminal wealth optimization problem in stochastic factor models via the duality relationship. In addition, the dynamics of the two processes $(Y_t)_{t\geq 0}$ and $(Z_{t}^{\nu,\eta})_{t\geq 0}$ can be rewritten as
\begin{align}
    dY_t&=\left(b(Y_t)-\beta(Y_t)(q\theta^{\nu}(Y_t)-\sqrt{1-\|q\|^2} \eta_t)\right)dt + \beta(Y_t)(q dW_{1t}^{\nu}+\sqrt{1-\|q\|^2}dW_{2t}^{\eta}),\quad t\geq 0,\nonumber\\
dZ_t^{\nu,\eta}&=Z_t^{\nu,\eta}\left(\left(\|\theta^{\nu}(Y_t)\|^2+\|\eta_{t}\|^2\right)dt-\theta^{\nu}(Y_t)^{\mathsf{T}}dW_{1t}^{\nu}+\eta_tdW_{2t}^{\eta}\right),\quad t\geq 0.
\label{z.process}
\end{align}
The wealth process $X^{\nu}\equiv X^{x,y,\pi,\nu}$, under a given portfolio $\pi$ in $\mathcal{M}_{\nu}$, satisfies
\begin{eqnarray}\label{eq:Xnu}
dX^{\nu}_t 
\hspace{-0.3cm}&=&\hspace{-0.3cm} 
\left[r(Y_t)+\delta(\nu_t)+\pi_t^{\mathsf{T}}(\mu(Y_t)+\nu_t-r(Y_t)\mathbf{1})\right]X^{\nu}_tdt+X^{\nu}_t\pi^{\mathsf{T}}_t\sigma(Y_t)dW_{1t}.
\end{eqnarray}
One can also verify that 
\begin{eqnarray}\label{dX/B}
d\left[\frac{X^{\nu}_t}{B^{\nu}_t}\right]=\frac{X^{\nu}_t}{B^{\nu}_t}\pi^{\mathsf{T}}_t\sigma(Y_t)dW^{\nu}_{1t}, \quad t\geq 0.
\end{eqnarray}
Additionally, an application of It\^o's formula to the product of processes $Z^{\nu,\eta}$ and $X^{\nu}/B^{\nu}$ yields that
\begin{eqnarray}\label{supermaringale}
\frac{X^{\nu}_tZ_t^{\nu,\eta}}{B^{\nu}_t}=x+\int_0^t\frac{X^{\nu}_sZ_s^{\nu,\eta}}{B^{\nu}_s}\left[(\sigma(Y_s)^{\mathsf{T}}\pi_s-\theta^{\nu}(Y_s))^{\mathsf{T}}dW_{1s}+\eta_sdW_{2s}\right], \quad t\geq 0.
\end{eqnarray}
If $(X^{\nu}_{t})_{t\geq 0}$ is a wealth process under an admissible portfolio control, by \eqref{supermaringale}, the process $X^{\nu}Z^{\nu,\eta}/B^{\nu}$ is a non-negative $\mathbb{P}$-local martingale, and hence is a $\mathbb{P}$-supermartingale.

{\color{black}
We consider the unconstrained optimization problem in $\mathcal{M}_{\nu}$ given by
\begin{eqnarray}\label{primal:prob:new}
V_{\nu}(y;A):=\sup_{X^{{\nu}}\in \mathcal{U}^{\nu}_0(1,y)}\mathbb{E}[h_A(X^{{\nu}}_{\tau},Y_{\tau})],
\end{eqnarray}
where 
\begin{eqnarray}
\mathcal{U}^{\nu}_0(x,y)
\hspace{-0.3cm}&:=&\hspace{-0.3cm}
\bigg\{X^{\nu}:\pi
\text{ is $(\mathcal{F}_t)_{t\geq 0}$-predictable and locally square-integrable such that}
\nonumber\\
\hspace{-0.3cm}&&\hspace{-0.3cm}
X^{\nu}_t=x+\sum_{i=1}^n\int_0^t{\pi^i_u}X^{\nu}_u\frac{dS^{i,\nu}_u}{S^{i,\nu}_u}+\int_0^tX^{\nu}_u(1-\mathbf{1}^{\mathsf{T}}\pi_u)\frac{dB^{\nu}_u}{B^{\nu}_u}>0 \text{ for } t\geq 0,\text{ and}
\nonumber\\
\hspace{-0.3cm}&&\hspace{-0.3cm}
\left.\sum_{i=1}^{\infty}e^{-\rho T_i}\mathbb{E}\left[\left(U\left(\frac{X^{\nu}_{T_i}}{\big(e^{\rho\tau}X^{\nu}_{T_{i-1}}\big)^{\gamma}},Y_{T_i}\right)\right)_{-}\right]<\infty \text{ with }Y_0=y
\right\},\, (x,y)\in\mathbb{R}_+\times\mathbb{R}.
\end{eqnarray}

Using similar methods as in the proof of Lemma 2.2 of \cite{CH05}, we can now state the problem \eqref{primal:prob:new} equivalently as
\begin{eqnarray}\label{problem3}
\sup_{X^{\nu}\in\mathcal{U}_0^{\nu}(1,y)}\mathbb{E}[h_A(X^{\nu}_{\tau},Y_{\tau})]
\hspace{-0.3cm}&=&\hspace{-0.3cm}
\sup_{X^{{\nu}}\in\Tilde{\mathcal{U}}^{\nu}_{0,\tau}(1,y)}\mathbb{E}\left[h_A(X^{{\nu}},Y_{\tau})\right],
\end{eqnarray}
where
\begin{eqnarray}\label{tilde.U.power}
\Tilde{\mathcal{U}}^{\nu}_{s,t}(X^{\nu}_s,Y_s):=\left\{X\in\mathcal{F}^+_{t}:\sup_{\eta\in\mathcal{H}}\mathbb{E}\left[\frac{Z^{\nu,\eta}_t/B^{\nu}_{t}}{Z^{\nu,\eta}_s/{B^{\nu}_s}}X\Big|\mathcal{F}_s\right]\leq X^{\nu}_s\right\},\quad 0\leq s\leq t<\infty,
\end{eqnarray}
with $(Z^{\nu,\eta}_t)_{t\geq 0}$ defined by \eqref{z.process}.
}

We next formulate the dual problem associated with the unconstrained optimization problem \eqref{problem3}. For a given function $f$, let us denote its Legendre-Fenchel transform by
\begin{eqnarray}\label{Phi*}
\Phi_f(y):=\sup_{x\geq0}(f(x)-yx),\quad y\in\mathbb{R}_+.\nonumber
\end{eqnarray}
Provided that $f$ is continuous and concave with $f^{\prime}(\infty)=0$, the maximizer attaining the supremum always exists (although not necessarily unique), which is denoted by 
\begin{eqnarray}\label{def.x*}
x_f^*(y):=\arg\max_{x\geq0}(f(x)-yx).
\end{eqnarray} It follows that $\Phi_f(y)=f(x_f^*(y))-yx_f^*(y)$.

By Lemma \ref{lem2.1}, for fixed $y\in\mathbb{R}$, one can define the inverse function of $\frac{\partial}{\partial x}h_A(x,y)$ by $\mathbb{R}_+\ni x\mapsto I(x,y)\in\mathbb{R}_+$, which is a strictly decreasing function.
Let 
$\Phi_{h_A}(u,y)=\sup_{x\geq0}\{h_A(x,y)-xu\}$ be the Legendre-Fenchel transform of the concave function $h_A(x,y)$. It is well known that
\begin{eqnarray}\label{phi(y)=h}
h_A(x^*_{h_A}(u,y),y)-x^*_{h_A}(u,y)u=h_A(I(u,y),y)-I(u,y)u,
\quad u\in\mathbb{R}_+,
\end{eqnarray}
as well as when $\alpha\in(0,1)\, (\alpha\in(-\infty,0),\,\text{\text{resp.}})$
\begin{eqnarray}
\label{phi(0)}
\Phi_{h_A}(0,y)
\hspace{-0.3cm}&:=&\hspace{-0.3cm}
\lim_{u\rightarrow0+}\Phi_{h_A}(u,y)=h_A(\infty,y)=\infty\, (0,\,\text{\text{resp.}}),\\
\label{phi(infty)}
\Phi_{h_A}(\infty,y)
\hspace{-0.3cm}&:=&\hspace{-0.3cm}
\lim_{u\rightarrow\infty}\Phi_{h_A}(u,y)=h_A(0,y)=0\, (-\infty,\,\text{\text{resp.}}),
\end{eqnarray}
and 
\begin{eqnarray}\label{Phi'}
\frac{\partial}{\partial u}\Phi_{h_A}(u,y)=-x^*_{h_A}(u,y),\quad \frac{\partial^2}{\partial u^2}\Phi_{h_A}(u,y)=-\frac{\partial}{\partial u}x_{h_A}^{*}(u,y),\quad (u,y)\in\mathbb{R}_+\times\mathbb{R}.
\end{eqnarray}
In particular, the function $\Phi_{h_A}(\cdot,y)$ is a strictly decreasing, strictly convex and twice differentiable function with respect to the first argument.

{\color{black}For a fixed $y\in\mathbb{R}$, the associated dual problem to the unconstrained problem on the right hand side of \eqref{problem3} is defined as
\begin{align}
\label{dual.pro}
\Tilde{V}(\nu;1,y)
&:=
\inf_{\eta\in\mathcal{H},\lambda>0}\left\{\sup_{X\in\mathcal{F}_{\tau}^+}\mathbb{E}\left[\frac{1}{\alpha}X^{\alpha}{h(Y_{\tau})}+\frac{1}{\alpha}A(Y_{\tau})X^{\alpha(1-\gamma)}-\lambda\frac{XZ_{\tau}^{\nu,\eta}}{B^{\nu}_{\tau}}\right]+\lambda\right\},
\nonumber\\
&=\inf_{\eta\in\mathcal{H},\lambda>0}\mathbb{E}\left[\Phi_{h_A}\left(\lambda\frac{Z^{\nu,\eta}_{\tau}}{B^{\nu}_{\tau}},Y_{\tau}\right)+\lambda\right]
\nonumber\\
&=:\inf_{\eta\in\mathcal{H},\lambda>0}L_{\nu}(\eta,\lambda),
\end{align}
where $\mathcal{F}_{\tau}^{+}$ is the set of non-negative $\mathcal{F}_{\tau}$-measurable random variables.

One can easily verify the weak duality between the unconstrained optimization problem \eqref{problem3} and the dual problem \eqref{dual.pro} that
\begin{eqnarray}\label{relationship}
\inf_{\eta\in\mathcal{H},\lambda>0}L_{\nu}(\eta,\lambda)
\hspace{-0.3cm}&\geq&\hspace{-0.3cm}
\sup_{X^{{\nu}}\in\Tilde{\mathcal{U}}^{\nu}_{0,\tau}(1,y)}\mathbb{E}\left[\frac{1}{\alpha}(X^{{\nu}})^{\alpha}h(Y_{\tau})+\frac{1}{\alpha}A(Y_{\tau})(X^{{\nu}})^{\alpha(1-\gamma)}\right].
\end{eqnarray}
We say that there is no duality gap when the equality holds.

In the next result, provided the existence of an optimal solution to the dual problem \eqref{dual.pro}, we can establish the duality relationship between the optimal solutions to the dual problem \eqref{dual.pro} and the  unconstrained optimization problem \eqref{problem3}. The proof of Proposition \ref{prop2.1} is given in Appendix \ref{app:proof-prop21}.

\begin{prop}\label{prop2.1}
Fix $\nu\in\mathcal{D}$ and $\alpha\in(-\infty,0)\cup(0,1)$.
Let $(\eta^{*},\lambda^{*})\in \mathcal{H}\times\mathbb{R}_+$ and
$X^{\nu,*}:=x_{h_A}^*(\lambda^{*}\frac{Z^{\nu,\eta^*}_{\tau}}{B^{\nu}_{\tau}},Y_{\tau})$.
If
\begin{eqnarray}\label{E=1}
X^{\nu,*}\in\Tilde{\mathcal{U}}^{\nu}_{0,\tau}(1,y)\quad\text{and}\quad\mathbb{E}\left[X^{\nu,*}\frac{Z^{\nu,\eta^*}_{\tau}}{B^{\nu}_{\tau}}\right]=1,
\end{eqnarray}
then, $X^{\nu,*}$ is the optimal solution to the primal problem \eqref{problem3}; and, $(\eta^{*},\lambda^*)$ is the optimal solution to the dual problem \eqref{dual.pro}. In particular, there is no duality gap.
Conversely, if $(\eta^*,\lambda^*)$ is the optimal solution to the dual problem \eqref{dual.pro}, then \eqref{E=1} holds  and $X^{\nu,*}$ is the optimal solution to the unconstrained problem \eqref{problem3}.
\end{prop}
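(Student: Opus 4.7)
The plan is to handle the forward direction by a direct calculation using the Fenchel equality, and then derive the converse direction from first-order optimality conditions in the dual variables $(\lambda,\eta)$.

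\emph{Forward direction.} Assume $(\eta^{*},\lambda^{*})\in\mathcal{H}\times\mathbb{R}_{+}$ and $X^{\nu,*}:=x_{h_{A}}^{*}(\lambda^{*}Z_{\tau}^{\nu,\eta^{*}}/B_{\tau}^{\nu},Y_{\tau})$ satisfy \eqref{E=1}. By the Fenchel equality \eqref{phi(y)=h} applied pointwise with $u=\lambda^{*}Z_{\tau}^{\nu,\eta^{*}}/B_{\tau}^{\nu}$ and $y=Y_{\tau}$, one has
\begin{eqnarray*}
h_{A}(X^{\nu,*},Y_{\tau})=\Phi_{h_{A}}\!\Big(\lambda^{*}\tfrac{Z_{\tau}^{\nu,\eta^{*}}}{B_{\tau}^{\nu}},Y_{\tau}\Big)+\lambda^{*}X^{\nu,*}\tfrac{Z_{\tau}^{\nu,\eta^{*}}}{B_{\tau}^{\nu}}.
\end{eqnarray*}
Taking expectations and using the second equality in \eqref{E=1} gives $\mathbb{E}[h_{A}(X^{\nu,*},Y_{\tau})]=L_{\nu}(\eta^{*},\lambda^{*})$. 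Combined with the weak duality \eqref{relationship} and the admissibility $X^{\nu,*}\in\tilde{\mathcal{U}}_{0,\tau}^{\nu}(1,y)$, this forces $\mathbb{E}[h_{A}(X^{\nu,*},Y_{\tau})]=\sup_{X\in\tilde{\mathcal{U}}_{0,\tau}^{\nu}(1,y)}\mathbb{E}[h_{A}(X,Y_{\tau})]=\inf_{\eta,\lambda}L_{\nu}(\eta,\lambda)$, yielding both the primal optimality of $X^{\nu,*}$, the dual optimality of $(\eta^{*},\lambda^{*})$, and no duality gap.

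\emph{Converse direction.} Suppose $(\eta^{*},\lambda^{*})$ is a dual minimizer. First I would perturb in $\lambda$ only: for $\lambda>0$ the map $\lambda\mapsto L_{\nu}(\eta^{*},\lambda)=\mathbb{E}[\Phi_{h_{A}}(\lambda Z_{\tau}^{\nu,\eta^{*}}/B_{\tau}^{\nu},Y_{\tau})]+\lambda$ is convex and smooth on $(0,\infty)$. Using $\tfrac{\partial}{\partial u}\Phi_{h_{A}}(u,y)=-x_{h_{A}}^{*}(u,y)$ from \eqref{Phi'} and dominated convergence (justified by the growth bounds \eqref{0<h}--\eqref{0>h} of Lemma \ref{lem2.1}), the first-order condition at $\lambda^{*}$ reads
\begin{eqnarray*}
0=\frac{d}{d\lambda}L_{\nu}(\eta^{*},\lambda)\bigg|_{\lambda=\lambda^{*}}=-\mathbb{E}\!\left[X^{\nu,*}\tfrac{Z_{\tau}^{\nu,\eta^{*}}}{B_{\tau}^{\nu}}\right]+1,
\end{eqnarray*}
which gives the second identity of \eqref{E=1}. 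Next I would perturb in $\eta$: for arbitrary $\eta\in\mathcal{H}$ and $\varepsilon\in(0,1)$, consider $\eta_{\varepsilon}:=(1-\varepsilon)\eta^{*}+\varepsilon\eta$ (which lies in $\mathcal{H}$ by standard stability arguments for exponential martingales under bounded perturbations; if needed, truncate by stopping times). Convexity of $\eta\mapsto L_{\nu}(\eta,\lambda^{*})$, together with optimality of $\eta^{*}$, gives $\tfrac{d}{d\varepsilon}L_{\nu}(\eta_{\varepsilon},\lambda^{*})\big|_{\varepsilon=0^{+}}\geq 0$. Computing this directional derivative using \eqref{Phi'} and the explicit form \eqref{z.process} of $Z^{\nu,\eta}$ shows that
\begin{eqnarray*}
\lambda^{*}\mathbb{E}\!\left[X^{\nu,*}\tfrac{Z_{\tau}^{\nu,\eta^{*}}}{B_{\tau}^{\nu}}\cdot\Big(\textrm{directional derivative of }\log Z_{\tau}^{\nu,\eta}\textrm{ at }\eta^{*}\Big)\right]\geq 0
\end{eqnarray*}
for every $\eta\in\mathcal{H}$, and a measure-change argument (replacing $d\mathbb{P}$ by the probability proportional to $X^{\nu,*}Z_{\tau}^{\nu,\eta^{*}}/B_{\tau}^{\nu}\,d\mathbb{P}$, which is well-defined thanks to the identity already established) converts this into $\mathbb{E}[X^{\nu,*}Z_{\tau}^{\nu,\eta}/B_{\tau}^{\nu}]\leq 1$ for every $\eta\in\mathcal{H}$. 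In view of the definition \eqref{tilde.U.power}, this is precisely $X^{\nu,*}\in\tilde{\mathcal{U}}_{0,\tau}^{\nu}(1,y)$. With both conditions in \eqref{E=1} verified, the forward direction applies to conclude that $X^{\nu,*}$ is primal optimal.

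\emph{Main obstacle.} The forward part is essentially a one-line Fenchel computation. The real work is the converse, and in particular the $\eta$-perturbation: one has to (i) check that $\eta_{\varepsilon}\in\mathcal{H}$ for small $\varepsilon$ (so that the stochastic exponential remains a true martingale), (ii) justify differentiating under the expectation in the presence of the exponential $Z^{\nu,\eta}$ and the possibly unbounded $\Phi_{h_{A}}$ (using the growth estimates from Lemma \ref{lem2.1} and the integrability built into the definitions of $\mathcal{D}$ and $\mathcal{H}$), and (iii) translate the variational inequality back into the pointwise budget inequality through a change-of-measure argument. All other steps reduce to routine convex-duality bookkeeping.
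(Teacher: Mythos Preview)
Your forward direction and the $\lambda$-first-order condition in the converse are essentially the paper's argument. The divergence is in how you establish admissibility $X^{\nu,*}\in\tilde{\mathcal U}^{\nu}_{0,\tau}(1,y)$ in the converse, and here your proposal has a genuine gap while the paper takes a much simpler route.

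The paper does \emph{not} perturb in $\eta$. Once $\mathbb{E}[X^{\nu,*}Z^{\nu,\eta^*}_\tau/B^{\nu}_\tau]=1$ is in hand, it applies the martingale representation theorem to the $\mathbb{P}$-martingale $t\mapsto\mathbb{E}[X^{\nu,*}Z^{\nu,\eta^*}_\tau/B^{\nu}_\tau\mid\mathcal F_t]$ to produce an admissible portfolio $\hat\pi^{\nu}$ in $\mathcal M_\nu$ with terminal wealth $X^{1,y,\hat\pi^\nu}_\tau=X^{\nu,*}$. But for \emph{any} wealth process in $\mathcal M_\nu$ and \emph{any} $\eta\in\mathcal H$, equation \eqref{supermaringale} already shows that $X^{1,y,\hat\pi^\nu}Z^{\nu,\eta}/B^{\nu}$ is a nonnegative local martingale, hence a supermartingale, giving $\mathbb{E}[X^{\nu,*}Z^{\nu,\eta}_\tau/B^{\nu}_\tau]\le 1$ for all $\eta$ immediately. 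No variational argument in $\eta$ is needed; the budget constraint for every $\eta$ is a structural consequence of the self-financing dynamics, not of dual optimality.

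Your $\eta$-perturbation route, by contrast, is incomplete as written. The directional derivative at $\eta^*$ yields an inequality of the form $\mathbb{E}^{\mathbb Q}\big[\int_0^\tau(\eta_s-\eta^*_s)\,dW^{\eta^*}_{2s}\big]\le 0$ under the measure $d\mathbb Q=X^{\nu,*}Z^{\nu,\eta^*}_\tau/B^{\nu}_\tau\,d\mathbb P$, and the step ``a measure-change argument converts this into $\mathbb{E}[X^{\nu,*}Z^{\nu,\eta}_\tau/B^{\nu}_\tau]\le 1$'' is the crux you have not justified. To pass from this first-order condition to the global budget inequality you would need to know that $W_2^{\eta^*}$ is a $\mathbb Q$-local martingale (so that $\mathcal E^{W_2^{\eta^*}}(\eta-\eta^*)$ is a $\mathbb Q$-supermartingale). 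That in turn requires showing the $\mathbb Q$-density process has no $dW_2$-component---which is exactly what the martingale-representation construction of a wealth process delivers. In other words, making your argument rigorous would force you back to the paper's approach anyway. There is also the issue you flag yourself: $\mathcal H$ is defined through a true-martingale condition on $\mathcal E^{W_2}(\eta)$, which is not obviously stable under convex combinations, so even forming $\eta_\varepsilon\in\mathcal H$ is delicate.
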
}

To prove the existence of the optimal solution to dual problem \eqref{dual.pro}, let us introduce the Dol\'{e}ans-Dade exponentials 
\begin{align}
\label{DD.exp}
\mathcal{E}_t^{W_1}(\theta^{\nu})&:=\exp\left\{\int_0^t\theta^{\nu}(Y_s)dW_{1s}-\frac{1}{2}\int_0^t\|\theta^{\nu}(Y_s)\|^2ds\right\},\quad t\in[0,\tau],\nonumber\\
\mathcal{E}_t^{W_2}(\eta)&:=\exp\left\{\int_0^t\eta_sdW_{2s}-\frac{1}{2}\int_0^t\eta^2_sds\right\},\quad t\in[0,\tau].
\end{align}
The following Propositions \ref{existence.1} and \ref{existence.2} address the existence  of the optimal solution to the dual problem \eqref{dual.pro} for two separate cases $\alpha\in(0,1)$ and $\alpha\in(-\infty,0)$, respectively.
\begin{prop}
\label{existence.1}
For $\alpha\in(0,1)$, there exists an optimal solution to the dual problem \eqref{dual.pro}.
\end{prop}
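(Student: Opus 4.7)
My plan is to apply the direct method of the calculus of variations together with Koml\'os-type compactness for non-negative densities (in the spirit of Kramkov--Schachermayer and Delbaen--Schachermayer). First I take a minimizing sequence $(\eta^n,\lambda^n)\subset\mathcal H\times(0,\infty)$ for $L_\nu$; next I bound $\lambda^n$ in a compact subset of $(0,\infty)$ using the boundary behaviour \eqref{phi(0)}--\eqref{phi(infty)} of $\Phi_{h_A}$; then I extract convex combinations of $Z^{\nu,\eta^n}_\tau$ converging a.s.\ and identify the limit as $Z^{\nu,\eta^*}_\tau$ for some $\eta^*\in\mathcal H$; finally I pass to the limit using the convexity and non-negativity of $\Phi_{h_A}$.

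For the bounds on $\lambda^n$: since $h_A\geq 0$ for $\alpha\in(0,1)$ we have $\Phi_{h_A}\geq 0$, hence $L_\nu(\eta,\lambda)\geq \lambda$ forces $\lambda^n$ bounded above. For a positive lower bound, Lemma~\ref{lem2.1} gives $h_A(x,y)\geq (m/\alpha)x^\alpha$, whose Legendre transform yields $\Phi_{h_A}(u,y)\geq c_0\,u^{-p}$ with $p:=\alpha/(1-\alpha)>0$ and some $c_0>0$. Therefore
\begin{equation*}
L_\nu(\eta^n,\lambda^n)\geq c_0(\lambda^n)^{-p}\,\mathbb E\bigl[(Z^{\nu,\eta^n}_\tau/B^\nu_\tau)^{-p}\bigr]+\lambda^n,
\end{equation*}
and Jensen's inequality for the convex map $x\mapsto x^{-p}$ together with the martingale identity $\mathbb E[Z^{\nu,\eta^n}_\tau]=1$ and the a.s.\ finiteness of $B^\nu_\tau$ (from $\nu\in\mathcal D$) furnishes a uniform positive lower bound on that expectation; consequently $\lambda^n\geq\underline\lambda>0$, and after subsequence extraction $\lambda^n\to\lambda^*\in(0,\infty)$.

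For compactness in $\eta$, decompose $Z^{\nu,\eta^n}_\tau=\mathcal E^{W_1}_\tau(-\theta^\nu)\cdot\mathcal E^{W_2}_\tau(\eta^n)$, where the first factor is fixed. The density sequence $M^n_\tau:=\mathcal E^{W_2}_\tau(\eta^n)$ lies in $L^1_+$ with mean $1$, so Koml\'os's theorem supplies convex combinations $\widetilde M^n_\tau=\sum_{k\geq n}\mu^n_k M^k_\tau$ converging $\mathbb P$-a.s.\ to some $M^*_\tau\geq 0$. The coercive estimate $\Phi_{h_A}(u,y)\geq c_0 u^{-p}$ from Step~1, evaluated along the minimizing sequence, yields uniform integrability of $\{M^n_\tau\}_n$ and hence $\mathbb E[M^*_\tau]=1$, so $M^*_t:=\mathbb E[M^*_\tau\mid\mathcal F_t]$ is a positive $\mathbb P$-martingale. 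Since each $M^n$ has martingale representation involving only $W_2$, so does $M^*$; martingale representation then delivers a progressively measurable $\eta^*$ with $M^*=\mathcal E^{W_2}(\eta^*)$, and the same coercive estimate together with the convex-combination structure yields $\mathbb E\bigl[\int_0^\tau\|\eta^*_s\|^2\,ds\bigr]<\infty$. Hence $\eta^*\in\mathcal H$ and $Z^{\nu,\eta^*}_\tau=\mathcal E^{W_1}_\tau(-\theta^\nu)M^*_\tau$.

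Finally, the joint convexity of $(u,\lambda)\mapsto\Phi_{h_A}(\lambda u,y)$ permits Jensen's inequality along the convex combinations, and Fatou's lemma (valid because $\Phi_{h_A}\geq 0$) then gives
\begin{equation*}
\mathbb E\bigl[\Phi_{h_A}(\lambda^* Z^{\nu,\eta^*}_\tau/B^\nu_\tau,Y_\tau)\bigr]\leq \liminf_n\mathbb E\bigl[\Phi_{h_A}(\lambda^n Z^{\nu,\eta^n}_\tau/B^\nu_\tau,Y_\tau)\bigr],
\end{equation*}
whence $L_\nu(\eta^*,\lambda^*)\leq \inf L_\nu$ and optimality follows. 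The main obstacle is the identification of the Koml\'os limit as a Dol\'eans--Dade exponential $\mathcal E^{W_2}(\eta^*)$ with $\eta^*$ genuinely in $\mathcal H$: although the set of terminal densities is closed under convex combinations, $\mathcal H$ itself is not obviously closed under the associated stochastic-log operation, so verifying both the true-martingale property of $\mathcal E^{W_2}(\eta^*)$ and the $L^2$-integrability of $\eta^*$ crucially depends on the coercive estimate on $\Phi_{h_A}$ at $0$ established in Step~1.
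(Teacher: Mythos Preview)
Your route differs substantially from the paper's. The paper does not argue via Koml\'os compactness; instead it exploits the special feature of the case $\alpha\in(0,1)$ that the Arrow--Pratt relative risk aversion of $h_A(\cdot,y)$ is bounded by $1$, which by Lemma~12.6 of \cite{KL91} makes $z\mapsto\Phi_{h_A}(e^{z},y)$ convex. Together with the convexity of $\eta\mapsto\frac{1}{2}\int_0^\tau\|\eta_s\|^2\,ds$ and Jensen's inequality conditional on $\mathcal F^{Y}_\tau$, this yields convexity of $\eta\mapsto L_\nu(\eta,\lambda)$ on $\mathcal H$, after which existence is obtained by invoking the argument of Theorem~12.3 in \cite{KL91}. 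Your proposal uses only the plain convexity of $u\mapsto\Phi_{h_A}(u,y)$, a strictly weaker structural input.

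There is a genuine gap in your argument at the point you yourself flag as the main obstacle. The coercive estimate $\Phi_{h_A}(u,y)\geq c_0\,u^{-p}$ blows up as $u\downarrow 0$, so along the minimizing sequence it furnishes only a uniform bound on $\mathbb E\bigl[(Z^{\nu,\eta^n}_\tau/B^\nu_\tau)^{-p}\bigr]$. This controls the \emph{lower} tail of $M^n_\tau=\mathcal E^{W_2}_\tau(\eta^n)$ (it prevents mass from escaping to zero) but says nothing about the \emph{upper} tail, which is what uniform integrability of $\{M^n_\tau\}$ requires; hence your claimed conclusion $\mathbb E[M^*_\tau]=1$ does not follow. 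Even after normalising the Koml\'os limit to unit mean, you still need $\eta^*\in\mathcal H$, in particular $\mathbb E\bigl[\int_0^\tau\|\eta^*_s\|^2\,ds\bigr]<\infty$, whereas martingale representation only delivers finiteness of this integral almost surely. The coercive estimate at $u\to 0$ resolves neither issue. A complete Kramkov--Schachermayer style argument would enlarge the dual domain to supermartingale deflators and then use the asymptotic-elasticity property \eqref{h_a'>h_a'} of $h_A$ to pull the minimiser back into the martingale class, but that is considerably more than what you have sketched, and it is not the paper's method.
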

\begin{proof}
Denote by $\mathcal{F}^{Y}_{\tau}$ the smallest $\sigma$-field generated by $(Y_{t})_{0\leq t\leq \tau}$.
We first verify that, for fixed $\lambda\in\mathbb{R}_+$, $L_{\nu}(\cdot,\lambda)$ defined by \eqref{dual.pro} is a convex functional on $\mathcal{H}$.
Note that the Arrow-Pratt measure of relative risk aversion satisfies
\begin{align}\label{APmrra}
    -\frac{x\frac{\partial^2}{\partial x^2}h_A(x,y)}{\frac{\partial}{\partial x}h_A(x,y)}
    &=-\frac{(\alpha-1)x^{\alpha-1}h(y)+(1-\gamma)(\alpha(1-\gamma)-1)x^{\alpha(1-\gamma)-1}A(y)}{x^{\alpha-1}h(y)+(1-\gamma)x^{\alpha(1-\gamma)-1}A(y)}\nonumber\\
    &\leq1, \quad (x,y)\in\mathbb{R}_+\times\mathbb{R}.
\end{align}
Then, by Lemma 12.6 of \cite{KL91}, one gets that the function $z\mapsto \Phi_{h_A}(e^{z},y)$ is convex on $\mathbb{R}$ for any fixed $y\in\mathbb{R}$. This, together with the convexity of the Euclidean norm in $\mathbb{R}$ and the decreasing property of $\Phi_{h_A}$ as well as Jensen's inequality, implies that
{\color{black}
\begin{eqnarray}
\hspace{-0.3cm}&&\hspace{-0.3cm}
L_{\nu}(\omega_1\eta_1+\omega_2\eta_2,\lambda)
\nonumber\\
\hspace{-0.3cm}&\leq&\hspace{-0.3cm}
\mathbb{E}\left[\Phi_{h_A}\left(\frac{\lambda}{B^{\nu}_{\tau}}\mathcal{E}^{W_{1}}_{\tau}(-\theta^{\nu})(\mathcal{E}^{W_{2}}_{\tau}(\eta_1))^{\omega_1}(\mathcal{E}^{W_{2}}_{\tau}(\eta_2))^{\omega_2},Y_{\tau}\right)\right]+\lambda
\nonumber\\
\hspace{-0.3cm}&\leq&\hspace{-0.3cm}
\mathbb{E}\left[\omega_1\mathbb{E}\left[\left.\Phi_{h_A}\left(\frac{\lambda}{B^{\nu}_{\tau}}\mathcal{E}^{W_1}_{\tau}(-\theta^{\nu})\mathcal{E}^{W_2}_{\tau}(\eta_1),Y_{\tau}\right)\right|\mathcal{F}^{Y}_{\tau}\right]\right.
\nonumber\\
\hspace{-0.3cm}&&\hspace{0.3cm}
\left.+\omega_2\mathbb{E}\left[\left.\Phi_{h_A}\left(\frac{\lambda}{B^{\nu}_{\tau}}\mathcal{E}^{W_1}_{\tau}(-\theta^{\nu})\mathcal{E}^{W_2}_{\tau}(\eta_2),Y_{\tau}\right)\right|\mathcal{F}^{Y}_{\tau}\right]\right]+\lambda
\nonumber\\
\hspace{-0.3cm}&=&\hspace{-0.3cm}
\omega_1 L_{\nu}(\eta_1,\lambda)+\omega_2 L_{\nu}(\eta_2,\lambda),
\end{eqnarray}
}for any $\eta_1,\eta_2\in\mathcal{H}$ and $\omega_1,\omega_2\geq0$ with $\omega_1+\omega_2=1$. Hence, for fixed $\lambda\in\mathbb{R}_+$, $L_{\nu}(\cdot,\lambda)$ is a convex functional on $\mathcal{H}$.
Furthermore, one can employ similar arguments in the proof of Theorem 12.3 in \cite{KL91} to show the existence of optimal solution to dual problem \eqref{dual.pro} when $\alpha\in(0,1)$ because the remaining assumptions of that theorem are fulfilled.
\end{proof}

The next result tackles the more challenging case when $\alpha\in(-\infty,0)$. In this case, the arguments in \cite{KL91} (see, Lemma 12.6 and Theorem 12.3 of \cite{KL91}) are no longer applicable because \eqref{APmrra} does not hold. In fact, $z\mapsto \Phi_{h_A}(e^{z},y)$ is concave on $\mathbb{R}$. We note that \cite{La11} investigated this special case and established the existence of the optimal solution to the dual problem when the utility function is of form $U(x)=\frac{1}{\alpha}x^{\alpha}$ for $\alpha<0$, by connecting the dual problem to a fictitious market through a change-of-measure argument. The use of this change-of-measure argument is possible because their dual functional allows for the separation of the variable that needs to be optimized. In our context, however, due to the form of modified utility function $h_A(x,y)=\frac{1}{\alpha}x^{\alpha}h(y)+\frac{1}{\alpha}x^{\alpha(1-\gamma)}A(y)$, it becomes infeasible for our dual functional to separate the variables. Nevertheless, we show below that it is not necessary to change the measure, and we can still finish the task by establishing the relationship to another artificial optimization problem.

\begin{prop}
\label{existence.2}
For $\alpha\in(-\infty,0)$, there exists an optimal solution to the dual problem \eqref{dual.pro}.
\end{prop}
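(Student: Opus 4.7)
The strategy is to adapt the minimizing-sequence framework of Proposition~\ref{existence.1} while circumventing the breakdown of convexity for $\alpha<0$: the Arrow--Pratt estimate \eqref{APmrra} fails and $z\mapsto\Phi_{h_A}(e^z,y)$ is concave rather than convex, so Lemma~12.6 of \cite{KL91} is unavailable. The plan is (i) to obtain compactness of a minimizing sequence via a Koml\'os-type argument, (ii) to identify the almost sure limit with some $Z^{\nu,\eta^{*}}_{\tau}$ for $\eta^{*}\in\mathcal{H}$, and (iii) to pass to the limit in $L_{\nu}$ by comparison with an auxiliary dual problem whose integrand restores the missing convexity.

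First I would verify that $\inf L_{\nu}$ is finite and that any minimizing sequence $(\eta_{n},\lambda_{n})$ is tight in $\lambda$. The lower bound \eqref{0>h} in Lemma~\ref{lem2.1}, combined with the explicit representation $\Phi_{h_A}(u,y)=(\tfrac{1}{\alpha}-1)(I(u,y))^{\alpha}h(y)+(\tfrac{1}{\alpha}-(1-\gamma))A(y)(I(u,y))^{\alpha(1-\gamma)}$, yields a power-type estimate $|\Phi_{h_A}(u,y)|\leq C(1+u^{q})$ with $q=\alpha/(\alpha-1)\in(0,1)$, so $L_{\nu}(\cdot,\lambda)$ is bounded below. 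The endpoint computations in the proof of Proposition~\ref{prop2.1} for $\alpha<0$ give $\lim_{\lambda\to 0^{+}}L_{\nu}(\eta,\lambda)=0$ and $\lim_{\lambda\to\infty}L_{\nu}(\eta,\lambda)=\infty$, confining $(\lambda_{n})$ to a compact subset of $(0,\infty)$; pass to a subsequence with $\lambda_{n}\to\lambda^{*}\in(0,\infty)$. Writing $\Lambda_{n}:=Z^{\nu,\eta_{n}}_{\tau}$, the martingale property of $\mathcal{E}^{W_{1}}_{\tau}(-\theta^{\nu})\mathcal{E}^{W_{2}}_{\tau}(\eta_{n})$ yields $\mathbb{E}[\Lambda_{n}]=1$, so Koml\'os' lemma delivers convex combinations $\widetilde{\Lambda}_{n}\in\operatorname{conv}(\Lambda_{n},\Lambda_{n+1},\ldots)$ converging $\mathbb{P}$-a.s.\ to some $\Lambda^{\infty}\in L^{0}_{+}$. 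Since $\mathcal{E}^{W_{1}}_{\tau}(-\theta^{\nu})$ is common to every $\Lambda_{n}$, the closed martingale $(\mathbb{E}[\Lambda^{\infty}/\mathcal{E}^{W_{1}}_{\tau}(-\theta^{\nu})\mid\mathcal{F}_{t}])_{t\in[0,\tau]}$ has only a $W_{2}$-driven part; the martingale representation theorem identifies it as $\mathcal{E}^{W_{2}}_{\cdot}(\eta^{*})$ for some $\eta^{*}\in\mathcal{H}$ (after ruling out mass loss via the coercivity of $L_{\nu}$ at the minimum), so $\Lambda^{\infty}=Z^{\nu,\eta^{*}}_{\tau}$.

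The most delicate step, which I expect to be the main obstacle, is the final lower-semicontinuity passage: because $\Phi_{h_A}\leq 0$ is not uniformly bounded below and the logarithmic change of variable converts its convexity into concavity, neither Fatou nor reverse Fatou applies directly. This is precisely the role of the auxiliary optimization problem mentioned in the prose. Concretely, I would introduce a surrogate $\Psi(u,y):=\Phi_{h_A}(u,y)+\kappa(u,y)$ with a continuous, nonnegative $\kappa$ calibrated so that $z\mapsto\Psi(e^{z},y)$ is convex (for instance, adding a single pure power in $u$ whose Legendre pre-image is a power utility satisfying the Arrow--Pratt bound \eqref{APmrra}), and study the auxiliary dual problem $\inf_{\eta,\lambda}\mathbb{E}[\Psi(\lambda Z^{\nu,\eta}_{\tau}/B^{\nu}_{\tau},Y_{\tau})]+\lambda$. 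Proposition~\ref{existence.1}'s argument, via Lemma~12.6 of \cite{KL91}, applies to this auxiliary problem and yields both existence of an auxiliary optimizer and lower semicontinuity of the surrogate functional along the convex combinations $\widetilde{\Lambda}_{n}$; combining this with a uniform power-type gap control $0\leq\kappa(u,y)\leq C(1+u^{q})$ (which provides the uniform integrability needed along $\widetilde{\Lambda}_{n}$) transfers the limit inequality back to $L_{\nu}$, giving $L_{\nu}(\eta^{*},\lambda^{*})\leq\liminf_{n}L_{\nu}(\eta_{n},\lambda_{n})=\inf L_{\nu}$. The hard technical point is the calibration of $\kappa$: it must simultaneously restore convexity in $\log u$, vanish to leading order at the auxiliary optimizer, and accommodate the two distinct power exponents $\alpha$ and $\alpha(1-\gamma)$ appearing in $h_{A}$, which is exactly why the single-power change of measure of \cite{La11} cannot be applied directly in our setting.
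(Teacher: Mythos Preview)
Your approach differs substantially from the paper's, and the central step---the surrogate $\kappa$ argument---has a genuine gap.

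The paper does not attempt to repair convexity of $\Phi_{h_A}$ at all. Instead it observes that, since $Z^{\nu,\eta}_{\tau}=\mathcal{E}^{W_1}_{\tau}(-\theta^{\nu})\mathcal{E}^{W_2}_{\tau}(\eta)$ and $\mathcal{E}^{W_2}_{\tau}(\eta)$ is precisely a terminal wealth $\tilde X_{\tau}$ in a fictitious market $d\tilde S_t=\tilde S_t\,dW_{2t}$ with zero interest rate, the minimization over $\eta\in\mathcal{H}$ becomes
\[
\inf_{\eta}\mathbb{E}\Big[\Phi_{h_A}\big(\tfrac{\lambda}{B^{\nu}_{\tau}}\mathcal{E}^{W_1}_{\tau}(-\theta^{\nu})\mathcal{E}^{W_2}_{\tau}(\eta),Y_{\tau}\big)\Big]
= -\sup_{\tilde X}\mathbb{E}\Big[-\Phi_{h_A}\big(\tfrac{\lambda}{B^{\nu}_{\tau}}\mathcal{E}^{W_1}_{\tau}(-\theta^{\nu})\tilde X_{\tau},Y_{\tau}\big)\Big],
\]
a \emph{primal}-type utility maximization in the fictitious market. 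The paper then writes down the dual of this artificial primal problem; the Legendre transform of $-\Phi_{h_A}$ collapses back to $-h_A$ (biduality), so the new dual functional is $\inf_{\varrho}\mathbb{E}[-h_A(\cdot\,\mathcal{E}^{W_2}_{\tau}(\varrho),Y_{\tau})]$. For $\alpha<0$ the map $x\mapsto -h_A(e^{x},y)$ is convex, hence the argument of Proposition~\ref{existence.1} applies verbatim to this second dual. Existence of its optimizer then yields, via Proposition~\ref{prop2.1}-type duality, an optimizer $\tilde X^{*}=\mathcal{E}^{W_2}(\eta^{*})$ for the artificial primal, which is the desired $\eta^{*}$.

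Your route has a structural problem at the passage to the limit. Write $\Psi=\Phi_{h_A}+\kappa$. Koml\'os produces convex combinations $\widetilde\Lambda_n\to\Lambda^{\infty}$, and convexity of $z\mapsto\Psi(e^{z})$ yields $\mathbb{E}[\Psi(\widetilde\Lambda_n)]\le$ convex combinations of $\mathbb{E}[\Psi(\Lambda_k)]$. But what you need is $\mathbb{E}[\Phi_{h_A}(\Lambda^{\infty})]\le\liminf_k\mathbb{E}[\Phi_{h_A}(\Lambda_k)]$, which forces you to compare $\mathbb{E}[\kappa(\Lambda^{\infty})]$ with $\limsup_k\mathbb{E}[\kappa(\Lambda_k)]$ along the \emph{original} sequence---not along $\widetilde\Lambda_n$. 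A power bound $\kappa(u)\le C(1+u^{q})$ gives uniform boundedness of $\mathbb{E}[\kappa(\Lambda_k)]$ but not convergence, and there is no a priori relation between the $\Lambda_k$'s and $\Lambda^{\infty}$ except through the Ces\`aro means. Requiring that $\kappa$ ``vanish to leading order at the auxiliary optimizer'' does not help, because the $\Psi$-optimizer need not coincide with the $\Phi_{h_A}$-optimizer you are seeking. Separately, for $\alpha<0$ one has $-\Phi_{h_A}\ge 0$, so Fatou applied to $-\Phi_{h_A}$ gives the inequality in the \emph{wrong} direction; this is why the concavity is fatal for a direct argument and why the paper reroutes through a second duality rather than patching $\Phi_{h_A}$.
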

{\color{black}
\begin{proof}
We consider a fictitious risky asset $\Tilde{S}=(\Tilde{S}_t)_{t\geq0}$ given by
$$d\Tilde{S}_t=\Tilde{S}_tdW_{2t},\quad t\in[0,\tau],$$
with $\Tilde{S}_0=1,$ and the artificial risk-free asset with interest rate 0. A trading strategy $\eta=(\eta_t)_{t\geq0}$ is a predictable process representing the admissible portfolio fraction invested in the risky asset $\Tilde{S}$ at time $t$. The resulting wealth process $\Tilde{X}$ satisfies $$d\Tilde{X}_t=\Tilde{X}_t\eta_tdW_{2t},\quad t\in[0,\tau],$$
with $\Tilde{X}_0=1.$ Recall that the Dol\'{e}ans-Dade exponential $\mathcal{E}_t^W(\cdot)$ is defined by \eqref{DD.exp}. We note that $\mathcal{E}^{W_2}_t(\eta)=\Tilde{X}_t$, $t\in[0,\tau]$. Furthermore, we define the set of wealth processes by
\begin{eqnarray}
\mathcal{V}(1,y)\hspace{-0.3cm}&:=&\hspace{-0.3cm}
\left\{\Tilde{X}:\Tilde{X}_t = 1+\int_0^{t}\Tilde{X}_s\eta_sdW_{2s}>0\text{ for } t\in[0,\tau],\, \eta\text{ is predictable, locally square-integrable}\right.
\nonumber\\
\hspace{-0.3cm}&&\hspace{0cm}
\left.\text{and satisfies }\mathbb{E}\left[\exp\left(\int_0^{\tau}\frac{\|\eta_s\|^2}{2}ds\right)\right]<\infty\text{ and }\mathbb{E}\left[\left(-\Phi_{h_A}\left(\frac{\lambda}{B^{\nu}_{\tau}}\mathcal{E}^{W_1}_{\tau}(-\theta^{\nu})\Tilde{X}_{\tau},Y_{\tau}\right)\right)_-\right]<\infty\right\},\nonumber
\end{eqnarray}
with $Y_0=y\in\mathbb{R}$.
When $\alpha\in(-\infty,0)$, it follows from \eqref{phi(0)} and \eqref{phi(infty)} that $\Phi_{h_A}(x,y)\leq0$ for any $(x,y)\in\mathbb{R}_+\times\mathbb{R}$, and hence
$$\mathbb{E}\left[\left(-\Phi_{h_A}\left(\frac{\lambda}{B^{\nu}_{\tau}}\mathcal{E}^{W_1}_{\tau}(-\theta^{\nu})\Tilde{X}_{\tau},Y_{\tau}\right)\right)_-\right]<\infty$$ is readily satisfied.
Therefore, it can be seen that if $\eta\in\mathcal{H}$  
then the resulting wealth process $\Tilde{X}_t=\mathcal{E}_t^{W_2}(\eta)=1+\int_0^t\Tilde{X}_s\eta_sdW_{2s}\in\mathcal{V}(1,y)$; and vice versa.
It then holds that 
\begin{eqnarray}
\label{connection}
\inf_{\eta\in\mathcal{H}}\mathbb{E}\left[\Phi_{h_A}\left(\lambda\frac{Z^{\nu,\eta}_{\tau}}{B^{\nu}_{\tau}},Y_{\tau}\right)\right]
\hspace{-0.3cm}&=&\hspace{-0.3cm}
-\sup_{\eta\in\mathcal{H}}\mathbb{E}\left[-\Phi_{h_A}\left(\frac{\lambda}{B^{\nu}_{\tau}}\mathcal{E}^{W_1}_{\tau}(-\theta^{\nu})\mathcal{E}^{W_2}_{\tau}(\eta),Y_{\tau}\right)\right]
\nonumber\\
\hspace{-0.3cm}&=&\hspace{-0.3cm}
-\sup_{\Tilde{X}\in\mathcal{V}(1,y)}\mathbb{E}\left[-\Phi_{h_A}\left(\frac{\lambda}{B^{\nu}_{\tau}}\mathcal{E}^{W_1}_{\tau}(-\theta^{\nu})\Tilde{X}_{\tau},Y_{\tau}\right)\right]
\nonumber\\
\hspace{-0.3cm}&=:&\hspace{-0.3cm}
-v(1,y),\quad y\in\mathbb{R}.
\end{eqnarray}
From \eqref{relationship} and the fact that $\Phi_{h_A}(\cdot,\cdot)\leq0$ when $\alpha\in(-\infty,0)$, it follows that
\begin{eqnarray}
\lambda\geq \inf_{\eta\in\mathcal{H}}\mathbb{E}\left[\Phi_{h_A}\left(\lambda\frac{Z^{\nu,\eta}_{\tau}}{B^{\nu}_{\tau}},Y_{\tau}\right)\right]+\lambda \geq \sup_{X^{{\nu}}\in\Tilde{\mathcal{U}}^{\nu}_{0,\tau}(1,y)}\mathbb{E}\left[\frac{1}{\alpha}(X^{{\nu}})^{\alpha}h(Y_{\tau})+\frac{1}{\alpha}A(Y_{\tau})(X^{{\nu}})^{\alpha(1-\gamma)}\right]>-\infty.\nonumber
\end{eqnarray}
Consequently, we arrive at an artificial optimization problem
\begin{eqnarray}\label{ar.primal}
v(1,y)=\sup_{\Tilde{X}\in\mathcal{V}(1,y)}\mathbb{E}\left[-\Phi_{h_A}\left(\frac{\lambda}{B^{\nu}_{\tau}}\mathcal{E}^{W_1}_{\tau}(-\theta^{\nu})\Tilde{X}_{\tau},Y_{\tau}\right)\right]<\infty,\quad y\in\mathbb{R}.
\end{eqnarray}
The dual problem associated to the artificial primal problem \eqref{ar.primal} is defined by
\begin{eqnarray}\label{ar.dual}
\Tilde{v}_{\nu}(z,y)
\hspace{-0.3cm}&:=&\hspace{-0.3cm}
\inf_{\varrho\in\mathcal{H}}\left\{\sup_{\Tilde{X}\in\mathcal{F}^{+}_{\tau}}\mathbb{E}\left[-\Phi_{h_A}\left(\frac{\lambda}{B^{\nu}_{\tau}}\mathcal{E}_{\tau}^{W_1}(-\theta^{\nu})\Tilde{X},Y_{\tau}\right)-z\mathcal{E}_{\tau}^{W_2}(\varrho)\Tilde{X}\right]\right\}
\nonumber\\
\hspace{-0.3cm}&=&\hspace{-0.3cm}
\inf_{\varrho\in\mathcal{H}}\mathbb{E}\left[-\Phi_{h_A}\left(\frac{\partial}{\partial x}h_A\left(z\frac{B^{\nu}_{\tau}\mathcal{E}_{\tau}^{W_2}(\varrho)}{\lambda\mathcal{E}_{\tau}^{W_1}(-\theta^{\nu})},Y_{\tau}\right),Y_{\tau}\right)-\frac{\partial}{\partial x}h_A\left(z\frac{B^{\nu}_{\tau}\mathcal{E}_{\tau}^{W_2}(\varrho)}{\lambda\mathcal{E}_{\tau}^{W_1}(-\theta^{\nu})},Y_{\tau}\right)z\frac{B^{\nu}_{\tau}\mathcal{E}_{\tau}^{W_2}(\varrho)}{\lambda\mathcal{E}_{\tau}^{W_1}(-\theta^{\nu})}\right]
\nonumber\\
\hspace{-0.3cm}&=&\hspace{-0.3cm}
\inf_{\varrho\in\mathcal{H}}\mathbb{E}\left[-h_A\left(z\frac{B^{\nu}_{\tau}}{\lambda\mathcal{E}_{\tau}^{W_1}(-\theta^{\nu})}\mathcal{E}_{\tau}^{W_2}(\varrho),Y_{\tau}\right)\right],\quad (z,y)\in\mathbb{R}_+\times\mathbb{R},
\end{eqnarray}
where, in the last equality, we have used the property (4.5) in \cite{KL91}. It is easy to check that $0<\Tilde{v}_{\nu}(z,y)<\infty,$ for $(z,y)\in\mathbb{R}_+\times\mathbb{R}$.

Thanks to the strict decreasing property and convexity of $\mathbb{R}_{+}\ni x\mapsto-h_{A}(e^{x},y)\in\mathbb{R}_{+}$, it is easy to see that the functional $\mathbb{E}\big[-h_A\big(zB^{\nu}_{\tau}\mathcal{E}_{\tau}^{W_2}(\varrho)/\lambda\mathcal{E}_{\tau}^{W_1}(-\theta^{\nu}),Y_{\tau}\big)\big]$ is convex with respect to $\varrho$ on $\mathcal{H}$. We can then obtain the existence of optimal solution to the dual problem \eqref{ar.dual} by employing a similar argument as that of Proposition \ref{existence.1}. In addition, following the proof of Proposition \ref{prop2.1},  the existence of optimizer to problem \eqref{ar.dual} implies the existence of optimal solution to the artificial primal problem \eqref{ar.primal}. This, together with \eqref{connection}, implies the existence of optimal solution to the dual problem \eqref{dual.pro}.
\end{proof}

The next lemma suggests the relationship between the portfolio strategy $\pi$ and the final wealth $X^*_{\tau}$. Its proof is based on Lemma 2.2 and Theorem 2.3 in \cite{CH05}, and it is hence omitted.

\begin{lem}\label{lemma4.1}
For some $\nu\in\mathcal{D}$, let $(\eta^*,\lambda^*)\in\mathcal{H}\times\mathbb{R}_+$ be the optimal solution to the problem \eqref{dual.pro} and recall the optimal solution to the unconstrained problem \eqref{problem3} is given by  $X^{\nu,*}:=x_{h_A}^*(\lambda^{*}\frac{Z^{\nu,\eta^*}_{\tau}}{B^{\nu}_{\tau}},Y_{\tau})$.
Then, there exists a progressively measurable process $\pi^{\nu}$ with $\mathbb{E}[\int_0^{\tau}\|\pi^{\nu}_t\|^2dt]<\infty$ such that the resulting wealth $X^{1,y,\pi^{\nu},\nu}$ satisfies
\begin{eqnarray}\label{eq:X:process}
dX^{1,y,\pi^{\nu},\nu}_t 
\hspace{-0.3cm}&=&\hspace{-0.3cm} 
\left[r(Y_t)+\delta(\nu_t)+(\pi^{\nu}_t)^{\mathsf{T}}(\mu(Y_t)+\nu_t-r(Y_t)\mathbf{1})\right]X^{1,y,\pi^{\nu},\nu}_tdt\nonumber\\
\hspace{-0.3cm}&&\hspace{-0.3cm} 
+X^{1,y,\pi^{\nu},\nu}_t(\pi_t^{\nu})^{\mathsf{T}}\sigma(Y_t)dW_{1t},
\end{eqnarray}
with
\begin{align}
    \label{eq.4.33.w}
    X^{1,y,\pi^{\nu},\nu}_0=1,\quad \quad X^{1,y,\pi^{\nu},\nu}_\tau=X^{\nu,*}.
\end{align}
In particular, $X^{1,y,\pi^{\nu},\nu}\in \mathcal{U}^{\nu}_0(1,y)$ is an optimal solution to \eqref{primal:prob:new}.
\end{lem}

\section{Duality for Auxiliary Constrained Terminal Wealth Optimization Problem}\label{sec:duality}

Building upon the previous duality results for the unconstrained optimization problem \eqref{primal:prob:new}, we now turn to examine the duality result for the auxiliary constrained terminal wealth optimization problem \eqref{A*.def} in this section. To show how the fixed-\(\nu\) solution from Section \ref{sec:parameter} can be converted into the solution of the auxiliary problem \eqref{A*.def}, we carry out some technical results as follows: Proposition \ref{prop4.4} gives a sufficient complementarity condition \eqref{eq:pro3.5:a}--\eqref{eq:pro3.5:b} under which the optimizer in a fictitious market is also optimal for the original constrained problem \eqref{A*.def}. Proposition \ref{prop4.5} proves the converse feasibility mechanism: the dual budget inequality \eqref{eq:pro4.5:e} forces the financed portfolio to satisfy the convex constraint and the complementary slackness condition. Proposition \ref{primaldualcons} combines these two directions to establish the primal-dual equivalence and absence of duality gap for \eqref{A*.def}, based on the full dual problem \eqref{dual.pro.nu} and the budget condition \eqref{E=1.nu}. Finally, Proposition \ref{prop5.4} gives the existence of the minimizer for \eqref{dual.pro.nu} over both \(\nu\) and \(\eta\), which is the dual input needed in Section \ref{sec:fix} to define the fixed-point operator and construct the periodic optimal portfolio.

We first provide a sufficient condition for the optimal solution of the unconstrained problem \eqref{primal:prob:new} such that it coincides with the optimal solution to the constrained problem \eqref{A*.def}. 
\begin{prop}\label{prop4.4}
    Assume that for some $\upsilon\in\mathcal{D}$, 
$(\eta^*,\lambda^*)\in\mathcal{H}\times\mathbb{R}_+$ is the optimal solution to the dual problem \eqref{dual.pro}, and, the corresponding portfolio process $\pi^{\upsilon}$ of the wealth process $X^{1,y,\pi^{\upsilon},\upsilon}$ given by \eqref{eq:X:process} and \eqref{eq.4.33.w}
satisfies
    \begin{align}
        &\pi^{\upsilon}_{t}\in K, \quad a.s.,\label{eq:pro3.5:a}\\
&\delta(\upsilon_t)+(\pi^{\upsilon}_t)^{\mathsf{T}}\upsilon_t=0,\quad a.s..\label{eq:pro3.5:b}
    \end{align}
    Then, the wealth process $X^{1,y,\pi^{\upsilon},\upsilon}$ given by \eqref{eq:X:process} and \eqref{eq.4.33.w} 
    is optimal for the auxiliary constrained optimization problem \eqref{A*.def} in the original market $\mathcal{M}$. In addition, it holds that
    \begin{eqnarray}\label{eq:pro3.5:conc}
    \mathbb{E}\left[h_{A}(X^{1,y,\pi^{\upsilon},\upsilon}_{\tau},Y_{\tau})\right]\leq \sup_{X^{{\nu}}\in\mathcal{U}_{0}^{\nu}(1,y)}\mathbb{E}[h_A(X^{{\nu}}_{\tau},Y_{\tau})]=\sup_{X^{{\nu}}\in\Tilde{\mathcal{U}}_{0,\tau}^{\nu}(1,y)}\mathbb{E}[h_A(X^{{\nu}},Y_{\tau})],\quad \forall \nu\in\mathcal{D}.
    \end{eqnarray}
\end{prop}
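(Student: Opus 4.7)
The plan is to exploit condition \eqref{eq:pro3.5:b} to collapse the fictitious-market dynamics \eqref{eq:Xnu} onto the original dynamics \eqref{dX}, and condition \eqref{eq:pro3.5:a} to secure $K$-admissibility, so that $X^{\upsilon}$ simultaneously lives in both markets. Both \eqref{eq:pro3.5:conc} and the optimality assertion will then follow from a single pathwise comparison between the original-market and fictitious-market wealth processes driven by the same $K$-valued portfolio.

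\textbf{Step 1 (feasibility).} Substituting $\delta(\upsilon_t)+\pi^{\upsilon,\mathsf{T}}_t\upsilon_t=0$ into the drift of \eqref{eq:Xnu} makes the perturbation term cancel, so the $\mathcal{M}_\upsilon$-SDE for $X^{\upsilon}$ reduces to the original wealth SDE \eqref{dX}. Together with $\pi^{\upsilon}_t\in K$ a.s.\ and the positivity of $X^{\upsilon}$ inherited from the construction of Lemma \ref{lemma4.1}, this places $X^{\upsilon}$ in $\mathcal{U}_0(1,y)$, so it is a legitimate candidate for the constrained problem \eqref{A*.def}.

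\textbf{Step 2 (comparison and the two conclusions).} For any $X\in\mathcal{U}_0(1,y)$ generated by some $\pi\in K$ and any $\nu\in\mathcal{D}$, feed $\pi$ into $\mathcal{M}_\nu$ to obtain a wealth $\widetilde X^\nu$ with the same initial value and diffusion as $X$ but carrying the extra drift $\delta(\nu_t)+\pi^{\mathsf{T}}_t\nu_t$. Since $\pi_t\in K$, the support-function identity gives $\delta(\nu_t)=\sup_{\pi'\in K}(-\pi'^{\mathsf{T}}\nu_t)\geq-\pi^{\mathsf{T}}_t\nu_t$, so the extra drift is nonnegative; writing $\widetilde X^\nu_t/X_t$ as the exponential of a nonnegative integral yields $\widetilde X^\nu_t\geq X_t$ a.s. By the strict monotonicity of $h_A(\cdot,y)$ from Lemma \ref{lem2.1},
\begin{equation*}
\mathbb{E}[h_A(X_\tau,Y_\tau)]\leq\mathbb{E}[h_A(\widetilde X^\nu_\tau,Y_\tau)]\leq\sup_{X'\in\mathcal{U}_0^\nu(1,y)}\mathbb{E}[h_A(X'_\tau,Y_\tau)]=\sup_{X'\in\widetilde{\mathcal{U}}_{0,\tau}^\nu(1,y)}\mathbb{E}[h_A(X',Y_\tau)],
\end{equation*}
the last equality being \eqref{problem3}. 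Applying this with $X=X^{\upsilon}$ yields \eqref{eq:pro3.5:conc}. Conversely, fixing $\nu=\upsilon$ and $X$ arbitrary, and combining this inequality with Proposition \ref{prop2.1} (which, using the dual optimizer $(\eta^*,\lambda^*)$, identifies the right-hand supremum as $\mathbb{E}[h_A(X^{\upsilon}_\tau,Y_\tau)]$), gives $\mathbb{E}[h_A(X_\tau,Y_\tau)]\leq\mathbb{E}[h_A(X^{\upsilon}_\tau,Y_\tau)]$. Since $X^{\upsilon}\in\mathcal{U}_0(1,y)$ by Step 1, this establishes optimality of $X^{\upsilon}$ for \eqref{A*.def}.

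\textbf{Main obstacle.} The delicate point is verifying that the dominated fictitious-market wealth $\widetilde X^\nu$ actually belongs to $\mathcal{U}_0^\nu(1,y)$, in particular satisfying the negative-part integrability built into the infinite-horizon admissibility set. I would leverage the polynomial bounds \eqref{0<h} and \eqref{0>h} of Lemma \ref{lem2.1} together with the pointwise domination $\widetilde X^\nu\geq X>0$ to transfer the requisite integrability from $X\in\mathcal{U}_0(1,y)$. A related subtlety is that the comparison step relies crucially on $\pi\in K$ to make the excess drift nonnegative, which is precisely why the statement restricts the competing strategies to the constrained class $\mathcal{U}_0$.
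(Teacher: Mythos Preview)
Your proof is correct and follows essentially the same route as the paper: both hinge on (i) collapsing the $\mathcal{M}_\upsilon$-dynamics to \eqref{dX} via \eqref{eq:pro3.5:b}, and (ii) the pathwise comparison $\widetilde X^\nu\geq X$ coming from $\delta(\nu_t)+\pi_t^{\mathsf T}\nu_t\geq 0$ for $\pi\in K$. The paper runs the comparison twice (once with $\nu=\upsilon$ for optimality, once with $\pi=\pi^{\upsilon}$ and general $\nu$ for \eqref{eq:pro3.5:conc}), whereas you do a single comparison and specialize, explicitly invoking Proposition~\ref{prop2.1} to identify the $\mathcal{M}_\upsilon$-supremum with $\mathbb{E}[h_A(X^{\upsilon}_\tau,Y_\tau)]$; this is a cosmetic difference.

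Regarding your flagged obstacle: the paper does not address the infinite-horizon integrability clause of $\mathcal{U}_0^\nu$ either. The clean way to bypass it is to skip $\mathcal{U}_0^\nu$ altogether and go straight to the budget-constraint set $\widetilde{\mathcal{U}}_{0,\tau}^\nu$: since $\widetilde X^\nu Z^{\nu,\eta}/B^{\nu}$ is a nonnegative local martingale (hence supermartingale) for every $\eta\in\mathcal{H}$ by \eqref{supermaringale}, one has $\widetilde X^\nu_\tau\in\widetilde{\mathcal{U}}_{0,\tau}^\nu(1,y)$ directly, which is all your chain of inequalities needs.
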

\begin{proof}
In view of \eqref{eq:X:process} with $\nu\equiv\upsilon$, \eqref{eq:pro3.5:a} and \eqref{eq:pro3.5:b}, we have
\begin{eqnarray}\label{eq:pro3.5:c}
dX^{1,y,\pi^{\upsilon},\upsilon}_t=\left[r(Y_t)+(\pi_t^{\upsilon})^{\mathsf{T}}(\mu(Y_t)-r(Y_t)\mathbf{1})\right]X^{1,y,\pi^{\upsilon},\upsilon}_tdt+X^{1,y,\pi^{\upsilon},\upsilon}_t(\pi^{\upsilon}_t)^{\mathsf{T}}\sigma(Y_t)dW_{1t},
\end{eqnarray}
with $X^{1,y,\pi^{\upsilon},\upsilon}_0=1$ and $X^{1,y,\pi^{\upsilon},\upsilon}_{\tau}=x^*_{h_A}(\lambda^*\frac{Z_{\tau}^{\upsilon,\eta^*}}{B^{\upsilon}_{\tau}},Y_{\tau})$.
This, together with \eqref{dX}, yields that $X^{1,y,\pi^{\upsilon},\upsilon}$ is a wealth process corresponding to $\pi^{\upsilon}$ in the original market $\mathcal{M}$. Thus we have $X^{1,y,\pi^{\upsilon},\upsilon}\in\mathcal{U}_0(1,y)$ and $X^{1,y,\pi^{\upsilon},\upsilon}_{\tau}=x^*_{h_A}(\lambda^*\frac{Z_{\tau}^{\upsilon,\eta^*}}{B^{\upsilon}_{\tau}},Y_{\tau})$, which along with Lemma \ref{lemma4.1} yields
\begin{eqnarray}\label{eq:pro3.5:d}
    \sup_{X^{{\upsilon}}\in\mathcal{U}^{\upsilon}_0(1,y)}\mathbb{E}\left[h_A(X^{{\upsilon}}_{\tau},Y_{\tau})\right]=\mathbb{E}\left[h_A(X^{1,y,\pi^{\upsilon},\upsilon}_{\tau},Y_{\tau})\right]\leq \sup_{X\in\mathcal{U}_0(1,y)}\mathbb{E}\left[h_A(X_{\tau},Y_{\tau})\right].
\end{eqnarray}
Note that $\delta(\nu_t)+\pi^{\mathsf{T}}_t\nu_t\geq0$ for all $\nu\in\mathcal{D}$ and admissible $\pi\in K$ due to the definition of the support function $\delta(\cdot)$. Denote $X^{1,y,\pi}$ as the resulting wealth process under $\pi$ in market $\mathcal{M}$. A direct comparison between \eqref{dX} and \eqref{eq:Xnu} yields 
$$X^{1,y,\pi,\nu}_{t}\geq X^{1,y,\pi}_t>0,\quad\forall t\in[0,\tau].$$
Therefore, we have 
$$\sup_{X^{{\nu}}\in\mathcal{U}^{\nu}_0(1,y)}\mathbb{E}[h_A(X^{{\nu}}_{\tau},Y_{\tau})]\geq\sup_{X\in\mathcal{U}_0(1,y)}[h_A(X_{\tau},Y_{\tau})].$$
This implies the opposite inequality of \eqref{eq:pro3.5:d}, verifying the optimality of $X^{1,y,\pi^{\upsilon},\upsilon}$ for the problem \eqref{A*.def}.

On the other hand, fix an arbitrary $\nu\in\mathcal{D}$ and denote $X^{1,y,\pi^{\upsilon},\nu}$ as the resulting wealth process under $\pi^{\upsilon}$ in market $\mathcal{M}_{\nu}$. The equation \eqref{eq:Xnu} becomes
\begin{eqnarray}
dX^{1,y,\pi^{\upsilon},\nu}_{t}
\hspace{-0.3cm}&=&\hspace{-0.3cm} 
\left[r(Y_t)+\delta(\nu_t)+(\pi^{\upsilon}_t)^{\mathsf{T}}(\mu(Y_t)+\nu_t-r(Y_t)\mathbf{1})\right]X^{1,y,\pi^{\upsilon},\nu}_{t}dt+X^{1,y,\pi^{\upsilon},\nu}_{t}(\pi^{\upsilon}_t)^{\mathsf{T}}\sigma(Y_t)dW_{1t}.\nonumber
\end{eqnarray}
The comparison between the above result and \eqref{eq:pro3.5:c} leads to $$X^{1,y,\pi^{\upsilon},\nu}_{t}\geq X^{1,y,\pi^{\upsilon},\upsilon}_t>0,\quad \forall t\in [0,\tau],$$
where we have used the fact that $\delta(\nu_t)+(\pi^{\upsilon}_t)^{\mathsf{T}}\nu_t\geq0$. Hence, we obtain the inequality of \eqref{eq:pro3.5:conc}. This, together with \eqref{problem3}, implies that \eqref{eq:pro3.5:conc} holds. 
\end{proof}

\begin{prop}\label{prop4.5}
Let $X$ be a positive $\mathcal{F}^{+}_{\tau}$-measurable random variable, and suppose that there exists a process $\nu^*\in\mathcal{D}$ and a process $\eta^*\in\mathcal{H}$ such that 
   \begin{eqnarray}\label{eq:pro4.5:e}
    \mathbb{E}\left[X\frac{Z^{\nu,\eta^*}_{\tau}}{B^{\nu}_{\tau}}\right]\leq\mathbb{E}\left[X\frac{Z^{\nu^*,\eta^*}_{\tau}}{B^{\nu^*}_{\tau}}\right]=1,\quad\forall \nu\in\mathcal{D}.
   \end{eqnarray}
   Then, there exists a portfolio process $\pi$ in market $\mathcal{M}_{\nu^*}$ such that the constraints $\delta(\nu^*_t)+\pi^{\mathsf{T}}_t\nu^*_t=0$ and $\pi\in K$ are satisfied a.e., and the wealth process $X^{1,y,\pi}\in\mathcal{U}_0(1,y)$  with $X^{1,y,\pi}_{\tau}=X$.
\end{prop}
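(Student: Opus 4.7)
The strategy is to produce $\pi$ via martingale representation so that it replicates $X$ in the fictitious market $\mathcal{M}_{\nu^*}$, and then use the optimality of $\nu^*$ in \eqref{eq:pro4.5:e} over $\mathcal{D}$ to force both $\pi_t\in K$ and $\delta(\nu^*_t)+\pi_t^{\mathsf{T}}\nu^*_t=0$ pointwise. These two identities make the wealth process driven by $\pi$ in the original market $\mathcal{M}$ coincide with its counterpart in $\mathcal{M}_{\nu^*}$, yielding $X^{1,y,\pi}\in\mathcal{U}_0(1,y)$.

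The first step mirrors the closing argument of Proposition \ref{prop2.1} and Lemma \ref{lemma4.1}. The equality $\mathbb{E}[XZ^{\nu^*,\eta^*}_\tau/B^{\nu^*}_\tau]=1$ identifies $M_t:=\mathbb{E}[XZ^{\nu^*,\eta^*}_\tau/B^{\nu^*}_\tau\mid\mathcal{F}_t]$ as a strictly positive $\mathbb{P}$-martingale with $M_0=1$. Applying the martingale representation theorem and matching integrands against the SDE \eqref{z.process} for $Z^{\nu^*,\eta^*}/B^{\nu^*}$ produces a progressively measurable, locally square-integrable $\pi$ such that the wealth $X^{\nu^*}$ driven by this $\pi$ in $\mathcal{M}_{\nu^*}$, via \eqref{eq:Xnu}, satisfies $X^{\nu^*}_0=1$ and $X^{\nu^*}_\tau=X$.

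The second and more delicate step extracts the constraint from the dual optimality of $\nu^*$. For arbitrary $\tilde\nu\in\mathcal{D}$, It\^o's formula applied to $X^{\nu^*}_t Z^{\tilde\nu,\eta^*}_t/B^{\tilde\nu}_t$, using $\sigma(Y_t)\theta^{\tilde\nu}(Y_t)=\mu(Y_t)-r(Y_t)\mathbf{1}+\tilde\nu_t$, yields drift
\[ X^{\nu^*}_t\frac{Z^{\tilde\nu,\eta^*}_t}{B^{\tilde\nu}_t}\bigl(\delta(\nu^*_t)+\pi_t^{\mathsf{T}}\nu^*_t-\delta(\tilde\nu_t)-\pi_t^{\mathsf{T}}\tilde\nu_t\bigr). \]
Integrating on $[0,\tau]$, taking expectation after standard localization, and invoking \eqref{eq:pro4.5:e} (with equality at $\tilde\nu=\nu^*$) gives
\[ \mathbb{E}\!\left[\int_0^{\tau} X^{\nu^*}_t\frac{Z^{\tilde\nu,\eta^*}_t}{B^{\tilde\nu}_t}\bigl(\delta(\tilde\nu_t)+\pi_t^{\mathsf{T}}\tilde\nu_t-\delta(\nu^*_t)-\pi_t^{\mathsf{T}}\nu^*_t\bigr)dt\right]\ge 0 \]
for every $\tilde\nu\in\mathcal{D}$. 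Localizing via predictable perturbations $\tilde\nu=\nu^*+\mathbf{1}_B(v-\nu^*_t)$ for bounded $v\in\Tilde K$ and a countable dense family of predictable sets $B$, each of which preserves admissibility in $\mathcal{D}$ by boundedness, one upgrades the inequality to $\delta(v)+\pi_t^{\mathsf{T}}v\ge \delta(\nu^*_t)+\pi_t^{\mathsf{T}}\nu^*_t$ for $dt\otimes d\mathbb{P}$-a.e.\ $(t,\omega)$ and every such $v$. Since $\delta=\sigma_{-K}$, Fenchel biconjugation gives $\inf_{v\in\Tilde K}[\delta(v)+\pi_t^{\mathsf{T}}v]=0$ if $\pi_t\in K$ and $-\infty$ otherwise, so the inequality above forces $\pi_t\in K$; combining $v=0$ (which gives $\delta(\nu^*_t)+\pi_t^{\mathsf{T}}\nu^*_t\le 0$) with the reverse inequality from $\pi_t\in K$ yields $\delta(\nu^*_t)+\pi_t^{\mathsf{T}}\nu^*_t=0$ almost surely.

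With both identities in place, the drift discrepancy between \eqref{eq:Xnu} (at $\nu=\nu^*$) and \eqref{dX} vanishes, so $X^{1,y,\pi}\equiv X^{\nu^*}$ as strictly positive processes with $X^{1,y,\pi}_\tau=X$; the remaining integrability requirement in $\mathcal{U}_0(1,y)$ reduces to the first-period term on $[0,\tau]$ and is handled by the positivity of $X$. The main obstacle is the localization step: passing from an integrated variational inequality indexed by $\tilde\nu\in\mathcal{D}$ to a pointwise constraint on $\pi_t$ requires a careful measurable-selection argument while keeping each perturbed $\tilde\nu$ in $\mathcal{D}$, i.e., preserving both $\mathbb{E}[\int_0^\tau\delta(\tilde\nu_t)dt]<\infty$ and the true-martingale property of the Dol\'eans-Dade exponential of $-\theta^{\tilde\nu}$, which is why the perturbations must be restricted to bounded elements of $\Tilde K$.
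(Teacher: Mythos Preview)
Your overall strategy matches the paper's: build $\pi$ by martingale representation in $\mathcal{M}_{\nu^*}$, then use the maximality of $(\nu^*,\eta^*)$ in \eqref{eq:pro4.5:e} to force $\pi_t\in K$ and $\delta(\nu^*_t)+\pi_t^{\mathsf{T}}\nu^*_t=0$. The drift you compute for $X^{\nu^*}Z^{\tilde\nu,\eta^*}/B^{\tilde\nu}$ is correct.

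The gap is exactly where you flag it. To turn the SDE into
\[
\mathbb{E}\!\left[\int_0^{\tau} X^{\nu^*}_t\frac{Z^{\tilde\nu,\eta^*}_t}{B^{\tilde\nu}_t}\bigl(\delta(\tilde\nu_t)+\pi_t^{\mathsf{T}}\tilde\nu_t-\delta(\nu^*_t)-\pi_t^{\mathsf{T}}\nu^*_t\bigr)\,dt\right]\ge 0
\]
you need the stochastic-integral part of $X^{\nu^*}Z^{\tilde\nu,\eta^*}/B^{\tilde\nu}$ to have expectation zero at $\tau$. ``Standard localization'' gives equality only up to a reducing stopping time, while the constraint \eqref{eq:pro4.5:e} lives at $\tau$; the drift has no a priori sign and there is no one-sided bound or uniform integrability available to pass to the limit. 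Your predictable-set perturbations do not resolve this: the weight $Z^{\tilde\nu,\eta^*}/B^{\tilde\nu}$ changes with $B$, and keeping $\tilde\nu\in\mathcal{D}$ (in particular the true-martingale property of $\mathcal{E}^{W_1}(-\theta^{\tilde\nu})$) is not automatic from boundedness of $v$ when $\nu^*$ itself is unbounded.

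The paper fills precisely this gap by a different device: it perturbs $\nu^*$ as $(\nu^*)^{(\nu,\epsilon,n)}_t=\nu^*_t+\epsilon(\nu_t-\nu^*_t)\mathbf{1}_{\{t\le\tau_n\}}$, where $\tau_n$ is a stopping time bounding $|L^\nu|$, $|N^\nu|$ and the relevant quadratic variations by $n$. This yields the uniform lower bound \eqref{eq:pro4.5:h}, namely $(Z^{(\nu^*)^{(\nu,\epsilon,n)},\eta^*}_\tau/B^{(\nu^*)^{(\nu,\epsilon,n)}}_\tau)\big/(Z^{\nu^*,\eta^*}_\tau/B^{\nu^*}_\tau)\ge e^{-3\epsilon n}$, which makes Fatou's lemma applicable to the difference quotient $(x(\nu^*)-x((\nu^*)^{(\nu,\epsilon,n)}))/\epsilon$ as $\epsilon\downarrow 0$. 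The limit is then identified by It\^o's formula for $\frac{X^{1,y,\pi}}{B^{\nu^*}}(L^\nu+N^\nu)$ under the measure $Z^{\nu^*,\eta^*}_{\tau_n}d\mathbb{P}$, producing \eqref{eq:pro4.5:f}. Taking $\nu=\nu^*+\varsigma$ gives $\pi_t^{\mathsf{T}}\varsigma_t+\delta(\varsigma_t)\ge 0$ for all $\varsigma\in\mathcal{D}$ (hence $\pi\in K$ by Theorem~13.1 of \cite{Rock70}), while $\nu\equiv 0$ gives the opposite inequality for $\varsigma=\nu^*$ and hence complementary slackness. This $\epsilon$-differentiation combined with the density-ratio bound is the technical core that your proposal identifies as the obstacle but does not supply.
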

The proof of Proposition \ref{prop4.5} is given in Appendix \ref{app:proof-prop45}.

We next consider the dual problem of the primal constrained optimization problem \eqref{A*.def}, which is given by
\begin{eqnarray}\label{dual.pro.nu}
\Tilde{V}(\lambda,y)=\inf_{\nu\in\mathcal{D},\eta\in\mathcal{H}}\mathbb{E}\left[\Phi_{h_A}(\lambda\frac{Z^{\nu,\eta}_{\tau}}{B^{\nu}_{\tau}},Y_{\tau})\right],\quad (\lambda,y)\in\mathbb{R}^+\times\mathbb{R}.
\end{eqnarray}

\begin{prop}\label{primaldualcons}
Let $(\nu^*,\eta^{*},\lambda^{*})\in \mathcal{D}\times\mathcal{H}\times\mathbb{R}_+$ and define
$X^{\nu^*,*}:=x_{h_A}^*(\lambda^{*}\frac{Z^{\nu^*,\eta^*}_{\tau}}{B^{\nu^*}_{\tau}},Y_{\tau})$.
If the following inequality holds
\begin{eqnarray}\label{E=1.nu}
\mathbb{E}\left[X^{\nu^*,*}\frac{Z^{\nu,\eta}_{\tau}}{B^{\nu}_{\tau}}\right]\leq
\mathbb{E}\left[X^{\nu^*,*}\frac{Z^{\nu^*,\eta^*}_{\tau}}{B^{\nu^*}_{\tau}}\right]=1,\quad \text{for any }(\nu,\eta)\in\mathcal{D}\times\mathcal{H},
\end{eqnarray}
then there exists a portfolio process $\pi$ such that the resulting wealth process $X^{1,y,\pi}$ given by \eqref{eq:X:process} and \eqref{eq.4.33.w} with $\nu=\nu^{*}$ 
is the optimal solution to the primal problem \eqref{A*.def}. Moreover, $\lambda^{*}=\arg\min_{\lambda>0}(\Tilde{V}(\lambda,y)+\lambda)$ and $(\nu^*,\eta^{*})$ solves the dual problem \eqref{dual.pro.nu} with $\lambda=\lambda^{*}$. In particular, there is no duality gap.
Conversely, if
$\lambda^{*}=\arg\min_{\lambda>0}(\Tilde{V}(\lambda,y)+\lambda)$ and $(\nu^*,\eta^{*})$ solves the dual problem \eqref{dual.pro.nu} with $\lambda=\lambda^{*}$, then 
\eqref{E=1.nu} holds.
\end{prop}
\begin{proof}
Assume that \eqref{E=1.nu} holds. By Proposition \ref{prop2.1} with $\nu=\nu^*$, it follows that $(\eta^*,\lambda^*)$ solves the dual problem \eqref{dual.pro} with $\nu=\nu^*$. Furthermore, Proposition \ref{prop4.5} guarantees the existence of a portfolio $\pi\in K$ satisfying  $\delta(\nu^*_t)+\pi^{\mathsf{T}}_t\nu^*_{t}=0,\,a.e.$. The corresponding wealth  $X^{1,y,\pi}\in\mathcal{U}_0(1,y)$ satisfies \eqref{eq:X:process} and \eqref{eq.4.33.w} with $\nu=\nu^*$.
Consequently, invoking the first assertion of Proposition \ref{prop4.4}, we conclude that $X^{1,y,\pi}\in\mathcal{U}_0(1,y)$
is the optimal solution to the primal problem \eqref{A*.def}.

Recall that $X^{\nu^*,*}$ is the optimal solution to the primal problem \eqref{problem3} with $\nu=\nu^*$ (see Proposition \ref{prop2.1}).
Therefore, combining \eqref{E<h}, \eqref{E=1.nu}, and the established result that $X^{1,y,\pi}$ solves the primal problem \eqref{A*.def}, we obtain
\begin{eqnarray}\label{dual.gap.nu}
\hspace{-0.3cm}&&\hspace{-0.3cm}
\inf_{\nu\in\mathcal{D},\eta\in\mathcal{H},\lambda>0}L_{\nu}(\eta,\lambda)
\nonumber\\
\hspace{-0.3cm}&=&\hspace{-0.3cm}
\inf_{\nu\in\mathcal{D},\eta\in\mathcal{H},\lambda>0}\left\{\mathbb{E}\left[h_A(x^*_{h_A}(\lambda\frac{Z^{\nu,\eta}_{\tau}}{B^{\nu}_{\tau}},Y_{\tau}),Y_{\tau})
-\lambda x^*_{h_A}(\lambda\frac{Z^{\nu,\eta}_{\tau}}{B^{\nu}_{\tau}},Y_{\tau})\frac{Z^{\nu,\eta}_{\tau}}{B^{\nu}_{\tau}}\right]+\lambda\right\}
\nonumber\\
\hspace{-0.3cm}&\leq&\hspace{-0.3cm}
\mathbb{E}\left[h_A(x^*_{h_A}(\lambda^*\frac{Z^{\nu^*,\eta^*}_{\tau}}{B^{\nu^*}_{\tau}},Y_{\tau}),Y_{\tau})
-\lambda^* x^*_{h_A}(\lambda^*\frac{Z^{\nu^*,\eta^*}_{\tau}}{B^{\nu^*}_{\tau}},Y_{\tau})\frac{Z^{\nu^*,\eta^*}_{\tau}}{B^{\nu^*}_{\tau}}\right]+\lambda^*
\nonumber\\
\hspace{-0.3cm}&=&\hspace{-0.3cm}
\mathbb{E}\left[\frac{1}{\alpha}\left(x^*_{h_A}(\lambda^*\frac{Z^{\nu^*,\eta^*}_{\tau}}{B^{\nu^*}_{\tau}},Y_{\tau})\right)^{\alpha}{h(Y_{\tau})}+{\frac{1}{\alpha}}A(Y_{\tau})\left(x^*_{h_A}(\lambda^*\frac{Z^{\nu^*,\eta^*}_{\tau}}{B^{\nu^*}_{\tau}},Y_{\tau})\right)^{\alpha(1-\gamma)}\right]
\nonumber\\
\hspace{-0.3cm}&=&\hspace{-0.3cm}
\sup_{X^{{\nu^*}}\in\Tilde{\mathcal{U}}^{\nu^*}_{0,\tau}(1,y)}\mathbb{E}\left[\frac{1}{\alpha}(X^{{\nu^*}})^{\alpha}{h(Y_{\tau})}+{\frac{1}{\alpha}}A(Y_{\tau})(X^{{\nu^*}})^{\alpha(1-\gamma)}\right]
\nonumber\\
\hspace{-0.3cm}&= &\hspace{-0.3cm}
\sup_{X\in{\mathcal{U}}_{0}(1,y)}\mathbb{E}\left[\frac{1}{\alpha}X_{\tau}^{\alpha}{h(Y_{\tau})}+{\frac{1}{\alpha}}A(Y_{\tau})X_{\tau}^{\alpha(1-\gamma)}\right].
\end{eqnarray}
Invoking again the optimality of $X^{1,y,\pi}$ for the primal problem \eqref{A*.def}, we have
\begin{eqnarray}
\label{eq.5.25.w}
\inf_{\nu\in\mathcal{D},\eta\in\mathcal{H},\lambda>0}L_{\nu}(\eta,\lambda)
\hspace{-0.3cm}&=&\hspace{-0.3cm}
\inf_{\nu\in\mathcal{D},\eta\in\mathcal{H},\lambda>0}\left\{\sup_{X\in\mathcal{F}_{\tau}^+}\left\{\mathbb{E}\left[h_A(X,Y_{\tau})-\lambda\frac{Z_{\tau}^{\nu,\eta}}{B^{\nu}_{\tau}}X\right]\right\}+\lambda\right\}
\nonumber\\
\hspace{-0.3cm}&\geq&\hspace{-0.3cm}
\inf_{\nu\in\mathcal{D}}\sup_{X^{{\nu}}\in\Tilde{\mathcal{U}}^{\nu}_{0,\tau}(1,y)}\mathbb{E}\left[h_A(X^{{\nu}},Y_{\tau})\right]
\nonumber\\
\hspace{-0.3cm}&\geq&\hspace{-0.3cm}
\inf_{\nu\in\mathcal{D}}\mathbb{E}\left[h_A(X^{\nu,*},Y_{\tau})\right]\quad (\text{hint: \eqref{E=1.nu}\,$\Rightarrow\, X^{\nu,*}\in \Tilde{\mathcal{U}}^{\nu}_{0,\tau}(1,y)$ for any $\nu\in\mathcal{D}$})
\nonumber\\
\hspace{-0.3cm}&=&\hspace{-0.3cm} \inf_{\nu\in\mathcal{D}}\mathbb{E}[h_A(X^{1,y,\pi}_{\tau},Y_{\tau})]=\mathbb{E}[h_A(X^{1,y,\pi}_{\tau},Y_{\tau})]
\nonumber\\
\hspace{-0.3cm}&=&\hspace{-0.3cm}
\sup_{X\in{\mathcal{U}}_{0}(1,y)}\mathbb{E}\left[h_A(X_{\tau},Y_{\tau})\right].
\end{eqnarray}
Combining \eqref{dual.gap.nu} and \eqref{eq.5.25.w} yields
\begin{align}
\left\{
\begin{array}{ll}
\inf\limits_{\nu\in\mathcal{D},\eta\in\mathcal{H},\lambda>0}L_{\nu}(\eta,\lambda)=\sup\limits_{X\in{\mathcal{U}}_{0}(1,y)}\mathbb{E}\left[h_A(X_{\tau},Y_{\tau})\right],
\\
L_{\nu^*}(\eta^*,\lambda^*)\leq L_{\nu^*}(\eta^*,\lambda),\quad \lambda>0,
\\
L_{\nu^*}(\eta^*,\lambda^*)\leq L_{\nu}(\eta,\lambda^*),\quad \nu\in\mathcal{D},\,\eta\in\mathcal{H},
\\
L_{\nu^*}(\eta^*,\lambda^*)=\inf\limits_{\nu\in\mathcal{D},\eta\in\mathcal{H}}L_{\nu}(\eta,\lambda^*)=\Tilde{V}(\lambda^*,y)+\lambda^*,
\\
L_{\nu^*}(\eta^*,\lambda^*)=
\inf\limits_{\nu\in\mathcal{D},\eta\in\mathcal{H},\lambda>0}L_{\nu}(\eta,\lambda)=\inf\limits_{\lambda>0}\inf\limits_{\nu\in\mathcal{D},\eta\in\mathcal{H}}L_{\nu}(\eta,\lambda)=\inf\limits_{\lambda>0}(\Tilde{V}(\lambda,y)+\lambda). &
\end{array}
\right.
\nonumber
\end{align}
This implies that there is no duality gap, $\lambda^*=\arg\min_{\lambda>0}L_{\nu^*}(\eta^*,\lambda)=\arg\min_{\lambda>0}(\Tilde{V}(\lambda,y)+\lambda)$, and $(\nu^*,\eta^*)$ solves the dual problem \eqref{dual.pro.nu} with $\lambda=\lambda^*$. Thus, the first claim is proved.

Assume that $\lambda^{*}=\arg\min_{\lambda>0}(\Tilde{V}(\lambda,y)+\lambda)$ and $(\nu^*,\eta^{*})$ solves the dual problem \eqref{dual.pro.nu} with $\lambda=\lambda^{*}$. Then we have
\begin{align}
L_{\nu^{*}}(\eta^{*},\lambda^{*})\geq& \inf\limits_{\eta\in\mathcal{H},\lambda>0}L_{\nu^{*}}(\eta,\lambda)
\nonumber\\
\geq & \inf\limits_{\nu\in\mathcal{D},\eta\in\mathcal{H},\lambda>0}L_{\nu}(\eta,\lambda)
\nonumber\\
=&
\inf\limits_{\lambda>0}\inf\limits_{\nu\in\mathcal{D},\eta\in\mathcal{H}}L_{\nu}(\eta,\lambda)
\nonumber\\
=&\inf\limits_{\lambda>0}(\Tilde{V}(\lambda,y)+\lambda)
\nonumber\\
=&\Tilde{V}(\lambda^{*},y)+\lambda^{*}
\nonumber\\
=&\inf_{\nu\in\mathcal{D},\eta\in\mathcal{H}}\mathbb{E}\left[\Phi_{h_A}(\lambda^{*}\frac{Z^{\nu,\eta}_{\tau}}{B^{\nu}_{\tau}},Y_{\tau})\right]+\lambda^{*}
\nonumber\\
=&\mathbb{E}\left[\Phi_{h_A}(\lambda^{*}\frac{Z^{\nu^{*},\eta^{*}}_{\tau}}{B^{\nu^{*}}_{\tau}},Y_{\tau})\right]+\lambda^{*}
\nonumber\\
=&L_{\nu^{*}}(\eta^{*},\lambda^{*}).\nonumber
\end{align}
This implies that $(\eta^{*},\lambda^{*})$ is the optimal solution of the problem \eqref{dual.pro} with $\nu=\nu^*$. By Lemma \ref{lemma4.1} with $\nu=\nu^*$, there exists a progressively measurable process $\pi^{\nu^*}$ with $\mathbb{E}[\int_0^{\tau}\|\pi^{\nu^*}_t\|^2dt]<\infty$ such that the resulting wealth $X^{1,y,\pi^{\nu^*},\nu^*}$ satisfies \eqref{eq:X:process} and \eqref{eq.4.33.w} with $\nu=\nu^*$.
We claim that $\pi^{\nu^*}$ satisfies 
\begin{eqnarray}\label{eq:pro4.6:f}
\pi^{\nu^*}_t\in K,\quad\text{and}\quad\delta(\nu_t^*)+(\pi^{\nu^*}_t)^{\mathsf{T}}\nu^*_t=0,\,a.e..
\end{eqnarray}
Recall the notations $(\nu^*)_t^{(\nu,\epsilon,n)}$, $\hat{\delta}^{\nu}(\nu^*_t)$, $L^{\nu}_t$, $N^{\nu}_t$ and $\tau_n$ defined in the proof of Proposition \ref{prop4.5}. 
Using \eqref{eq:pro4.5:h}, the decreasing property of $x^*_{h_A}$ in its first argument, and the fact that $(\nu^*,\eta^{*})$ solves the dual problem \eqref{dual.pro.nu} with $\lambda=\lambda^{*}$, we obtain
\begin{eqnarray}\label{eq:pro4.6:b}
0\leq \Lambda^{(n)}_{\epsilon}
\hspace{-0.3cm}&:=&\hspace{-0.3cm}
\frac{1}{\epsilon}\left[\Phi_{h_A}(\lambda^*\frac{Z^{(\nu^*)^{(\nu,\epsilon,n)},\eta^*}_{\tau}}{B^{(\nu^*)^{(\nu,\epsilon,n)}}_{\tau}},Y_{\tau})-\Phi_{h_A}(\lambda^*\frac{Z^{\nu^*,\eta^*}_{\tau}}{B^{\nu^*}_{\tau}},Y_{\tau})\right]
\nonumber\\
\hspace{-0.3cm}&=&\hspace{-0.3cm}
\lambda^*\frac{\partial}{\partial x}\Phi_{h_A}(\lambda^*\left[w\frac{Z^{\nu^*,\eta^*}_{\tau}}{B^{\nu^*}_{\tau}}+(1-w)\frac{Z^{(\nu^*)^{(\nu,\epsilon,n)},\eta^*}_{\tau}}{B^{(\nu^*)^{(\nu,\epsilon,n)}}_{\tau}}\right],Y_{\tau})\frac{Z^{\nu^*,\eta^*}_{\tau}}{\epsilon B^{\nu^*}_{\tau}}
\quad \quad  (\text{for some $w\in(0,1)$})
\nonumber\\
\hspace{-0.3cm}&&\hspace{-0.3cm}
\times
\left(\frac{{Z^{(\nu^*)^{(\nu,\epsilon,n)},\eta^*}_{\tau}}/{B^{(\nu^*)^{(\nu,\epsilon,n)}}_{\tau}}}{{Z^{\nu^*,\eta^*}_{\tau}}/{B^{\nu^*}_{\tau}}}-1\right)
\nonumber\\
\hspace{-0.3cm}&=&\hspace{-0.3cm}
\lambda^* x^*_{h_A}(\lambda^*\frac{Z^{\nu^*,\eta^*}_{\tau}}{B^{\nu^*}_{\tau}}\left[w+(1-w)\frac{{Z^{(\nu^*)^{(\nu,\epsilon,n)},\eta^*}_{\tau}}/{B^{(\nu^*)^{(\nu,\epsilon,n)}}_{\tau}}}{{Z^{\nu^*,\eta^*}_{\tau}}/{B^{\nu^*}_{\tau}}}\right],Y_{\tau})\frac{Z^{\nu^*,\eta^*}_{\tau}}{\epsilon B^{\nu^*}_{\tau}}
\nonumber\\
\hspace{-0.3cm}&&\hspace{-0.3cm}
\times\left(1-\frac{{Z^{(\nu^*)^{(\nu,\epsilon,n)},\eta^*}_{\tau}}/{B^{(\nu^*)^{(\nu,\epsilon,n)}}_{\tau}}}{{Z^{\nu^*,\eta^*}_{\tau}}/{B^{\nu^*}_{\tau}}}\right)
\nonumber\\
\hspace{-0.3cm}&\leq&\hspace{-0.3cm}
\lambda^* \frac{Z^{\nu^*,\eta^*}_{\tau}}{ B^{\nu^*}_{\tau}}\sup_{0<\epsilon<1}\frac{1-e^{-3\epsilon n}}{\epsilon}x^*_{h_A}(\lambda^*\frac{Z^{\nu^*,\eta^*}_{\tau}}{B^{\nu^*}_{\tau}}\left[w+(1-w)\frac{{Z^{(\nu^*)^{(\nu,\epsilon,n)},\eta^*}_{\tau}}/{B^{(\nu^*)^{(\nu,\epsilon,n)}}_{\tau}}}{{Z^{\nu^*,\eta^*}_{\tau}}/{B^{\nu^*}_{\tau}}}\right],Y_{\tau})
\nonumber\\
\hspace{-0.3cm}&\leq &\hspace{-0.3cm}
\lambda^* \frac{Z^{\nu^*,\eta^*}_{\tau}}{ B^{\nu^*}_{\tau}}\sup_{0<\epsilon<1}\frac{1-e^{-3\epsilon n}}{\epsilon}x^*_{h_A}(\lambda^*\frac{Z^{\nu^*,\eta^*}_{\tau}}{B^{\nu^*}_{\tau}}\left[w+(1-w)e^{-3\epsilon n}\right],Y_{\tau})
\nonumber\\
\hspace{-0.3cm}&\leq &\hspace{-0.3cm}
\lambda^* \frac{Z^{\nu^*,\eta^*}_{\tau}}{ B^{\nu^*}_{\tau}}\sup_{0<\epsilon<1}\frac{1-e^{-3\epsilon n}}{\epsilon}x^*_{h_A}(\lambda^*\frac{Z^{\nu^*,\eta^*}_{\tau}}{B^{\nu^*}_{\tau}}e^{-3\epsilon n},Y_{\tau}),\quad 0<\epsilon<1.
\end{eqnarray}
Therefore, Fatou's lemma implies
\begin{eqnarray}\label{eq:pro4.6:c}
0\leq \limsup_{\epsilon\downarrow0}\mathbb{E}\left[\Lambda^{(n)}_{\epsilon}\right]\leq \mathbb{E}\left[\limsup_{\epsilon\downarrow0}\Lambda^{(n)}_{\epsilon}\right].
\end{eqnarray}
On the other hand, by \eqref{eq:pro4.5:h} and \eqref{eq:pro4.6:b}, we also have 
\begin{eqnarray}\label{eq:pro4.6:d}
\Lambda^{(n)}_{\epsilon}
\hspace{-0.3cm}&\leq&\hspace{-0.3cm}
\lambda^* x^*_{h_A}(\lambda^*\frac{Z^{\nu^*,\eta^*}_{\tau}}{B^{\nu^*}_{\tau}}\left[w+(1-w)\frac{{Z^{(\nu^*)^{(\nu,\epsilon,n)},\eta^*}_{\tau}}/{B^{(\nu^*)^{(\nu,\epsilon,n)}}_{\tau}}}{{Z^{\nu^*,\eta^*}_{\tau}}/{B^{\nu^*}_{\tau}}}\right],Y_{\tau})\frac{Z^{\nu^*,\eta^*}_{\tau}}{B^{\nu^*}_{\tau}}
\frac{1-\frac{Z^{(\nu^*)^{(\nu,\epsilon,n)},\eta}_{\tau}/B^{(\nu^*)^{(\nu,\epsilon,n)}}_{\tau}}{Z^{\nu,\eta}_{\tau}/B^{\nu}_{\tau}}}{\epsilon}
\nonumber\\
\hspace{-0.3cm}&\leq&\hspace{-0.3cm}
\lambda^* x^*_{h_A}(\lambda^*\frac{Z^{\nu^*,\eta^*}_{\tau}}{B^{\nu^*}_{\tau}}\left[w+(1-w)\frac{{Z^{(\nu^*)^{(\nu,\epsilon,n)},\eta^*}_{\tau}}/{B^{(\nu^*)^{(\nu,\epsilon,n)}}_{\tau}}}{{Z^{\nu^*,\eta^*}_{\tau}}/{B^{\nu^*}_{\tau}}}\right],Y_{\tau})\frac{Z^{\nu^*,\eta^*}_{\tau}}{ B^{\nu^*}_{\tau}}\Tilde{\Lambda}^{(n,\epsilon)}_{\tau},
\end{eqnarray}
where
$$\Tilde{\Lambda}^{(n,\epsilon)}_{t}:=\frac{1}{\epsilon}\left(1-\exp\left(-\epsilon(L^{\nu}_{t\wedge\tau_n}+N^{\nu}_{t\wedge\tau_n})-\frac{\epsilon^2}{2}\int_0^{t\wedge\tau_n}\|\sigma^{-1}(Y_s)(\nu_s-\nu^*_s)\|^2ds\right)\right).$$
It is straightforward to verify that $\Tilde{\Lambda}^{(n,\epsilon)}_{t}\rightarrow L^{\nu}_{t\wedge\tau_n}+N^{\nu}_{t\wedge\tau_n}$ and $
\frac{Z^{(\nu^*)^{(\nu,\epsilon,n)},\eta^*}_{t}/B^{(\nu^*)^{(\nu,\epsilon,n)}}_{t}}{Z^{\nu^*,\eta^*}_{t}/B^{\nu^*}_{t}}\rightarrow 1$ a.s. as $\epsilon\downarrow0$.
Hence, by the continuity of $x^*_{h_A}$ in its first argument, we have
\begin{eqnarray}\label{eq:pro4.6:e}
\hspace{-0.3cm}&&\hspace{-0.3cm}
\lim_{\epsilon\downarrow0}\lambda^* x^*_{h_A}(\lambda^*\frac{Z^{\nu^*,\eta^*}_{\tau}}{B^{\nu^*}_{\tau}}\left[w+(1-w)\frac{{Z^{(\nu^*)^{(\nu,\epsilon,n)},\eta^*}_{\tau}}/{B^{(\nu^*)^{(\nu,\epsilon,n)}}_{\tau}}}{{Z^{\nu^*,\eta^*}_{\tau}}/{B^{\nu^*}_{\tau}}}\right],Y_{\tau})\frac{Z^{\nu^*,\eta^*}_{\tau}}{ B^{\nu^*}_{\tau}}\Tilde{\Lambda}^{(n,\epsilon)}_{\tau}
\nonumber\\
\hspace{-0.3cm}&=&\hspace{-0.3cm}
\lambda^* x^*_{h_A}(\lambda^*\frac{Z^{\nu^*,\eta^*}_{\tau}}{B^{\nu^*}_{\tau}},Y_{\tau})\frac{Z^{\nu^*,\eta^*}_{\tau}}{ B^{\nu^*}_{\tau}}(L^{\nu}_{\tau_n}+N^{\nu}_{\tau_n}),\quad a.s..
\end{eqnarray}
Combining \eqref{eq:pro4.6:c}-\eqref{eq:pro4.6:e}, we deduce that
$$0\leq \limsup_{\epsilon\downarrow0}\mathbb{E}\left[\Lambda^{(n)}_{\epsilon}\right]\leq \mathbb{E}\left[\limsup_{\epsilon\downarrow0}\Lambda^{(n)}_{\epsilon}\right]\leq\mathbb{E}\left[\lambda^* x^*_{h_A}(\lambda^*\frac{Z^{\nu^*,\eta^*}_{\tau}}{B^{\nu^*}_{\tau}},Y_{\tau})\frac{Z^{\nu^*,\eta^*}_{\tau}}{ B^{\nu^*}_{\tau}}(L^{\nu}_{\tau_n}+N^{\nu}_{\tau_n})\right].$$
Repeating the same arguments following equation \eqref{eq:pro4.5:b} in Proposition \ref{prop4.5} with $X$ replaced by $x^*_{h_A}(\lambda^*\frac{Z^{\nu^*,\eta^*}_{\tau}}{B^{\nu^*}_{\tau}})$ and $x(\nu)$ replaced by $\mathbb{E}\left[\Phi_{h_A}(\lambda^*\frac{Z^{\nu,\eta^*}_{\tau}}{B^{\nu}_{\tau}},Y_{\tau})\right]$, establishes \eqref{eq:pro4.6:f}. 
By \eqref{supermaringale}, the process $X^{1,y,\pi^{\nu^*},\nu}Z^{\nu,\eta}/B^{\nu}$ is a $\mathbb{P}$-supermartingale, implying
\begin{align}
\label{5.31.w}
\mathbb{E}\left[X^{1,y,\pi^{\nu^*},\nu}_{\tau}Z^{\nu,\eta}_{\tau}/B^{\nu}_{\tau}\right]\leq 1.
\end{align}
Recalling the dynamics 
\begin{align}
\left\{
\begin{array}{ll}
dX^{1,y,\pi^{\nu^*},\nu}_t 
=\left[r(Y_t)+(\pi^{\nu^*}_t)^{\mathsf{T}}(\mu(Y_t)-r(Y_t)\mathbf{1})+\left(\delta(\nu_t)+(\pi^{\nu^*}_t)^{\mathsf{T}}\nu_t\right)\right]X^{1,y,\pi^{\nu^*},\nu}_tdt
\\
\quad \quad \quad \quad \quad \,\,\,\,\,\,
+X^{1,y,\pi^{\nu^*},\nu}_t(\pi_t^{\nu^*})^{\mathsf{T}}\sigma(Y_t)dW_{1t},
\\
X^{1,y,\pi^{\nu^*},\nu}_0=1,&
\end{array}
\right.
\nonumber
\end{align}
and
\begin{align}
\left\{
\begin{array}{ll}
dX^{1,y,\pi^{\nu^*},\nu^*}_t =\left[r(Y_t)+(\pi^{\nu^*}_t)^{\mathsf{T}}(\mu(Y_t)-r(Y_t)\mathbf{1})\right]X^{1,y,\pi^{\nu^*},\nu^*}_tdt
\\
\quad \quad \quad \quad \quad \,\,\,\,\,\,
+X^{1,y,\pi^{\nu^*},\nu^*}_t(\pi_t^{\nu^*})^{\mathsf{T}}\sigma(Y_t)dW_{1t},
\\
X^{1,y,\pi^{\nu^*},\nu^*}_0=1,\quad X^{1,y,\pi^{\nu^*},\nu^*}_{\tau}=X^{*},&
\end{array}
\right.
\nonumber
\end{align}
and using the fact that $\delta(\nu_t)+(\pi^{\nu^*}_t)^{\mathsf{T}}\nu_t\geq 0$, we obtain
\begin{align}
\label{5.32.w}
    X^{\nu^*,*}=x^*_{h_A}(\lambda^*\frac{Z^{\nu^*,\eta^*}_{\tau}}{B^{\nu^*}_{\tau}},Y_{\tau})=X^{1,y,\pi^{\nu^*},\nu^*}_{\tau}\leq X^{1,y,\pi^{\nu^*},\nu}_{\tau}.
\end{align}
Combining \eqref{5.31.w} and \eqref{5.32.w} yields
\begin{align}
\label{5.33.w}
\mathbb{E}\left[X^{\nu^*,*}Z^{\nu,\eta}_{\tau}/B^{\nu}_{\tau}\right]\leq 1.
\end{align}
Finally, applying the same argument as in the proof of Proposition \ref{prop2.1}, we have $\mathbb{E}\left[X^{\nu^*,*}\frac{Z^{\nu^*,\eta^*}_{\tau}}{B^{\nu^*}_{\tau}}\right]=1$. Together with \eqref{5.33.w}, this establishes \eqref{E=1.nu}.
\end{proof}

The next result states the existence of the solution to the dual problem \eqref{dual.pro.nu}. Its proof is analogous to that of Proposition \ref{existence.1} and \ref{existence.2}, and it is hence omitted.

\begin{prop}\label{prop5.4}
For any $(\lambda,y)\in\mathbb{R}_+\times\mathbb{R}$, there exists $(\nu^*,\eta^*)\in\mathcal{D}\times\mathcal{H}$ that solves the dual problem \eqref{dual.pro.nu}, that is
$$\Tilde{V}(\lambda,y)=\mathbb{E}\left[\Phi_{h_A}(\lambda\frac{Z^{\nu^*,\eta^*}_{\tau}}{B^{\nu^*}_{\tau}},Y_{\tau})\right].$$
\end{prop}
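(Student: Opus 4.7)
My plan is to adapt the arguments of Propositions~\ref{existence.1} and \ref{existence.2} to joint optimization of
\[
F_\lambda(\nu,\eta):=\mathbb{E}\left[\Phi_{h_A}\!\left(\lambda\frac{Z^{\nu,\eta}_\tau}{B^\nu_\tau},Y_\tau\right)\right]
\]
over the convex set $(\nu,\eta)\in\mathcal{D}\times\mathcal{H}$. The starting observation is that, by \eqref{DD.exp}, the log-likelihood $\log(\lambda Z^{\nu,\eta}_\tau/B^\nu_\tau)$ is jointly concave in $(\nu,\eta)$ pathwise: the term $-\int_0^\tau\delta(\nu_s)ds$ is concave by convexity of the support function $\delta(\cdot|K)$; because $\theta^\nu=\sigma^{-1}(\mu-r\mathbf{1}+\nu)$ is affine in $\nu$, the stochastic integral $-\int_0^\tau\theta^\nu dW_{1s}$ is affine and the compensator $-\tfrac12\int_0^\tau\|\theta^\nu\|^2 ds$ is a concave quadratic in $\nu$; and the $\eta$-dependent pieces $\int_0^\tau\eta_s dW_{2s}$ and $-\tfrac12\int_0^\tau\eta_s^2ds$ are affine and concave quadratic in $\eta$, respectively.

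For $\alpha\in(0,1)$, Lemma~12.6 of \cite{KL91}---already invoked in Proposition~\ref{existence.1}---shows that $z\mapsto\Phi_{h_A}(e^z,y)$ is decreasing and convex, and since the composition of a decreasing convex function with a concave one is convex, the integrand is jointly convex in $(\nu,\eta)$ pathwise. Thus $F_\lambda$ is convex on $\mathcal{D}\times\mathcal{H}$, and existence of a minimizer follows by the scheme of Theorem~12.3 in \cite{KL91}: pick a minimizing sequence $(\nu^n,\eta^n)$, use a Koml\'os-type argument to extract convex combinations converging $d\mathbb{P}\otimes dt$-a.e.\ to some $(\nu^*,\eta^*)$, and pass to the limit using lower semicontinuity from joint convexity together with Fatou's lemma. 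The growth bound \eqref{0<h}, the hypothesis $\|\theta(\cdot)\|^2\leq M_0$, and $\delta(\cdot)\geq\delta_0$ supply the uniform $L^1$-control on the negative parts needed by Fatou.

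For $\alpha\in(-\infty,0)$, the map $z\mapsto\Phi_{h_A}(e^z,y)$ is concave rather than convex---exactly the obstruction handled in Proposition~\ref{existence.2}. I would mirror its auxiliary-market strategy and rewrite the inner $\eta$-infimum as the artificial primal \eqref{ar.primal} via \eqref{connection}, then pass to its dual \eqref{ar.dual}. The integrand of the reformulated dual, namely
\[
(\nu,\varrho)\mapsto -h_A\!\left(z\,\frac{B^\nu_\tau\,\mathcal{E}^{W_2}_\tau(\varrho)}{\lambda\,\mathcal{E}^{W_1}_\tau(-\theta^\nu)},Y_\tau\right),
\]
is convex in $(\nu,\varrho)$ by the same log-decomposition used above, because the strict decrease and convexity of $x\mapsto -h_A(e^x,y)$ for $\alpha<0$ (noted in the proof of Proposition~\ref{existence.2}) play the role of the decreasing convexity of $\Phi_{h_A}(e^\cdot,y)$. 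Applying the Koml\'os--Fatou scheme produces a minimizer $(\nu^*,\varrho^*)$, from which duality in the artificial market returns the desired $(\nu^*,\eta^*)$ for \eqref{dual.pro.nu}.

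The principal difficulty I anticipate is preserving admissibility under the convex-combination extraction: guaranteeing that $\nu^*\in\mathcal{D}$ (so that $\mathcal{E}^{W_1}_\cdot(-\theta^{\nu^*})$ remains a true martingale and $\mathbb{E}[\int_0^\tau\delta(\nu^*_s)ds]<\infty$) and that $\eta^*\in\mathcal{H}$. Uniform $L^2$ control along the minimizing sequence inherited from the minimizing property, combined with the bounds on $\theta$ and $\delta$ (feeding Novikov/Kazamaki-type estimates) and Fatou's lemma applied to the convex integrand $\int_0^\tau\delta(\nu_s)ds$, should deliver this closure and complete the proof.
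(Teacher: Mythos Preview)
For $\alpha\in(0,1)$ your plan is correct and is precisely what the paper intends by ``analogous to Proposition~\ref{existence.1}'': the pathwise joint concavity of $\log(\lambda Z^{\nu,\eta}_\tau/B^\nu_\tau)$ in $(\nu,\eta)$, composed with the decreasing convex map $z\mapsto\Phi_{h_A}(e^z,y)$, yields joint convexity of $F_\lambda$ on $\mathcal{D}\times\mathcal{H}$, and the Koml\'os--Fatou scheme of Theorem~12.3 in \cite{KL91} then carries over to the product space.

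For $\alpha<0$, however, your extension of Proposition~\ref{existence.2} to the joint variable has a genuine gap. The assertion that the reformulated artificial-dual integrand is jointly convex in $(\nu,\varrho)$ ``by the same log-decomposition'' is incorrect: in the argument of \eqref{ar.dual} the factors $B^\nu_\tau$ and $\mathcal{E}^{W_1}_\tau(-\theta^\nu)$ have swapped numerator and denominator relative to $Z^{\nu,\eta}_\tau/B^\nu_\tau$, so
\[
\log\!\left(\frac{zB^\nu_\tau\,\mathcal{E}^{W_2}_\tau(\varrho)}{\lambda\,\mathcal{E}^{W_1}_\tau(-\theta^\nu)}\right)
=\text{const}+\int_0^\tau\delta(\nu_s)\,ds+\tfrac12\int_0^\tau\|\theta^\nu(Y_s)\|^2\,ds+\text{affine}(\nu)+\big(\text{concave in }\varrho\big)
\]
is \emph{convex} in $\nu$ while concave in $\varrho$. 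Composing the decreasing convex $x\mapsto -h_A(e^x,y)$ with such a saddle function yields convexity in $\varrho$ (decreasing convex $\circ$ concave) but no conclusion in the $\nu$-direction (decreasing convex $\circ$ convex is in general neither). Consequently the joint Koml\'os--Fatou argument in $(\nu,\varrho)$ is not justified by the composition rule you invoke. The paper itself omits all detail at this point, so to close the $\alpha<0$ case you will need a different device for the $\nu$-direction---for instance, enlarging the fictitious market of Proposition~\ref{existence.2} so that the pair $(\nu,\eta)$ jointly parametrizes positive wealth processes (the SDE $d\hat X_t=\hat X_t[-(r(Y_t)+\delta(\nu_t))dt-\theta^\nu(Y_t)dW_{1t}+\eta_tdW_{2t}]$ shows $Z^{\nu,\eta}/B^\nu$ is wealth in a driftless $(n{+}1)$-stock market with consumption rate $\delta(\nu_t)$), after which the artificial primal is a standard concave consumption--investment problem and its dual recovers convexity in the combined control, as in \cite{CK92}.
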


}

\section{Main Results in Periodic Evaluation Problem}\label{sec:fix}

Finally, in this section, we address the existence of the optimal portfolio to the original periodic evaluation problem \eqref{problem} over an infinite horizon. As discussed in Section \ref{sec:auxiliary}, it amounts to first prove the existence of the unique fixed point to the operator in \eqref{A*.def} and then construct and verify the optimal portfolio over the infinite horizon using the optimal solution to the one-period constrained terminal wealth optimization problem.

We first show that there indeed exists a unique $A^*(\cdot)$ solving equation \eqref{A*.def} and we give the upper and lower bounds for the fixed-point $A^*(\cdot)$.

\begin{prop}\label{fixed.power}
Define a functional $\Psi: {C}_b^{+}(\mathbb{R})\mapsto {C}_b^{+}(\mathbb{R})$ as
$$\Psi(A):=\Psi(A;y):=\alpha\sup_{X\in\mathcal{U}_0(1,y)}\mathbb{E}\left[e^{-\rho\tau}\frac{1}{\alpha}X_{\tau}^{\alpha}h({Y_{\tau}})+e^{-\rho \tau}\frac{1}{\alpha}A(Y_{\tau})X_{\tau}^{\alpha(1-\gamma)}\right].$$
Then $\Psi$ is a contraction on the metric space $({C}_b^{+}(\mathbb{R}),d)$ with the metric $d$ defined by $d(x,y):=\sup_{t\in\mathbb{R}}|x(t)-y(t)|,$  $x,y\in {C}_b^{+}(\mathbb{R})$, and $\Psi$ admits a unique fixed-point $A^*(\cdot)$ such that $A^*(y)=\Psi(A^*;y)$, for all $y\in\mathbb{R}$.
Moreover, the unique fixed-point $A^*(\cdot)$ of $\Psi$ satisfies, when $\alpha\in(0,1)$,
    $$\frac{me^{(\underline{r}\alpha-\rho)\tau}}{(1-e^{-(\rho-\underline{r}\alpha(1-\gamma))\tau})}\leq A^*(y)\leq \frac{e^{(\zeta(\alpha)-\rho)\tau}}{(1-e^{-(\rho-\zeta(\alpha(1-\gamma)))\tau})},\quad y\in\mathbb{R};$$
    and when $\alpha\in(-\infty,0)$,
    $$\frac{m e^{(\zeta(\alpha)-\rho)\tau}}{(1-e^{-(\rho-\zeta(\alpha(1-\gamma)))\tau})}\leq A^*(y)\leq  
 \frac{e^{(\underline{r}\alpha-\rho)\tau}}{(1-e^{-(\rho-\underline{r}\alpha(1-\gamma))\tau})},\quad y\in\mathbb{R}.$$
\end{prop}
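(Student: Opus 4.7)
The plan is to apply the Banach fixed-point theorem to the operator
$$\Psi(A;y):=\alpha\sup_{X\in\mathcal{U}_0(1,y)}\mathbb{E}\left[e^{-\rho\tau}h_A(X_\tau,Y_\tau)\right],\qquad A\in C_b^+(\mathbb{R}),\ y\in\mathbb{R},$$
on the complete metric space $(C_b^+(\mathbb{R}),d)$. The work splits into three tasks: (i) showing $\Psi$ is a self-map, (ii) establishing the two explicit bounds, and (iii) proving the contraction estimate.

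For (i), nonnegativity follows from Lemma \ref{lem2.1}: for $\alpha\in(0,1)$, $h_A>0$, while for $\alpha<0$, $h_A<0$ and the outer $\alpha$ restores the sign. Boundedness is a consequence of Step (ii). Continuity of $y\mapsto\Psi(A;y)$ is the only delicate point: I would use the dual representation $X^*_\tau=x^*_{h_A}(\lambda^*Z^{\nu^*,\eta^*}_\tau/B^{\nu^*}_\tau,Y_\tau)$ supplied by Proposition \ref{primaldualcons}, combined with the continuous dependence of $Y_\tau$ on its initial condition coming from the Lipschitz hypothesis on $b,\beta$ in \eqref{SDE2.2} and a dominated-convergence argument using the polynomial bounds \eqref{0<h}--\eqref{0>h} on $h_A$.

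For (ii), I mimic Proposition \ref{pro:bound:V}. In one direction, I test $\Psi(A^*;y)$ against the all-bond admissible strategy $X_t=\exp\bigl(\int_0^t r(Y_s)\,ds\bigr)$, using $m\leq h\leq 1$ and $r\geq\underline{r}$. In the other direction, I split
$$\alpha\,\mathbb{E}\left[e^{-\rho\tau}h_A(X_\tau,Y_\tau)\right]=e^{-\rho\tau}\mathbb{E}\left[X_\tau^{\alpha}h(Y_\tau)\right]+e^{-\rho\tau}\mathbb{E}\left[A(Y_\tau)X_\tau^{\alpha(1-\gamma)}\right],$$
bounding each expectation over $X\in\mathcal{U}_0(1,y)\subset\mathcal{A}_0(1,y)$ by the dual bound from \cite{CH05} used in Proposition \ref{pro:bound:V}, namely $\sup_X\frac{1}{\beta}\mathbb{E}[X_\tau^\beta]\leq\frac{1}{\beta}e^{\zeta(\beta)\tau}$, applied with $\beta\in\{\alpha,\alpha(1-\gamma)\}$. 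Plugging these estimates back into the fixed-point identity $A^*=\Psi(A^*)$ and iterating produces the geometric series $(1-e^{(\zeta(\alpha(1-\gamma))-\rho)\tau})^{-1}$ that appears in the stated bounds; the two cases $\alpha\in(0,1)$ and $\alpha<0$ differ only in which direction of the inequality is produced by each test strategy.

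For (iii), the elementary bound $|\sup_X f(X)-\sup_X g(X)|\leq\sup_X|f(X)-g(X)|$ yields
$$|\Psi(A_1;y)-\Psi(A_2;y)|\leq e^{-\rho\tau}\sup_{X\in\mathcal{U}_0(1,y)}\mathbb{E}\left[\bigl|A_1(Y_\tau)-A_2(Y_\tau)\bigr|X_\tau^{\alpha(1-\gamma)}\right]\leq e^{(\zeta(\alpha(1-\gamma))-\rho)\tau}\,d(A_1,A_2),$$
where the last step is again the \cite{CH05} dual bound on $\sup_X\mathbb{E}[X_\tau^{\alpha(1-\gamma)}]$. Assumption \ref{ass1} enforces $\rho>\zeta(\alpha(1-\gamma))\vee 0$, so the prefactor is strictly less than one, and $\Psi$ is a strict contraction. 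Banach's fixed-point theorem then yields the unique $A^*\in C_b^+(\mathbb{R})$ with $A^*=\Psi(A^*)$, and the bounds on $A^*$ follow from Step (ii) applied to $A^*$ itself. The main obstacle I anticipate is precisely the continuity claim in Step (i), because the admissible family $\mathcal{U}_0(1,y)$ depends on the initial condition $y$ through $Y$; this is where the dual representation from Section \ref{sec:duality} becomes essential, as it transfers the $y$-dependence onto the terminal random variable $Y_\tau$ alone, where standard SDE flow estimates apply.
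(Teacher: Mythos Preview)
Your proposal is essentially the same as the paper's proof. Both approaches (i) use the all-bond strategy versus the \cite{CH05} dual estimate $\sup_X\frac{1}{\beta}\mathbb{E}[X_\tau^\beta]\leq\frac{1}{\beta}e^{\zeta(\beta)\tau}$ to obtain the two-sided bounds on the fixed point, and (ii) derive the contraction constant $e^{(\zeta(\alpha(1-\gamma))-\rho)\tau}<1$ from the same moment bound together with Assumption~\ref{ass1}. The only stylistic differences are: the paper plugs the optimizer $X_1^*$ (whose existence comes from Proposition~\ref{prop2.1}) into the second supremum, whereas you invoke the elementary inequality $|\sup_X f-\sup_X g|\leq\sup_X|f-g|$, which yields the same estimate without needing an optimizer; and you explicitly flag the continuity of $y\mapsto\Psi(A;y)$ as a point requiring justification via the dual representation and SDE-flow estimates, while the paper takes the self-map property $\Psi:C_b^+(\mathbb{R})\to C_b^+(\mathbb{R})$ as already established in the statement and does not address continuity in the proof.
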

\begin{proof}
Because the arguments for cases $\alpha\in(0,1)$ and $\alpha\in(-\infty,0)$ are similar, we shall only present the proof when $\alpha\in(0,1)$.
It is not difficult to verify that the metric space $(C^+_b(\mathbb{R}),d)$ is complete. Consider any $A_1,A_2\in C^+_b(\mathbb{R})$. By Propositions \ref{primaldualcons} and \ref{prop5.4}, there exists an optimizer $X^*_1\in{\mathcal{U}}_{0}(1,y)$ such that 
\begin{eqnarray}
\Psi(A_1) 
=\mathbb{E}\left[e^{-\rho\tau}(X^*_{1\tau})^{\alpha}h(Y_{\tau})+A_1(Y_{\tau})e^{-\rho\tau}(X^*_{1\tau})^{\alpha(1-\gamma)}\right],\nonumber
\end{eqnarray}
and, by the definition of metric $d$, there exists a point $y^*\in\mathbb{R}$ such that $d(\Psi(A_1),\Psi(A_2))=|\Psi(A_1;y^*)-\Psi(A_2;y^*)|$. Without loss of generality, we assume that $\Psi(A_1;y^*)\geq\Psi(A_2;y^*)$. Then, it holds that
\begin{eqnarray}
d(\Psi(A_1),\Psi(A_2))
\hspace{-0.3cm}&=&\hspace{-0.3cm}
\left[\mathbb{E}\left[e^{-\rho\tau}(X^*_{1\tau})^{\alpha}h(Y_{\tau})+A_1(Y_{\tau})e^{-\rho\tau}(X^*_{1\tau})^{\alpha(1-\gamma)}\right]\right.
\nonumber\\
\hspace{-0.3cm}&&\hspace{-0.3cm}
\left.-\alpha\sup_{X\in{\mathcal{U}}_{0}(1,y)}\mathbb{E}\left[e^{-\rho\tau}\frac{1}{\alpha}X^{\alpha}_{\tau}h(Y_{\tau})+A_2(Y_{\tau})e^{-\rho\tau}\frac{1}{\alpha}X_{\tau}^{\alpha(1-\gamma)}\right]\right]\Bigg|_{y=y^*}
\nonumber\\
\hspace{-0.3cm}&\leq&\hspace{-0.3cm}
\left[\mathbb{E}\left[e^{-\rho\tau}(X^*_{1\tau})^{\alpha}h(Y_{\tau})+A_1(Y_{\tau})e^{-\rho\tau}(X^*_{1\tau})^{\alpha(1-\gamma)}\right]\right.
\nonumber\\
\hspace{-0.3cm}&&\hspace{-0.3cm}
\left.-\mathbb{E}\left[e^{-\rho\tau}(X_{1\tau}^*)^{\alpha}h(Y_{\tau})+A_2(Y_{\tau})e^{-\rho\tau}(X^*_{1\tau})^{\alpha(1-\gamma)}\right]\right]\Big|_{y=y^*}
\nonumber\\
\hspace{-0.3cm}&\leq&\hspace{-0.3cm}
e^{-\rho\tau}\sup_{X\in{\mathcal{U}}_{0}(1,y)}\mathbb{E}\left[X_{\tau}^{\alpha(1-\gamma)}\right]\Big|_{y=y^*}\times d(A_1,A_2)
\nonumber\\
\hspace{-0.3cm}&\leq&\hspace{-0.3cm}
e^{-(\rho-\zeta(\alpha(1-\gamma)))\tau}d(A_1,A_2),\nonumber
\end{eqnarray}
where in the last inequality, we have used the fact that
$\sup_{X\in{\mathcal{U}}_{0}(1,y)}\mathbb{E}\left[X_{\tau}^{\alpha(1-\gamma)}\right]\leq e^{\zeta(\alpha(1-\gamma))\tau}$ for $y\in\mathbb{R}.$ Due to the standing Assumption \ref{ass1}, the existence and uniqueness of a fixed-point $A^*\in C^+_b(\mathbb{R})$ for $\Psi$
follow from the Banach contraction theorem.

Next, define $\underline{A}:=\inf_{y\in\mathbb{R}}A(y)$ and $\overline{A}:=\sup_{y\in\mathbb{R}}A(y)$ for any $A(\cdot)\in C_b^+(\mathbb{R})$.
Noting that $X=(X_t)_{t\geq0}=\left(e^{\int_0^{t}r(Y_s)ds}\right)_{t\geq0}$ is admissible to the problem \eqref{A*.def}, we readily obtain that 
\begin{eqnarray}
A^*(y)=\Psi(A^*;y)\geq me^{(\underline{r}\alpha-\rho)\tau}+\underline{A}^*e^{-(\rho-\underline{r}\alpha(1-\gamma))\tau},\quad y\in\mathbb{R}.\nonumber
\end{eqnarray}
Due to the arbitrariness of $y$, it holds that
$$\underline{A}^*\geq me^{(\underline{r}\alpha-\rho)\tau}+\underline{A}^*e^{-(\rho-\underline{r}\alpha(1-\gamma))\tau},$$
which yields the lower bound of $A^*$. For the upper bound, using \eqref{inv.pro}, we have
\begin{eqnarray}
A^*(y)=\Psi(A^*;y)
\hspace{-0.3cm}&=&\hspace{-0.3cm}
\alpha\sup_{X\in{\mathcal{U}}_{0}(1,y)}\mathbb{E}\left[\frac{1}{\alpha}e^{-\rho\tau}X_{\tau}^{\alpha}{h(Y_{\tau})}+\frac{1}{\alpha}{A^*(Y_{\tau})}e^{-\rho\tau}X_{\tau}^{\alpha(1-\gamma)}\right]
\nonumber\\
\hspace{-0.3cm}&\leq&\hspace{-0.3cm}
e^{-\rho\tau}\sup_{X\in{\mathcal{U}}_{0}(1,y)}\mathbb{E}\left[X_{\tau}^{\alpha}\right]+\overline{A}^*e^{-\rho\tau}\sup_{X\in{\mathcal{U}}_{0}(1,y)}\mathbb{E}\left[X_{\tau}^{\alpha(1-\gamma)}\right]
\nonumber\\
\hspace{-0.3cm}&\leq&\hspace{-0.3cm}
e^{(\zeta(\alpha)-\rho)\tau}+\overline{A}^*e^{-(\rho-\zeta(\alpha(1-\gamma)))\tau},\quad y\in\mathbb{R}.\nonumber
\end{eqnarray}
Again, thanks to the arbitrariness of $y$, we have that $$\overline{A}^*\leq e^{(\zeta(\alpha)-\rho)\tau}+\overline{A}^*e^{-(\rho-\zeta(\alpha(1-\gamma)))\tau}.$$ This, together with  $\rho>\zeta(\alpha(1-\gamma))\vee0$ from Assumption \ref{ass1}, gives the upper bound.
\end{proof}

With the previous preparations, our next goal is to construct and verify the optimal solution (i.e., the optimal trading strategy as well as the optimal value function) to the original periodic evaluation problem \eqref{problem}. 
 
{\color{black}
Recall that $T_n=n\tau$ for $n\geq 0$. For later use, we introduce the following notations
\begin{align}
    \mathcal{D}_n:=\left\{\nu;(\nu_{T_{n-1}+t})_{t\geq0}\in\mathcal{D}\right\}, \quad
    \mathcal{H}_n:=\left\{\eta;(\eta_{T_{n-1}+t})_{t\geq0}\in\mathcal{H}\right\},\quad n\geq 1,\nonumber
\end{align}
with $\mathcal{D}_1=\mathcal{D}$ and $\mathcal{H}_1=\mathcal{H}$.
For $(\nu_n,\eta_n)_{n\geq 1}$ with $(\nu_n,\eta_n)\in\mathcal{D}_n\times\mathcal{H}_n$ for any $n\geq 1$, define two processes  
\begin{eqnarray}\label{def.Z.eta}
\Tilde{\nu}_t
:=
\sum_{n=1}^{\infty}\nu_{nt}\mathbf{1}_{\{t\in[T_{n-1},T_n)\}}\quad\text{and}\quad \Tilde{\eta}_t:=\sum_{n=1}^{\infty}\eta_{nt}\mathbf{1}_{\{t\in[T_{n-1},T_n)\}},\quad t\geq0.
\end{eqnarray}
}

We are now ready to state and prove the main result of this paper.

\begin{thm}\label{thm3.1}
Define
\begin{eqnarray}
\label{6.2.w}
\Tilde{V}^{(n)}(\lambda,Y_{T_n})
\hspace{-0.3cm}&:=&\hspace{-0.3cm}
\inf_{\nu_n\in\mathcal{D}_n,\eta_n\in\mathcal{H}_n}\mathbb{E}\left[\Phi_{h_{A^*}}(\lambda\frac{Z_{T_n}^{\Tilde{\nu},\Tilde{\eta}}/B^{\Tilde{\nu}}_{T_n}}{Z^{\Tilde{\nu},\Tilde{\eta}}_{T_{n-1}}/B^{\Tilde{\nu}}_{T_{n-1}}},Y_{T_n})\bigg|\mathcal{F}_{T_n}\right],\quad n\geq 0,
\end{eqnarray}
and
$\lambda^*_n:=\arg\min_{\lambda>0}(\Tilde{V}^{(n)}(\lambda,Y_{T_n})+\lambda)$. Furthermore, let $(\nu^*_n,\eta_n^*)\in\mathcal{D}_n\times\mathcal{H}_n$ be the optimal solution to the dual problem \eqref{6.2.w} with $\lambda=\lambda^*_n$
and 
\begin{eqnarray}\label{X^*}
X^*_{T_n}:=X^*_{T_{n-1}}x^*_{h_{A^*}}\left(\lambda_n^*\frac{Z^{\Tilde{\nu}^*,\Tilde{\eta}^*}_{T_n}/B^{\Tilde{\nu}^*}_{T_n}}{Z^{\Tilde{\nu}^*,\Tilde{\eta}^*}_{T_{n-1}}/B^{\Tilde{\nu}^*}_{T_{n-1}}},Y_{T_{n}}\right),\quad n\geq 1,
\end{eqnarray}
with $X^*_{T_0}=x$, where the function $x^*_{h_{A^*}}(\cdot,\cdot)$ is defined by \eqref{phi(y)=h} with $A^*(\cdot)\in C^+_b(\mathbb{R})$ being the unique fixed-point of the functional $\Psi$, and 
the processes $\Tilde{\nu}^*,\,\Tilde{\eta}^*$ are given by
$$\Tilde{\nu}^*_t
:=
\sum_{n=1}^{\infty}\nu^*_{nt}\mathbf{1}_{\{t\in[T_{n-1},T_n)\}}\quad\text{and}\quad \Tilde{\eta}^*_t:=\sum_{n=1}^{\infty}\eta^*_{nt}\mathbf{1}_{\{t\in[T_{n-1},T_n)\}},\quad t\geq0.$$
Then, the value function of the problem \eqref{problem} is given by
\begin{eqnarray}\label{eq:V}
V(x,y)={\frac{1}{\alpha}}e^{-{\kappa}\tau\gamma\alpha}A^*(y)x^{\alpha(1-\gamma)},\quad(x,y)\in\mathbb{R}_+\times\mathbb{R}.
\end{eqnarray}
There exists a portfolio process $\pi^*$ such that the resulting wealth process $X^{\pi^*}$ solves the problem \eqref{problem} and satisfies $X^{\pi^*}_{T_n}=X^*_{T_n}$ for all $n\geq1$. 

\end{thm}
\begin{proof}
Using parallel arguments as those of Proposition \ref{prop5.4}, one can prove the solvability of \eqref{6.2.w} for each $n\geq 1$.
Define a series of functionals $\Psi_n:C^+_b(\mathbb{R})\mapsto C^+_b(\mathbb{R})$ as 
\begin{align}
    \label{psi.control.prob}
\Psi_n(A)&:=\Psi_n(A;y)
\nonumber\\
&:=\alpha\sup_{X\in\mathcal{U}_{T_n}(1,y)}\mathbb{E}\left[e^{-\rho \tau}\frac{1}{\alpha}X_{T_{n+1}}^{\alpha}h({Y_{T_{n+1}}})+e^{-\rho \tau}\frac{1}{\alpha}A(Y_{T_{n+1}})X_{T_{n+1}}^{\alpha(1-\gamma)}\bigg|\mathcal{F}_{T_n}\right],\quad n\geq 0,
\end{align}
where $\mathcal{U}_{T_n}(1,Y_{T_n})$ is given by \eqref{U_T}. 
For any $X\in\mathcal{U}_{T_n}(1,y)$, by \eqref{U_T}, we have $Y_{T_n}=y$, and there exists an admissible portfolio strategy $\hat{\pi}$ such that $X_{t}=X^{\hat{\pi}}_t$ for any $t\geq T_n$, and $X^{\hat{\pi}}_{T_n}=x$. Let $X^{x,y,\hat{\pi}}$ be as given by \eqref{admissible_set} with $\pi$ replaced as $\hat{\pi}$. Then, by the strong Markov property of $X^{\hat{\pi}}$ and $Y$, we have
\begin{align}
\label{6.6.w}
    &\mathbb{E}\left[e^{-\rho \tau}\frac{1}{\alpha}X_{T_{n+1}}^{\alpha}h({Y_{T_{n+1}}})+e^{-\rho \tau}\frac{1}{\alpha}A(Y_{T_{n+1}})X_{T_{n+1}}^{\alpha(1-\gamma)}\bigg|\mathcal{F}_{T_n}\right]
    \nonumber\\
    =&\mathbb{E}\left[e^{-\rho \tau}\frac{1}{\alpha}\left(X_{T_{n+1}}^{\hat{\pi}}\right)^{\alpha}h({Y_{T_{n+1}}})+e^{-\rho \tau}\frac{1}{\alpha}A(Y_{T_{n+1}})\left(X_{T_{n+1}}^{\hat{\pi}}\right)^{\alpha(1-\gamma)}\bigg|\mathcal{F}_{T_n}\right]
    \nonumber\\
    =&\mathbb{E}\left[e^{-\rho \tau}\frac{1}{\alpha}\left(X_{T_{n+1}}^{\hat{\pi}}\right)^{\alpha}h({Y_{T_{n+1}}})+e^{-\rho \tau}\frac{1}{\alpha}A(Y_{T_{n+1}})\left(X_{T_{n+1}}^{\hat{\pi}}\right)^{\alpha(1-\gamma)}\bigg|X_{T_n}^{\hat{\pi}}=x,Y_{T_n}=y\right]
    \nonumber\\
    =&\mathbb{E}\left[e^{-\rho \tau}\frac{1}{\alpha}\left(X_{T_{1}}^{x,y,\hat{\pi}}\right)^{\alpha}h({Y_{T_{1}}})+e^{-\rho \tau}\frac{1}{\alpha}A(Y_{T_{1}})\left(X_{T_{1}}^{x,y,\hat{\pi}}\right)^{\alpha(1-\gamma)}\right]
    \nonumber\\
    \leq & \Psi_{0}(A)=\Psi(A),\quad X\in\mathcal{U}_{T_n}(1,y).
\end{align}
On the other hand, for any $X\in\mathcal{U}_{0}(1,y)$, by \eqref{admissible_set}, we have $Y_{0}=y$, and there exists an admissible portfolio strategy $\tilde{\pi}$ such that $X_{t}=X^{x,y,\tilde{\pi}}_t$ for any $t\geq 0$, and $X^{x,y,\tilde{\pi}}_{0}=x$. Let $X^{\tilde{\pi}}$ be as given in \eqref{U_T} with $\pi$ replaced as $\tilde{\pi}$. Then, using an argument similar to that of \eqref{6.6.w}, we have
\begin{align}
    \label{6.7.w}
   &\mathbb{E}\left[e^{-\rho \tau}\frac{1}{\alpha}\left(X_{T_{1}}\right)^{\alpha}h({Y_{T_{1}}})+e^{-\rho \tau}\frac{1}{\alpha}A(Y_{T_{1}})\left(X_{T_{1}}\right)^{\alpha(1-\gamma)}\right]
    \nonumber\\
    =&
    \mathbb{E}\left[e^{-\rho \tau}\frac{1}{\alpha}\left(X_{T_{1}}^{x,y,\tilde{\pi}}\right)^{\alpha}h({Y_{T_{1}}})+e^{-\rho \tau}\frac{1}{\alpha}A(Y_{T_{1}})\left(X_{T_{1}}^{x,y,\tilde{\pi}}\right)^{\alpha(1-\gamma)}\right]
    \nonumber\\
    =&\mathbb{E}\left[e^{-\rho \tau}\frac{1}{\alpha}\left(X_{T_{n+1}}^{\tilde{\pi}}\right)^{\alpha}h({Y_{T_{n+1}}})+e^{-\rho \tau}\frac{1}{\alpha}A(Y_{T_{n+1}})\left(X_{T_{n+1}}^{\tilde{\pi}}\right)^{\alpha(1-\gamma)}\bigg|X_{T_n}^{\tilde{\pi}}=x, Y_{T_{n}}=y\right]
    \nonumber\\
    =&\mathbb{E}\left[e^{-\rho \tau}\frac{1}{\alpha}\left(X_{T_{n+1}}^{\tilde{\pi}}\right)^{\alpha}h({Y_{T_{n+1}}})+e^{-\rho \tau}\frac{1}{\alpha}A(Y_{T_{n+1}})\left(X_{T_{n+1}}^{\tilde{\pi}}\right)^{\alpha(1-\gamma)}\bigg|\mathcal{F}_{T_n}\right]\Bigg|_{X_{T_n}^{\tilde{\pi}}=x, Y_{T_{n}}=y}
    \nonumber\\
    \leq & \Psi_{n}(A),\quad X\in\mathcal{U}_{0}(1,y).
\end{align}
Combining \eqref{6.6.w} and \eqref{6.7.w} yields
\begin{align}
\label{6.8.w}
    \Psi_{n}(A)\equiv \Psi(A), \quad n\geq 0,\, A\in C^+_b(\mathbb{R}).
\end{align}
Thanks to the arguments across Sections \ref{sec:auxiliary}-\ref{sec:duality}, the control problem \eqref{psi.control.prob} with $n=0$ is solvable, which together with \eqref{6.8.w} implies that the control problem \eqref{psi.control.prob} is solvable for all $n\geq 0$. Then, by Proposition \ref{fixed.power}, $(\Psi_n)_{n\geq 0}$ shares the same unique fixed point $A^*$.

For any admissible wealth process $X\in\mathcal{U}_0(x,y)$, define a discrete-time stochastic process $D=(D_n)_{n\in\mathbb{N}}$ by
$$D_n:=\sum_{i=1}^n\frac{1}{\alpha}e^{-\rho T_i}\left(\frac{X_{T_i}}{(e^{{\kappa}\tau}X_{T_{i-1}})^{\gamma}}\right)^{\alpha}h(Y_{T_i})+{\frac{1}{\alpha}}e^{-{\kappa}\alpha\gamma\tau}A^*(Y_{T_n})e^{-\rho T_n}X_{T_n}^{\alpha(1-\gamma)},\quad n\geq1,$$
and $$D_0:={\frac{1}{\alpha}}e^{-{\kappa}\alpha\gamma\tau}A^*(y)x^{\alpha(1-\gamma)}.$$
Then, it holds that
\begin{eqnarray}\label{Dn+1}
D_{n+1}
\hspace{-0.3cm}&=&\hspace{-0.3cm}
D_{n}+e^{-\rho T_{n+1}}\frac{1}{\alpha}\bigg(\frac{X_{T_{n+1}}}{(e^{{\kappa}\tau}X_{T_{n}})^{\gamma}}\bigg)^{\alpha}h(Y_{T_{n+1}})-{\frac{1}{\alpha}}e^{-{\kappa}\tau\alpha\gamma}A^*(Y_{T_n})e^{-\rho T_n}X_{T_n}^{\alpha(1-\gamma)}
\nonumber\\
\hspace{-0.3cm}&&\hspace{-0.3cm}
+{\frac{1}{\alpha}}e^{-{\kappa}\tau\alpha\gamma}A^*(Y_{T_{n+1}})e^{-\rho T_{n+1}}X_{T_{n+1}}^{\alpha(1-\gamma)}
\nonumber\\
\hspace{-0.3cm}&=&\hspace{-0.3cm}
D_n+e^{-\rho T_n}e^{-{\kappa}\tau\alpha\gamma}\left[e^{-\rho\tau}\left(\frac{1}{\alpha}\bigg(\frac{X_{T_{n+1}}}{X_{T_{n}}^{\gamma}}\bigg)^{\alpha}h(Y_{T_{n+1}})+{\frac{1}{\alpha}}A^*(Y_{T_{n+1}})X^{\alpha(1-\gamma)}_{T_{n+1}}\right)\right.
\nonumber\\
\hspace{-0.3cm}&&\hspace{-0.3cm}
\left.-{\frac{1}{\alpha}}A^*(Y_{T_n})X^{\alpha(1-\gamma)}_{T_n}\right].
\end{eqnarray}
Taking the conditional expectation on both sides of \eqref{Dn+1}, we get
\begin{eqnarray}\label{3.4}
\mathbb{E}[D_{n+1}|\mathcal{F}_{T_n}]
\hspace{-0.3cm}&=&\hspace{-0.3cm}
D_n+e^{-\rho T_n}e^{-{\kappa}\tau\alpha\gamma}X^{\alpha(1-\gamma)}_{T_n}\left[e^{-\rho\tau}\mathbb{E}\left[\frac{1}{\alpha}\bigg(\frac{X_{T_{n+1}}}{X_{T_n}}\bigg)^{\alpha}h(Y_{T_{n+1}})\right.\right.
\nonumber\\
\hspace{-0.3cm}&&\hspace{3cm}
\left.\left.+{{\frac{1}{\alpha}}A^*(Y_{T_{n+1}})}\bigg(\frac{X_{T_{n+1}}}{X_{T_n}}\bigg)^{\alpha(1-\gamma)}\bigg|\mathcal{F}_{T_n}\right]-{\frac{1}{\alpha}}A^*(Y_{T_n})\right].
\end{eqnarray}
Note that $X\in\mathcal{U}_0(x,y)$ implies $({X_{t}}/{X_{T_n}})_{t\geq T_{n}}\in\mathcal{U}_{T_{n}}(1,y)$. Then
\begin{eqnarray}\label{3.5}
\hspace{-0.3cm}&&\hspace{-0.3cm}
e^{-\rho\tau}\mathbb{E}\left[\frac{1}{\alpha}\bigg(\frac{X_{T_{n+1}}}{X_{T_n}}\bigg)^{\alpha}h(Y_{T_{n+1}})+{{\frac{1}{\alpha}}A^*(Y_{T_{n+1}})}\bigg(\frac{X_{T_{n+1}}}{X_{T_n}}\bigg)^{\alpha(1-\gamma)}\bigg|\mathcal{F}_{T_n}\right]
\nonumber\\
\hspace{-0.3cm}&\leq&\hspace{-0.3cm}
e^{-\rho\tau}\sup_{X\in{\mathcal{U}}_{T_n}(1,Y_{T_n})}\mathbb{E}\left[\frac{1}{\alpha}X_{T_{n+1}}^{\alpha}h(Y_{T_{n+1}})+{\frac{1}{\alpha}}A^*(Y_{T_{n+1}})X_{T_{n+1}}^{\alpha(1-\gamma)}\Big|\mathcal{F}_{T_n}\right]
\nonumber\\
\hspace{-0.3cm}&=&\hspace{-0.3cm}
{\frac{1}{\alpha}}\Psi_n(A^*;Y_{T_n}),\quad n\in\mathbb{N},
\end{eqnarray}
which, together with \eqref{3.4} and the fact of $\Psi_n(A^*;\cdot)=A^*(\cdot)$, implies
\begin{eqnarray}\label{3.6}
\mathbb{E}[D_{n+1}|\mathcal{F}_{T_n}]\leq D_n+{\frac{1}{\alpha}}e^{-\rho T_n}e^{-{\kappa}\tau\alpha\gamma}X_{T_n}^{\alpha(1-\gamma)}\left[\Psi_n(A^*;Y_{T_n})-A^*(Y_{T_n})\right]=D_n.
\end{eqnarray}
Hence, $(D_n)_{n\in\mathbb{N}}$ is a $\{\mathcal{F}_{T_n}\}$-supermartingale, and it follows that
\begin{eqnarray}
\hspace{-0.3cm}&&\hspace{-0.3cm}
{\frac{1}{\alpha}}e^{-{\kappa}\tau\alpha\gamma}A^*(y)x^{\alpha(1-\gamma)}
\nonumber\\
\hspace{-0.3cm}&=&\hspace{-0.3cm}
D_0\geq
\mathbb{E}\left[\sum_{i=1}^ne^{-\rho T_i}\frac{1}{\alpha}\bigg(\frac{X_{T_i}}{(e^{{\kappa}\tau}X_{T_{i-1}})^{\gamma}}\bigg)^{\alpha}h(Y_{T_{i}})+{\frac{1}{\alpha}}e^{-{\kappa}\tau\alpha\gamma}A^*(Y_{T_n})e^{-\rho T_n}X_{T_n}^{\alpha(1-\gamma)}\right],\nonumber
\end{eqnarray}
which yields that
\begin{eqnarray}
\mathbb{E}\left[\sum_{i=1}^ne^{-\rho T_i}\frac{1}{\alpha}\bigg(\frac{X_{T_i}}{(e^{{\kappa}\tau}X_{T_{i-1}})^{\gamma}}\bigg)^{\alpha}h(Y_{T_{i}})\right]
\hspace{-0.3cm}&\leq&\hspace{-0.3cm}
{\frac{1}{\alpha}}e^{-{\kappa}\tau\alpha\gamma}A^*(y)x^{\alpha(1-\gamma)}-{\frac{1}{\alpha}}e^{-\rho T_n}e^{-{\kappa}\tau\alpha\gamma}
\mathbb{E}\left[A^*(Y_{T_n})X^{\alpha(1-\gamma)}_{T_n}\right].\nonumber
\end{eqnarray}
By Assumption \ref{ass1} (i.e., $\rho>\zeta(\alpha(1-\gamma))\vee0$), \eqref{3.13.v0}, the fact $A^*(\cdot)\in C^+_b(\mathbb{R})$ as well as the monotone convergence theorem, we have
\begin{eqnarray}
\mathbb{E}\left[\sum_{i=1}^{\infty}e^{-\rho T_i}\frac{1}{\alpha}\bigg(\frac{X_{T_i}}{(e^{{\kappa}\tau}X_{T_{i-1}})^{\gamma}}\bigg)^{\alpha}h(Y_{T_{i}})\right]\leq {\frac{1}{\alpha}}e^{-{\kappa}\tau\alpha\gamma}A^*(y)x^{\alpha(1-\gamma)},\nonumber
\end{eqnarray}
and hence
\begin{eqnarray}\label{eq:thm6.1.V<}
V(x,y)=\sup_{X\in\mathcal{U}_0(x,y)}\mathbb{E}\left[\sum_{i=1}^{\infty}e^{-\rho T_i}\frac{1}{\alpha}\bigg(\frac{X_{T_i}}{(e^{{\kappa}\tau}X_{T_{i-1}})^{\gamma}}\bigg)^{\alpha}h(Y_{T_{i}})\right]\leq {\frac{1}{\alpha}}e^{-{\kappa}\tau\alpha\gamma}A^*(y)x^{\alpha(1-\gamma)}.
\end{eqnarray}
It remains to show the reverse inequality of \eqref{eq:thm6.1.V<}.
Recall that $\lambda^*_n:=\arg\min_{\lambda>0}(\Tilde{V}^{(n)}(\lambda,Y_{T_n})+\lambda)$ and $(\nu^*_n,\eta_n^*)\in\mathcal{D}_n\times\mathcal{H}_n$ is the optimal solution to the dual problem \eqref{6.2.w} with $\lambda=\lambda^*_n$.
Then, by Proposition \ref{primaldualcons}, there exists a portfolio process $\pi_n^*$ such that the process $X^{\pi^*_n}/X^{\pi^*}_{T_{n}} \in\mathcal{U}_{T_n}(1,Y_{T_n})$ satisfies
\begin{eqnarray}\label{thm6.1.condi1}
\hspace{-0.3cm}&&\hspace{-0.3cm}
e^{-\rho\tau}\mathbb{E}\left[\left.\frac{1}{\alpha}\bigg(\frac{X^{\pi_n^*}_{T_{n+1}}}{X^{\pi^*}_{T_{n}}}\bigg)^{\alpha}h(Y_{T_{n+1}})+{\frac{1}{\alpha}}A^*(Y_{T_{n+1}})\bigg(\frac{X^{\pi_n^*}_{T_{n+1}}}{X^{\pi^*}_{T_{n}}}\bigg)^{\alpha(1-\gamma)}\right|\mathcal{F}_{T_n}\right]
\nonumber\\
\hspace{-0.3cm}&=&\hspace{-0.3cm}
e^{-\rho\tau}\sup_{X\in{\mathcal{U}}_{T_n}(1,Y_{T_n})}\mathbb{E}\left[\frac{1}{\alpha}X^{\alpha}_{T_{n+1}}h(Y_{T_{n+1}})+{\frac{1}{\alpha}}A^*(Y_{T_{n+1}})X_{T_{n+1}}^{\alpha(1-\gamma)}\Big|\mathcal{F}_{T_n}\right]={\frac{1}{\alpha}}\Psi_n(A^*;Y_{T_n}),
\end{eqnarray}
and
\begin{align}
    \label{thm6.1.condi2}
dX^{\pi_n^*}_t=&
\left[r(Y_t)+\delta(\nu^*_{nt})+(\pi^*_{nt})^{\mathsf{T}}(\mu(Y_t)+\nu^*_{nt}-r(Y_t)\mathbf{1})\right]X^{\pi_n^*}_tdt\nonumber\\
&
+X^{\pi^*_n}_t(\pi_{nt}^{*})^{\mathsf{T}}\sigma(Y_t)dW_{1t}
\nonumber\\
=&\left[r(Y_t)+(\pi^*_{nt})^{\mathsf{T}}(\mu(Y_t)-r(Y_t)\mathbf{1})\right]X^{\pi_n^*}_tdt
+X^{\pi^*_n}_t(\pi_{nt}^{*})^{\mathsf{T}}\sigma(Y_t)dW_{1t},\quad t\in[T_n,T_{n+1}),
\end{align}
with
\begin{align}
    \label{thm6.1.condi3} {X^{\pi_n^*}_{T_{n+1}}}/{X^{\pi^*}_{T_n}}=x^*_{h_{A^*}}\left(\lambda^*_{n+1}\frac{Z^{\Tilde{\nu}^*,\Tilde{\eta}^*}_{T_{n+1}}/B^{\Tilde{\nu}^*}_{T_{n+1}}}{Z^{\Tilde{\nu}^*,\Tilde{\eta}^*}_{T_n}/B^{\Tilde{\nu}^*}_{T_n}},Y_{T_{n+1}}\right).
\end{align}
By Proposition \ref{primaldualcons}, we also have
\begin{align}\label{3.11.power}
&\mathbb{E}\left[\frac{Z^{\Tilde{\nu}^*,\Tilde{\eta}^*}_{T_{n+1}}/B^{\Tilde{\nu}^*}_{T_{n+1}}}{Z^{\Tilde{\nu}^*,\Tilde{\eta}^*}_{T_n}/B^{\Tilde{\nu}^*}_{T_n}}x^*_{h_{A^*}}\left(\lambda^*_{n+1}\frac{Z^{\Tilde{\nu}^*,\Tilde{\eta}^*}_{T_{n+1}}/B^{\Tilde{\nu}^*}_{T_{n+1}}}{Z^{\Tilde{\nu}^*,\Tilde{\eta}^*}_{T_n}/B^{\Tilde{\nu}^*}_{T_n}},Y_{T_{n+1}}\right)\Bigg|\mathcal{F}_{T_n}\right]
\nonumber\\
=&\mathbb{E}\left[x^*_{h_{A^*}}\left(\lambda_1^*\frac{Z^{\Tilde{\nu}^*,\Tilde{\eta}^*}_{\tau}}{B^{\Tilde{\nu}^*}_{\tau}},Y_{\tau}\right)\frac{Z^{\Tilde{\nu}^*,\Tilde{\eta}^*}_{\tau}}{B^{\Tilde{\nu}^*}_{\tau}}\Bigg|\mathcal{F}_0\right]=1,\quad n\geq 1.
\end{align}
Define
$$D^*_n:=\sum_{i=1}^ne^{-\rho T_i}\frac{1}{\alpha}\bigg(\frac{X^{\pi_i^*}_{T_i}}{{(e^{{\kappa}\tau}X^{\pi_i^*}_{T_{i-1}})}^{\gamma}}\bigg)^{\alpha}h(Y_{T_{i}})+{\frac{1}{\alpha}}e^{-{\kappa}\alpha\gamma\tau}A^*(Y_{T_n})e^{-\rho T_n}{(X^{\pi_n^*}_{T_n})}^{\alpha(1-\gamma)}, \quad n\geq 1,$$
and
$$D_0^*:={\frac{1}{\alpha}}e^{-{\kappa}\alpha\gamma\tau}A^*(y)x^{\alpha(1-\gamma)}.$$
Using the same arguments for \eqref{3.6}, we conclude that
\begin{eqnarray}
\mathbb{E}[D^*_{n+1}|\mathcal{F}_{T_n}]=D^*_n+{\frac{1}{\alpha}}e^{-\rho T_n}e^{-{\kappa}\alpha\gamma\tau}{(X^{\pi_n^*}_{T_n})}^{\alpha(1-\gamma)}\left[\Psi(A^*;Y_{T_n})-A^*(Y_{T_n})\right]=D^*_n, \quad n\geq 0,\nonumber
\end{eqnarray}
which gives that $D^*_n$ is a $\{\mathcal{F}_{T_n}\}$-martingale. Hence, we have
\begin{eqnarray}
\hspace{-0.3cm}&&\hspace{-0.3cm}
\mathbb{E}\left[\sum_{i=1}^{n}e^{-\rho T_i}\frac{1}{\alpha}\bigg(\frac{X^{\pi_i^*}_{T_i}}{(e^{{\kappa}\tau}X^{\pi_i^*}_{T_{i-1}})^{\gamma}}\bigg)^{\alpha}h(Y_{T_{i}})\right]
\nonumber\\
\hspace{-0.3cm}&=&\hspace{-0.3cm}
{\frac{1}{\alpha}}e^{-{\kappa}\alpha\gamma\tau}A^*(y)x^{\alpha(1-\gamma)}-{\frac{1}{\alpha}}e^{-\rho T_n}e^{-{\kappa}\alpha\gamma\tau}\mathbb{E}\left[A^*(Y_{T_n})(X^{\pi_n^*}_{T_n})^{\alpha(1-\gamma)}\right].\nonumber
\end{eqnarray}
By Assumption \ref{ass1}, \eqref{3.13.v0}, the fact $A^*(\cdot)\in C^+_b(\mathbb{R})$ and monotone convergence theorem, we have
\begin{eqnarray}\label{eq:thm6.1.V=}
\mathbb{E}\left[\sum_{i=1}^{\infty}e^{-\rho T_i}\frac{1}{\alpha}\bigg(\frac{X^{\pi_i^*}_{T_i}}{(e^{{\kappa}\tau}X^{\pi_i^*}_{T_{i-1}})^{\gamma}}\bigg)^{\alpha}h(Y_{T_{i}})\right]={\frac{1}{\alpha}}e^{-{\kappa}\alpha\gamma\tau}A^*(y)x^{\alpha(1-\gamma)}.
\end{eqnarray}
Furthermore, set
$$\pi^*=(\pi^*_t)_{t\geq0}:=\left(\sum_{i=0}^{\infty}\pi^*_{it}\mathbf{1}_{\{t\in[T_{i},T_{i+1})\}}\right)_{t\geq0}.$$
Then, the wealth process $X^{\pi^*}$ under the control $\pi^*$, subject to the boundary conditions $X^{\pi^*}_{0}=x$ and $X^{\pi^*}_{T_{n}-}=X^{\pi^*}_{T_{n}}$ for each $n\geq 1$, is such that
$$X^{\pi^*}=(X^{\pi^*}_t)_{t\geq0}=\left(\sum_{i=0}^{\infty}X^{\pi^*_i}_{t}\mathbf{1}_{\{t\in[T_{i},T_{i+1})\}}\right)_{t\geq0} \in\mathcal{U}_0(x,y).$$
In addition, \eqref{eq:thm6.1.V=} can be rewritten as
\begin{eqnarray}
\mathbb{E}\left[\sum_{i=1}^{\infty}e^{-\rho T_i}\frac{1}{\alpha}\bigg(\frac{X^{\pi^*}_{T_i}}{(e^{{\kappa}\tau}X^{\pi^*}_{T_{i-1}})^{\gamma}}\bigg)^{\alpha}h(Y_{T_{i}})\right]={\frac{1}{\alpha}}e^{-{\kappa}\alpha\gamma\tau}A^*(y)x^{\alpha(1-\gamma)},
\end{eqnarray}
which combined with \eqref{eq:thm6.1.V<} implies \eqref{eq:V}, and $X^{\pi^*}$ solves the control problem \eqref{problem}.
\end{proof}

\begin{rem}
The characterization above is primarily theoretical. For numerical implementation, one feasible route is to initialize a bounded positive function $A_0$ and iterate $A_{m+1}=\Psi(A_m)$. For each fixed $A_m$, evaluating $\Psi(A_m)$ requires solving the auxiliary constrained terminal wealth problem. In principle, this can be done by solving the associated HJB or dual HJB equation for the dual minimization problem. In the latter approach, the minimizing parameter processes $(\nu^*,\eta^*)$ can be obtained numerically in the dual problem. Once the dual optimizers $(\nu^*,\eta^*)$ and the multiplier $\lambda^*$ are computed, the optimal wealth and portfolio can be recovered from the dual representation and the martingale representation. A full development and convergence analysis of such numerical algorithm is beyond the scope of this paper, but it has been used in our concrete examples in Section 7 with satisfactory performance. 
\end{rem}

\begin{rem}
Theorem \ref{thm3.1} is analogous to Theorem 1 of \cite{TZ23} as both results rely on a fixed-point and verification argument. The key difference lies in the targets being verified. In \cite{TZ23}, the value function is $V(x)=A^*x^\alpha$ with a scalar fixed point $A^*$, and the optimal wealth at the evaluation dates is represented by $X^*_{T_i}=X^*_{T_{i-1}}
y^*_{A^*}\left(\lambda^*\frac{Z_{T_i}}{Z_{T_{i-1}}}\right)$. In the present ratio-type model with stochastic factors and convex constraints, the fixed point becomes a function \(A^*(y)\in C_b^+(\mathbb{R})\), and
$
V(x,y)=\frac{1}{\alpha}e^{-\kappa\tau\gamma\alpha}
A^*(y)x^{\alpha(1-\gamma)} .
$
Moreover, the optimal wealth at each evaluation date is
$
X^*_{T_n}=X^*_{T_{n-1}}
x^*_{h_{A^*}}\left(\lambda_n^*
\frac{Z^{\Tilde{\nu}^*,\Tilde{\eta}^*}_{T_n}/B^{\Tilde{\nu}^*}_{T_n}}
{Z^{\Tilde{\nu}^*,\Tilde{\eta}^*}_{T_{n-1}}/B^{\Tilde{\nu}^*}_{T_{n-1}}},
Y_{T_n}\right),
$
where $\nu^*$ is the constraint parameter and $\eta^*$ is the market completion parameter. Thus our proof requires several additional steps to facilitate the fixed-point verification: the auxiliary one-period constrained terminal wealth problem is solved through the fictitious-market duality with $(\nu,\eta)$, the fixed point is established on the function space $C_b^+(\mathbb{R})$, and the period-by-period optimizers are concatenated while preserving the convex constraint. The verification of supermartingale calls for the continuation term
$
\frac{1}{\alpha}e^{-\kappa\tau\gamma\alpha}
A^*(Y_{T_n})e^{-\rho T_n}X_{T_n}^{\alpha(1-\gamma)},
$
rather than $A^*e^{-\rho T_n}X_{T_n}^{\alpha}$ in \cite{TZ23}. This highlights how the
ratio-type criterion, interplayed with the stochastic factor and the convex constraint, substantially changes the arguments for fixed-point and verification.
\end{rem}

\section{Constructive Arguments in Special Cases of $\gamma=1$ and $\gamma=0$}

In this section, we present an illustrative example for the relative performance parameters $\gamma=1$ or $\gamma=0$. 
When $\gamma=1$, the modified utility simplifies to
\begin{eqnarray}
h_A(x,y)=\frac{1}{\alpha}x^{\alpha}h(y)+\frac{1}{\alpha}A(y),\quad (x,y)\in\mathbb{R}_+\times\mathbb{R}.\nonumber
\end{eqnarray}

The auxiliary one-period terminal wealth
optimization problem \eqref{A*.def} is rewritten as
\begin{eqnarray}
\label{7.1}
A^*(y)=\alpha e^{-\rho\tau}\sup_{X\in\mathcal{U}_0(1,y)}\mathbb{E}\left[\frac{1}{\alpha}X_{\tau}^{\alpha}h(Y_{\tau})\right]+\mathbb{E}\left[e^{-\rho \tau}A^*(Y_{\tau})\right].
\end{eqnarray}
Hence, we need to solve the following problem
\begin{eqnarray}\label{ex:prob}
\sup_{X\in\mathcal{U}_0(1,y)}\mathbb{E}\left[\frac{1}{\alpha}X_{\tau}^{\alpha}h(Y_{\tau})\right].
\end{eqnarray}
Note that for each fixed $y$, the Legendre-Fenchel transform of the utility $\frac{1}{\alpha}x^{\alpha}h(y)$ is given by 
$\frac{1-\alpha}{\alpha}z^{\frac{\alpha}{\alpha-1}}h(y)^{\frac{1}{1-\alpha}}$. Consequently, the dual problem of \eqref{ex:prob} is formulated as 
\begin{eqnarray}\label{ex:dual}
\inf_{\nu\in\mathcal{D},\eta\in\mathcal{H},\lambda>0}\left\{\mathbb{E}\left[\frac{1-\alpha}{\alpha}\left(\lambda\frac{Z^{\nu,\eta}_{\tau}}{B^{\nu}_{\tau}}\right)^{\frac{\alpha}{\alpha-1}}h(Y_{\tau})^{\frac{1}{1-\alpha}}\right]+\lambda \right\}.
\end{eqnarray}
For fixed $(\eta,\nu)\in\mathcal{H}\times \mathcal{D}$,  the optimal parameter $\lambda$ is given by
\begin{eqnarray}\label{ex:lambda}
\lambda^*(\eta,\nu)=\left(\mathbb{E}\left[\left(\frac{Z^{\nu,\eta}_{\tau}}{B^{\nu}_{\tau}}\right)^{\frac{\alpha}{\alpha-1}}h(Y_{\tau})^{\frac{1}{1-\alpha}}\right]\right)^{1-\alpha}.
\end{eqnarray}
Substituting this into \eqref{ex:dual} yields
\begin{eqnarray}
\label{ex:dual2}
\inf_{\nu\in\mathcal{D},\eta\in\mathcal{H}}\left\{\frac{1}{\alpha}\left(\mathbb{E}\left[\left(\frac{Z^{\nu,\eta}_{\tau}}{B^{\nu}_{\tau}}\right)^{\frac{\alpha}{\alpha-1}}h(Y_{\tau})^{\frac{1}{1-\alpha}}\right]\right)^{1-\alpha}\right\}.
\end{eqnarray}
Depending on the sign of the parameter $\alpha$, problem \eqref{ex:dual2} is equivalent to 
\begin{eqnarray}
\inf_{\nu\in\mathcal{D},\eta\in\mathcal{H}}\left(\sup_{\nu\in\mathcal{D},\eta\in\mathcal{H}}\right)\,\,\mathbb{E}\left[\left(\frac{Z^{\nu,\eta}_{\tau}}{B^{\nu}_{\tau}}\right)^{\frac{\alpha}{\alpha-1}}h(Y_{\tau})^{\frac{1}{1-\alpha}}\right],\quad\alpha\in(0,1)\,\,(\alpha\in(-\infty,0)).
\end{eqnarray}
Let $M^{\nu,\eta}_t:=\frac{Z^{\nu,\eta}_{t}}{B^{\nu}_{t}}$ for $t\geq 0$. From \eqref{new-fic-model} and \eqref{Z.process}, we have 
\begin{eqnarray}\label{dM2}
dM^{\nu,\eta}_t = M_t^{\nu,\eta} \left[ -(r(Y_t)+ \delta(\nu_t))dt - (\theta^\nu(Y_t))^\top dW_{1t} + \eta_t dW_{2t} \right],\quad t\in[0,\tau],
\end{eqnarray}
with $M^{\nu,\eta}_0=1$.
For the case $\alpha\in(0,1)$, define
\begin{eqnarray}
\label{6.43.w}
w(t,m,y)
\hspace{-0.3cm}&:=&\hspace{-0.3cm}
\inf_{\nu\in\mathcal{D},\eta\in\mathcal{H}}\mathbb{E}\left[(M^{\nu,\eta}_{\tau})^{\frac{\alpha}{\alpha-1}}h(Y_{\tau})^{\frac{1}{1-\alpha}}\Big|M^{\nu,\eta}_{t}=m,Y_t=y\right]
\nonumber\\
\hspace{-0.3cm}&=&\hspace{-0.3cm}
\inf_{\nu\in\mathcal{D},\eta\in\mathcal{H}}\mathbb{E}\left[(M^{\nu,\eta}_{\tau})^{\frac{\alpha}{\alpha-1}}h(Y_{\tau})^{\frac{1}{1-\alpha}}\Big|M^{\nu,\eta}_{t}=1,Y_t=y\right]m^{\frac{\alpha}{\alpha-1}}\nonumber\\
\hspace{-0.3cm}&:=&\hspace{-0.3cm}
m^{\frac{\alpha}{\alpha-1}}v(t,y).
\end{eqnarray}
Then, the Hamilton-Jacobi-Bellman (HJB) equation associated with the control problem \eqref{6.43.w} is
\begin{eqnarray}\label{ex2:HJBw}
0\hspace{-0.3cm}&=&\hspace{-0.3cm}
w_t+b(y)w_y+\frac{1}{2}\beta(y)^2w_{yy}-mr(y)w_m
\nonumber\\
\hspace{-0.3cm}&&\hspace{-0.3cm}
+\inf_{\nu\in\mathcal{D}}\left\{-m\delta(\nu)w_m+\frac{1}{2}m^2\|\theta^{\nu}(y)\|^2w_{mm}-m\beta(y)q\theta^{\nu}(y)w_{my}\right\}
\nonumber\\
\hspace{-0.3cm}&&\hspace{-0.3cm}
+\inf_{\eta\in\mathcal{H}}\left\{\frac{1}{2}m^2\eta^2w_{mm}+m\beta(y)\eta\sqrt{1-\|q\|^2}w_{my}\right\},
\end{eqnarray}
with boundary condition $w(\tau,m,y)=m^{\frac{\alpha}{\alpha-1}}h(y)^{\frac{1}{1-\alpha}}$. 
Substituting $w(t,m,y)=v(t,y)m^{\frac{\alpha}{\alpha-1}}$ into \eqref{ex2:HJBw}, we obtain
\begin{eqnarray}
\label{6.45.w}
0\hspace{-0.3cm}&=&\hspace{-0.3cm}
v_t+b(y)v_y+\frac{1}{2}\beta(y)^2v_{yy}-\frac{\alpha}{\alpha-1}r(y)v
\nonumber\\
\hspace{-0.3cm}&&\hspace{-0.3cm}
+\inf_{\nu\in\tilde{K}}\left\{\frac{\alpha}{1-\alpha}\delta(\nu)v+\frac{\alpha}{2(1-\alpha)^2}\|\theta^{\nu}(y)\|^2v+\frac{\alpha}{1-\alpha}\beta(y)q\theta^{\nu}(y)v_{y}\right\}
\nonumber\\
\hspace{-0.3cm}&&\hspace{-0.3cm}
+\inf_{\eta\in\mathbb{R}}\left\{\frac{\alpha}{2(\alpha-1)^2}\eta^2v+\frac{\alpha}{\alpha-1}\beta(y)\eta\sqrt{1-\|q\|^2}v_{y}\right\},
\end{eqnarray}
with boundary condition $v(\tau,y)=h(y)^{\frac{1}{1-\alpha}}.$
The well-posedness of solutions to the HJB equation \eqref{6.45.w} follows from standard arguments. For brevity, we assume the existence and uniqueness of a classical (or viscosity) solution to \eqref{6.45.w}. Consequently, the optimal constraint parameter process $\nu^*$ and market completion parameter process $\eta^*$ are given by
\begin{align}\label{ex:solution.dual}
    \left\{
    \begin{array}{ll}
    \nu^{*}_{t}=\arg\min\limits_{\nu\in\tilde{K}}\left\{\left(\frac{\alpha}{1-\alpha}\delta(\nu)+\frac{\alpha}{2(1-\alpha)^2}\|\theta^{\nu}(Y_{t})\|^2\right)v(t,Y_{t})+\frac{\alpha}{1-\alpha}\beta(Y_{t})q\theta^{\nu}(Y_{t})v_{y}(t,Y_{t})\right\},&  \\
    \eta^{*}_{t}= 
    \frac{(1-\alpha)\sqrt{1-\|q\|^2}\beta(Y_{t})v_{y}(t,Y_{t})}{ v(t,Y_{t})}.& 
    \end{array}
    \right.
\end{align}
Define the function 
\begin{align}
    \label{ex2:mathX}
\mathcal{X}(t,y,m):=&\left.\mathbb{E}\left[\left.\frac{1}{\lambda^* M^{\nu^*,\eta^*}_{t}}(\lambda^* M^{\nu^*,\eta^*}_{\tau})^{\frac{\alpha}{\alpha-1}}
h(Y_\tau)^{\frac{1}{1-\alpha}}
\right|\mathcal{F}_{t}\right]\right|_{M^{\nu^*,\eta^*}_{t}={m}/{\lambda^{*}},Y_t=y}
\nonumber\\
=& \frac{1}{m}
\mathbb{E}\bigg[\left[\lambda^* M^{\nu^*,\eta^*}_{t}
e^{-\int_t^\tau [(\theta^{\nu^*}(Y_s))^{\mathsf{T}}dW_{1s}-\eta_s^*dW_{2s}]-\int_t^\tau\left[\frac{1}{2}[\|\theta^{\nu^*}(Y_s)\|^2+[\eta_s^*]^2]-r(Y_t)-\delta(\nu_{s}^*)\right]ds}\right]^{\frac{\alpha}{\alpha-1}}
\nonumber\\
&\quad \quad \,\,\,\,\times h(Y_\tau)^{\frac{1}{1-\alpha}}\bigg|
M^{\nu^*,\eta^*}_{t}=\frac{m}{\lambda^{*}},Y_t=y
\bigg]
\nonumber\\
=&m^{\frac{1}{\alpha-1}}\mathbb{E}\left[\left.\left[M^{\nu^*,\eta^*}_{\tau}\right]^{\frac{\alpha}{\alpha-1}}h(Y_\tau)^{\frac{1}{1-\alpha}}\right|M^{\nu^*,\eta^*}_{t}=1,Y_t=y\right]
\nonumber\\
:=&C(t,y) m^{\frac{1}{\alpha-1}}.
\end{align}
Denote the inverse function of $\mathcal{X}(t,y,\cdot)$ by $\mathcal{Y}(t,y,\cdot)$, i.e., $\mathcal{Y}(t,y,\mathcal{X}(t,y,m))=m$. Then
\begin{align}
\label{6.47.ww}
    \mathcal{Y}(t,y,x)=C(t,y)^{1-\alpha}x^{\alpha-1},\quad x>0.
\end{align}
Let $\eta^*$ and $\nu^*$ be given by \eqref{ex:solution.dual}, and let $\lambda^*=\lambda^*(\eta^*,\nu^*)$ with $\lambda^*(\cdot,\cdot)$ defined by \eqref{ex:lambda}. Then $(\nu^{*},\eta^{*},\lambda^{*})$ solves the dual problem \eqref{ex:dual}. Define
$$X^*:=x^*_{h_A}(\lambda^*M^{\nu^*,\eta^*}_{\tau},Y_{\tau})=(\lambda^*M^{\nu^*,\eta^*}_{\tau}/h(Y_\tau))^{\frac{1}{\alpha-1}}.$$ 
By Proposition \ref{primaldualcons},  there exists a portfolio process $\pi^{\nu^*}$ such that the optimal wealth process $X^{\nu^*,*}_t$ for problem \eqref{A*.def} satisfies 
\begin{eqnarray}\label{ex2:X}
 X^{\nu^*,*}_t=\frac{1}{M^{\nu^*,\eta^*}_t}\mathbb{E}\left[X^*M^{\nu^*,\eta^*}_{\tau}|\mathcal{F}_t\right],
\end{eqnarray}
and
\begin{eqnarray}\label{ex2:dX}
dX^{\nu^*,*}_t 
\hspace{-0.3cm}&=&\hspace{-0.3cm} 
\left[r(Y_t)+(\pi_t^{\nu^*})^{\mathsf{T}}(\mu(Y_t)-r(Y_t)\mathbf{1})\right]X^{\nu^*,*}_tdt
+X^{\nu^*,*}_t(\pi_t^{\nu^*})^{\mathsf{T}}\sigma(Y_t) dW_{1t},
\end{eqnarray}
with 
\begin{align}
\label{7.16.w}
    X^{\nu^*,*}_{0}=1,\quad X^{\nu^*,*}_{\tau}=X^*.
\end{align}
By the definitions of $\mathcal{X}$ and $\mathcal{Y}$, we have 
\begin{eqnarray}\label{ex2:M=Y}
\lambda^*M_t^{\nu^*,\eta^*}=\mathcal{Y}(t,Y_t,\mathcal{X}(t,Y_t,\lambda^*M_t^{\nu^*,\eta^*})),
\end{eqnarray}
and 
\begin{eqnarray}\label{ex2:macthX}
\mathcal{X}(t,Y_t,\lambda^*M_t^{\nu^*,\eta^*})
\hspace{-0.3cm}&=&\hspace{-0.3cm}
\frac{1}{\lambda^*M_t^{\nu^*,\eta^*}}\mathbb{E}\left[(\lambda^*M_t^{\nu^*,\eta^*})^{\frac{\alpha}{\alpha-1}}h(Y_\tau)^{\frac{1}{1-\alpha}}\bigg|\mathcal{F}_t\right].
\end{eqnarray}
Combining \eqref{ex2:X} and \eqref{ex2:macthX} gives $X^{\nu^*,*}_t=\mathcal{X}(t,Y_t,\lambda^*M_t^{\nu^*,\eta^*})$, which along with \eqref{ex2:M=Y} implies
\begin{eqnarray}\label{ex2:y=y}
\lambda^*M_t^{\nu^*,\eta^*}=\mathcal{Y}(t,Y_t,X^{\nu^*,*}_t).
\end{eqnarray}
In light of It\^o's formula, \eqref{SDE2.2}, \eqref{ex2:dX} as well as \eqref{ex2:y=y}, we deduce that
\begin{eqnarray}\label{ex2:dM2}
\hspace{-0.3cm}&&\hspace{-0.3cm}
\lambda^*dM_t^{\nu^*,\eta^*}
=
d\mathcal{Y}(t,Y_t,X^{\nu^*,*}_t)
\nonumber\\
\hspace{-0.3cm}&=&\hspace{-0.3cm}
\mathcal{Y}_t(t,Y_t,X^{\nu^*,*}_t)dt+\mathcal{Y}_x(t,Y_t,X^{\nu^*,*}_t)dX^{\nu^*,*}_t+\frac{1}{2}\mathcal{Y}_{xx}(t,Y_t,X^{\nu^*,*}_t)d\langle X^{\nu^*,*},X^{\nu^*,*} \rangle_t
\nonumber\\
\hspace{-0.3cm}&&\hspace{-0.3cm}
+\mathcal{Y}_y(t,Y_t,X^{\nu^*,*}_t)dY_t+\frac{1}{2}\mathcal{Y}_{yy}(t,Y_t,X^{\nu^*,*}_t)d\langle Y,Y \rangle_t+\mathcal{Y}_{xy}(t,Y_t,X^{\nu^*,*}_t)d\langle X^{\nu^*,*},Y\rangle_t
\nonumber\\
\hspace{-0.3cm}&=&\hspace{-0.3cm}
\mathcal{Y}_t(t,Y_t,X^{\nu^*,*}_t)dt+\left[r(Y_t)+(\pi^{\nu^*}_t)^{\mathsf{T}}(\mu(Y_t)-r(Y_t)\mathbf{1})\right]X^{\nu^*,*}_t\mathcal{Y}_x(t,Y_t,X^{\nu^*,*}_t)dt
\nonumber\\
\hspace{-0.3cm}&&\hspace{-0.3cm}
+\frac{1}{2}\mathcal{Y}_{xx}(t,Y_t,X^{\nu^*,*}_t)(X^{\nu^*,*}_t)^2\|(\pi_t^{\nu^*})^{\mathsf{T}}\sigma(Y_t)\|^2dt+\mathcal{Y}_{x}(t,Y_t,X^{\nu^*,*}_t)X^{\nu^*,*}_t(\pi_t^{\nu^*})^{\mathsf{T}}\sigma(Y_t) dW_{1t}
\nonumber\\
\hspace{-0.3cm}&&\hspace{-0.3cm}
+\mathcal{Y}_y(t,Y_t,X^{\nu^*,*}_t)b(Y_t)dt+\mathcal{Y}_y(t,Y_t,X^{\nu^*,*}_t)\beta(Y_t)qdW_{1t}+\mathcal{Y}_y(t,Y_t,X^{\nu^*,*}_t)\beta(Y_t)\sqrt{1-\|q\|^2}dW_{2t}
\nonumber\\
\hspace{-0.3cm}&&\hspace{-0.3cm}
+\frac{1}{2}\mathcal{Y}_{yy}(t,Y_t,X^{\nu^*,*}_t)\beta(Y_t)^2dt+\mathcal{Y}_{xy}(t,Y_t,X^{\nu^*,*}_t)X^{\nu^*,*}_t\beta(Y_t)q\sigma(Y_t)^{\mathsf{T}}\pi^{\nu^*}_tdt.
\end{eqnarray}
On the other hand, by \eqref{dM2} and \eqref{ex2:y=y}, we have
\begin{eqnarray}\label{ex2:dM}
\lambda^*dM_t^{\nu^*,\eta^*}
\hspace{-0.3cm}&=&\hspace{-0.3cm}
\lambda^*M_t^{\nu^*,\eta^*}\left[ -(r(Y_t)+ \delta(\nu^*_t))dt - (\theta^{\nu^*}(Y_t))^\top dW_{1t} + \eta^*_t dW_{2t} \right]
\nonumber\\
\hspace{-0.3cm}&=&\hspace{-0.3cm}
\mathcal{Y}(t,Y_t,X^{\nu^*,*}_t)\left[ -(r(Y_t)+ \delta(\nu^*_t))dt - (\theta^{\nu^*}(Y_t))^\top dW_{1t} + \eta^*_t dW_{2t} \right].
\end{eqnarray}
Comparing \eqref{ex2:dM2} and \eqref{ex2:dM}, we obtain 
\begin{eqnarray}
\label{6.55.w}
\hspace{-0.3cm}&&\hspace{-0.3cm}
\mathcal{Y}_{x}(t,Y_t,X^{\nu^*,*}_t)X^{\nu^*,*}_t(\pi_t^{\nu^*})^{\mathsf{T}}\sigma(Y_t) +\mathcal{Y}_y(t,Y_t,X^{\nu^*,*}_t)\beta(Y_t)q
\nonumber\\
\hspace{-0.3cm}&=&\hspace{-0.3cm}
-\mathcal{Y}(t,Y_t,X^{\nu^*,*}_t)(\theta(Y_t)+\sigma(Y_t)^{-1}\nu_t^*)^{\mathsf{T}},
\end{eqnarray}
and
\begin{eqnarray}
\label{6.56.w}
\mathcal{Y}_y(t,Y_t,X^{\nu^*,*}_t)\beta(Y_t)\sqrt{1-\|q\|^2}=\mathcal{Y}(t,Y_t,X^{\nu^*,*}_t)\eta^*_t.
\end{eqnarray}
From \eqref{6.47.ww}, \eqref{6.55.w}, and \eqref{6.56.w}, it follows that
\begin{eqnarray}\label{ex2:pi}
\pi^{\nu^*}_t
\hspace{-0.3cm}&=&\hspace{-0.3cm}
-(\sigma(Y_t)\sigma(Y_t)^{\mathsf{T}})^{-1}[\mu(Y_t)-r(Y_t)\mathbf{1}+\nu_t^*]\frac{\mathcal{Y}(t,Y_t,X^{\nu^*,*}_t)}{\mathcal{Y}_{x}(t,Y_t,X^{\nu^*,*}_t)X^{\nu^*,*}_t}
\nonumber\\
\hspace{-0.3cm}&&\hspace{-0.3cm}
-(q\sigma(Y_t)^{-1})^{\mathsf{T}}\frac{\mathcal{Y}_{y}(t,Y_t,X^{\nu^*,*}_t)\beta(Y_t)}{\mathcal{Y}_{x}(t,Y_t,X^{\nu^*,*}_t) X^{\nu^*,*}_t}\nonumber\\
\hspace{-0.3cm}&=&\hspace{-0.3cm}
-(\sigma(Y_t)\sigma(Y_t)^{\mathsf{T}})^{-1}[\mu(Y_t)-r(Y_t)\mathbf{1}+\nu_t^*]\frac{\mathcal{Y}(t,Y_t,X^{\nu^*,*}_t)}{\mathcal{Y}_{x}(t,Y_t,X^{\nu^*,*}_t)X^{\nu^*,*}_t}
\nonumber\\
\hspace{-0.3cm}&&\hspace{-0.3cm}
-(q\sigma(Y_t)^{-1})^{\mathsf{T}}\frac{\mathcal{Y}(t,Y_t,X^{\nu^*,*}_t)\eta^*_t}{\mathcal{Y}_{x}(t,Y_t,X^{\nu^*,*}_t) X^{\nu^*,*}_t\sqrt{1-\|q\|^2}}
\nonumber\\
\hspace{-0.3cm}&=&\hspace{-0.3cm}
\frac{1}{1-\alpha}\left[(\sigma(Y_t)\sigma(Y_t)^{\mathsf{T}})^{-1}[\mu(Y_t)-r(Y_t)\mathbf{1}+\nu_t^*]+\frac{(q\sigma(Y_t)^{-1})^{\mathsf{T}}\eta^*_t}{\sqrt{1-\|q\|^2}}\right].
\end{eqnarray}
In the special case when the interest rate $r(\cdot)\equiv r>0$, the mean return $ \mu(\cdot)\equiv \mu$, and the volatility matrix $\sigma(\cdot)\equiv\sigma$ are constant (i.e., no stochastic factors), the bi-variate function $v(t,y)$ degenerates into a uni-variate function of $t$. In this case, the optimal portfolio strategy $\pi^{\nu^*}$ is given by \eqref{ex2:pi}, where
\begin{align}\label{ex2:dualsolution}
\left\{
\begin{array}{ll}
     \nu^*_t\equiv \arg\min_{\nu\in\tilde{K}}\left\{\|\theta+\sigma^{-1}\nu\|^2+2(1-\alpha)\delta(\nu)\right\}, &t\geq 0,  \\
     \eta^{*}_{t}\equiv 0,& t\geq 0.
\end{array}
  \right.
\end{align}

Combining the above arguments and Theorem \ref{thm3.1} yields the next result, whose proof is omitted.
\begin{prop}
\label{prop7.1}
    Suppose that $\gamma=1$. The solution to the auxiliary one-period terminal wealth
optimization problem \eqref{A*.def} \emph{(}or, equivalently, \eqref{7.1}\emph{)} is given by \eqref{ex2:dX} and \eqref{7.16.w}, where the corresponding optimal portfolio strategy is given by \eqref{ex:solution.dual} and \eqref{ex2:pi}. Furthermore, the optimal solution to the primal periodic evaluation problem \eqref{problem} is given by 
\begin{align}
     dX^{\nu^*,*}_t 
=& 
\left[r(Y_t)+(\pi_t^{\nu^*})^{\mathsf{T}}(\mu(Y_t)-r(Y_t)\mathbf{1})\right]X^{\nu^*,*}_tdt
+X^{\nu^*,*}_t(\pi_t^{\nu^*})^{\mathsf{T}}\sigma(Y_t) dW_{1t},\,\, t\in [T_{i},T_{i+1}], \,i\geq 0,\nonumber \\
X^{\nu^*,*}_{0}=&x,\quad X^{\nu^*,*}_{T_{i}}=X^{\nu^*,*}_{T_{i-1}}x^*_{h_{A^*}}\left(\lambda^*\frac{Z^{\nu^*,\eta^*}_{T_i}/B^{\nu^*}_{T_i}}{Z^{\nu^*,\eta^*}_{T_{i-1}}/B^{\nu^*}_{T_{i-1}}},Y_{T_{i}}\right),\quad i\geq 1, \nonumber
\end{align}
where $\nu^*$, $\eta^*$, and $\pi_t^{\nu^*}$ are given by \eqref{ex:solution.dual} and \eqref{ex2:pi}; $A^*$ is the unique fixed point of \eqref{A*.def}; and $\lambda^*=\lambda^*(\eta^*,\nu^*)$ with $\lambda^*(\cdot,\cdot)$ defined by \eqref{ex:lambda}.
\end{prop}

Similarly, for the case $\gamma=0$, we have the next result.

\begin{prop}
\label{prop7.2}
Suppose that $\gamma=0$.    The assertions of Proposition \ref{prop7.1} hold analogously, with the modification that problem \eqref{ex:prob} is substituted with
\begin{eqnarray}
\sup_{X\in\mathcal{U}_0(1,y)}\mathbb{E}\left[\frac{1}{\alpha}X_{\tau}^{\alpha}\left(h(Y_{\tau})+A^{*}(Y_{\tau})\right)\right],\nonumber
\end{eqnarray}
a problem that admits an identical treatment to \eqref{ex:prob}.
\end{prop}

\section{Numerical Illustrations}\label{sec:numerical}

This section present some numerical examples of the optimal portfolio under the ratio-type periodic evaluation. We first keep the same convex trading constraint and compare the stochastic factor model with the benchmark model without stochastic factors. We then keep the same stochastic factor and compare the optimal portfolio with and without convex trading constraints. In addition, we also report sensitivity results with respect to the risk aversion parameter $\alpha$. Finally, we illustrate the comparison of the induced optimal wealth processes between our ratio-type periodic evaluation and the difference-type periodic evaluation in \cite{TZ23}.

\subsection{Numerical setup}\label{subsec:numerical-parameters}

We consider a market model with two risky assets and a stochastic factor process. The stochastic factor $Y$ is interpreted as the business-cycle regimes: negative values represent recession states and positive values represent expansion states. Let  $\alpha=0.5$ and 
the evaluation period  be normalized to one year, i.e., $\tau=1$. The benchmark growth rate $\kappa=0.03$ is interpreted as a moderate annual hurdle rate for ratio-type performance evaluation.  We take the subjective discount rate $\rho=0.3$ and $\gamma=1$. The stochastic factor has mean-reverting drift $b(y)=-0.8y$, constant volatility $\beta(y)=0.25,$  the correlation parameter $q=(0.25,0.10)$.  The factor preference is given by
$$
h(y)=0.75+\frac{0.25}{1+\exp(-y)},
$$
which indicates that higher factor values slightly increase the overall performance measure. Note that $h$ is bounded away from zero and below one, in line with the
assumptions of the paper.
The market coefficients are specified by
\begin{align}
r(y)&=0.03+0.01\tanh(y),\nonumber\\
\mu(y)&=
\begin{pmatrix}
r(y)+0.035+0.025\tanh(y)\\
r(y)+0.030-0.015\tanh(y)
\end{pmatrix},\nonumber\\
\sigma(y)&=
\begin{pmatrix}
0.34-0.04\tanh(y) & 0.06\\
0.06 & 0.28+0.03\tanh(y)\nonumber
\end{pmatrix},
\end{align}
where $\tanh(y):=\frac{e^y-e^{-y}}{e^y+e^{-y}}$.
The quantities
\(\mu_i(y)-r(y)\), \(i=1,2\), represent the excess return of each risky asset over the risk-free asset. In this example,
\[
\mu_1(y)-r(y)=0.035+0.025\tanh(y),\qquad
\mu_2(y)-r(y)=0.030-0.015\tanh(y).
\]
Therefore, Stock 1 can be interpreted as a cyclical asset, as its excess return \(\mu_1(y)-r(y)\) increases when the factor moves to expansion states; while Stock 2 can be regarded as a defensive asset, as its excess return \(\mu_2(y)-r(y)\) is relatively higher during recession states. 

We consider the no-short-selling and
borrowing constraint
\begin{eqnarray}\label{constraint}
K=\left\{\pi\in\mathbb{R}^2_+:\ \pi_1+\pi_2\leq a\right\},
\qquad a=0.6.
\end{eqnarray}
In the unconstrained benchmark, we set
$K=\mathbb{R}^2$.

\subsection{Factor vs. Non-Factor models under trading constraints}
\label{subsec:numerical-factor}
In the first numerical experiment, we compare the portfolio weights in two risky assets within one period (from $t=0$ to $t=1$) in a stochastic-factor model and in  the benchmark model without stochastic factor. In the benchmark model, the market coefficients are fixed at the level $y=0$. Both models are subject to the same constraint \eqref{constraint}.

Figure \ref{fig:numerical-factor-comparison} shows that the benchmark model results in constant constrained portfolio within the period.  In contrast, the optimal portfolio in the model with stochastic factor reacts strongly to the realized stochastic factor path. For the simulated paths, 
note that $\mu_1(Y_t)-r(Y_t)$ is increasing in $Y_t$, and therefore Stock 1 becomes more attractive in expansion states. In Figure \ref{fig:numerical-factor-comparison}, this corresponds roughly to the interval $t\in[0.36,0.58]$, where $Y_t$ is clearly positive and reaches its peak around $t=0.42$. During this time period, the portfolio weight in Stock 1 climbs and the portfolio weight in Stock 2 drops.
Meanwhile, $\mu_2(Y_t)-r(Y_t)$ is decreasing in $Y_t$, and Stock 2 becomes relatively more attractive when $Y_t$ stays in the recession states. It is observed
in the later part of the figure such as $t\in[0.80,1]$,  the optimal allocation is transferred from Stock 1 to the defensive asset Stock 2.
The economic message of Figure \ref{fig:numerical-factor-comparison} is that the stochastic factor substantially changes the relative attractiveness of assets period by period and hence results various patterns of the constrained portfolio under periodic evaluation. Comparing with repeated constant portfolio in all periods, the factor model can produce more diverse portfolio behavior in each period based on the evolution of factor process during that period.

\begin{figure}[H]
\centering
\includegraphics[width=0.65\textwidth]{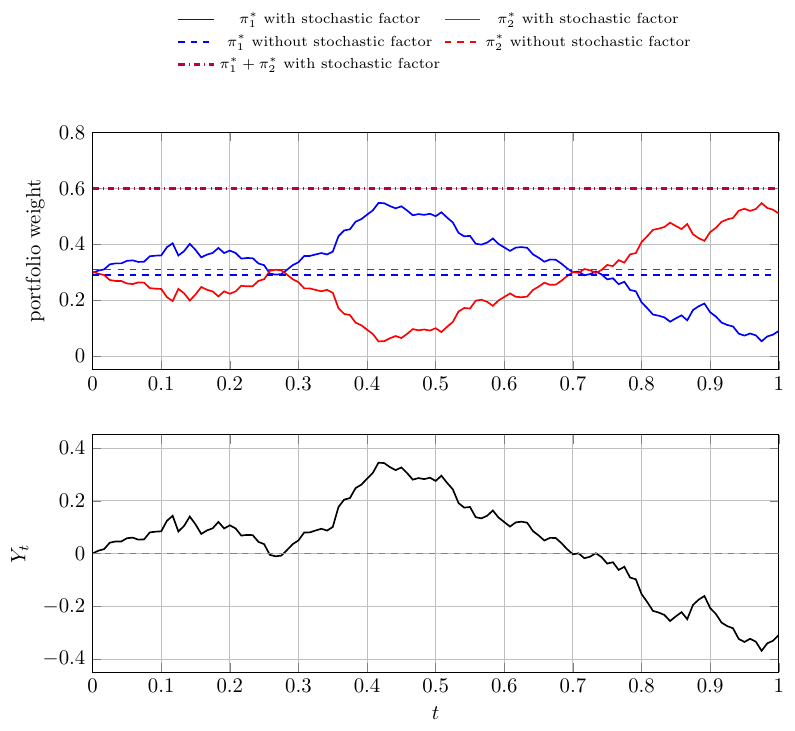}
\caption{Constrained optimal portfolio with and without stochastic factor}
\label{fig:numerical-factor-comparison}
\end{figure}

\subsection{Constrained vs. Unconstrained portfolios with stochastic factors}
\label{subsec:numerical-constraint}

The second example keeps the same stochastic factor dynamics and compares the optimal portfolio with and without the convex trading constraint during one period. 

Figure \ref{fig:numerical-constraint-comparison} shows that the convex constraint affects both the total risky exposure and the allocation weight between two risky assets. Without the constraint, the optimal strategy chooses $\pi_1+\pi_2$ close to 1 along the simulated path, and the total exposure becomes slightly larger than 1 in the recession state. When trading constraint is in force, the total risky exposure is reduced noticeably and the constraint clearly suppresses the total investment in risky assets.

More importantly, the constrained portfolio is not obtained by simply scaling down the unconstrained portfolio. The constraint also changes the relative weights in two assets. For example, during the early stage of the simulated path around $t=0.05$, the unconstrained strategy still assigns a slightly larger weight in Stock 2 than in Stock 1. After the constraint is imposed, however, the constrained strategy assigns a larger weight to Stock 1. This overturn occurs because the total risky allocation becomes scarce under the constraint. When $Y_t$ is mildly positive, the excess return $\mu_1(Y_t)-r(Y_t)$ of Stock 1 has already improved, so Stock 1 becomes more attractive at the margin once the investor cannot freely increase the total risky exposure. The convex constraint therefore forces a reallocation between the risky assets, rather than a brutal reduction in both assets.

The same phenomenon can be observed in the more extreme states. Around $t=0.42$, where the factor is strongly positive, the constrained portfolio puts most of the admissible risky exposure in Stock 1. Around $t=0.98$, where the factor is strongly negative, it shifts most of the admissible risky exposure to Stock 2. Hence, during that evaluation period, the constraint reduces the total risky investment while retains the factor-driven response in investment behavior. Again, these observations indicate that the factor model will result more diverse portfolio patterns period by period, heavily relying on the exogenous performance of factor process during each period.

\begin{figure}[H]
\centering
\includegraphics[width=0.65\textwidth]{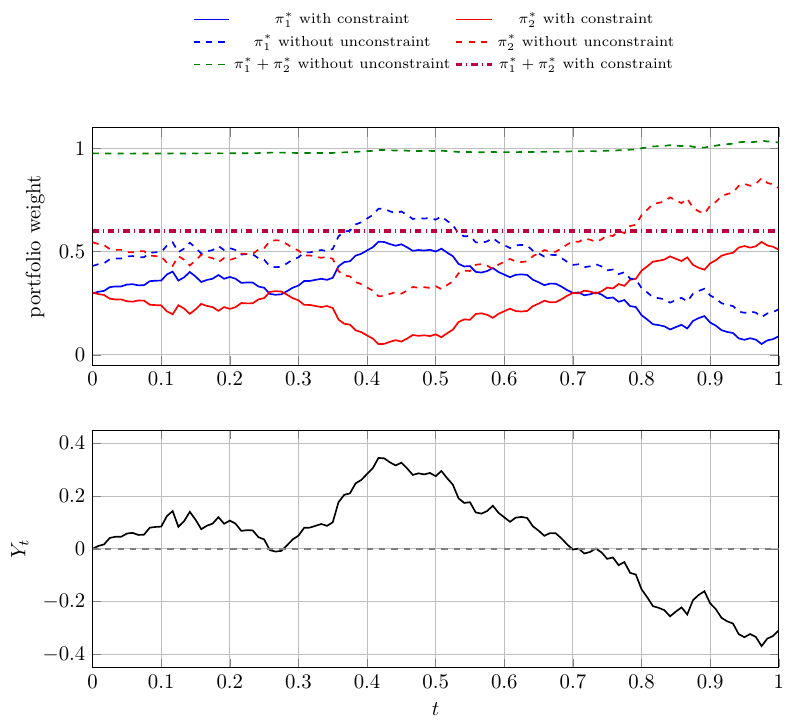}
\caption{Optimal portfolio with and without convex constraints}
\label{fig:numerical-constraint-comparison}
\end{figure}

\subsection{Sensitivity with respect to risk aversion parameter}
\label{subsec:numerical-sensitivity}
We next report the sensitivity results of the optimal portfolio strategy $\pi^*$ on the risk-aversion parameter $\alpha$ in Figure \ref{fig:numerical-alpha-sensitivity}. In the unconstrained case, shown in the right column, increasing $\alpha$ mainly enlarges the risky positions.  In the constrained case, on the other hand, is more subtle, as shown in the left column.  From their local behavior of paths, especially during the recession period around $t=0.8$ to $t=1$, the portfolio weights exhibit non‑monotonic trend in response to changes of $\alpha$. For instance, $\pi_1^*$ does not simply increase as $\alpha$ becomes larger; it may rise from a low value of $\alpha$ to an intermediate value and then fall when $\alpha$ is further increased. The reason is that a larger $\alpha$ increases the overall demand for risky assets, but the convex constraint caps the total risky exposure. Once the constraint becomes binding, a further increase in risk tolerance cannot be expressed by increasing $\pi_1^*+\pi_2^*$. Instead, it changes the weight between two risky assets.  In recession states such as $t\in[0.8,1]$, this reallocation favors the defensive asset Stock 2. Conversely, in expansion states during $t\in[0.4,0.6]$, the reallocation favors the cyclical asset Stock 1.

\begin{figure}[H]
\centering
\includegraphics[width=0.75\textwidth]{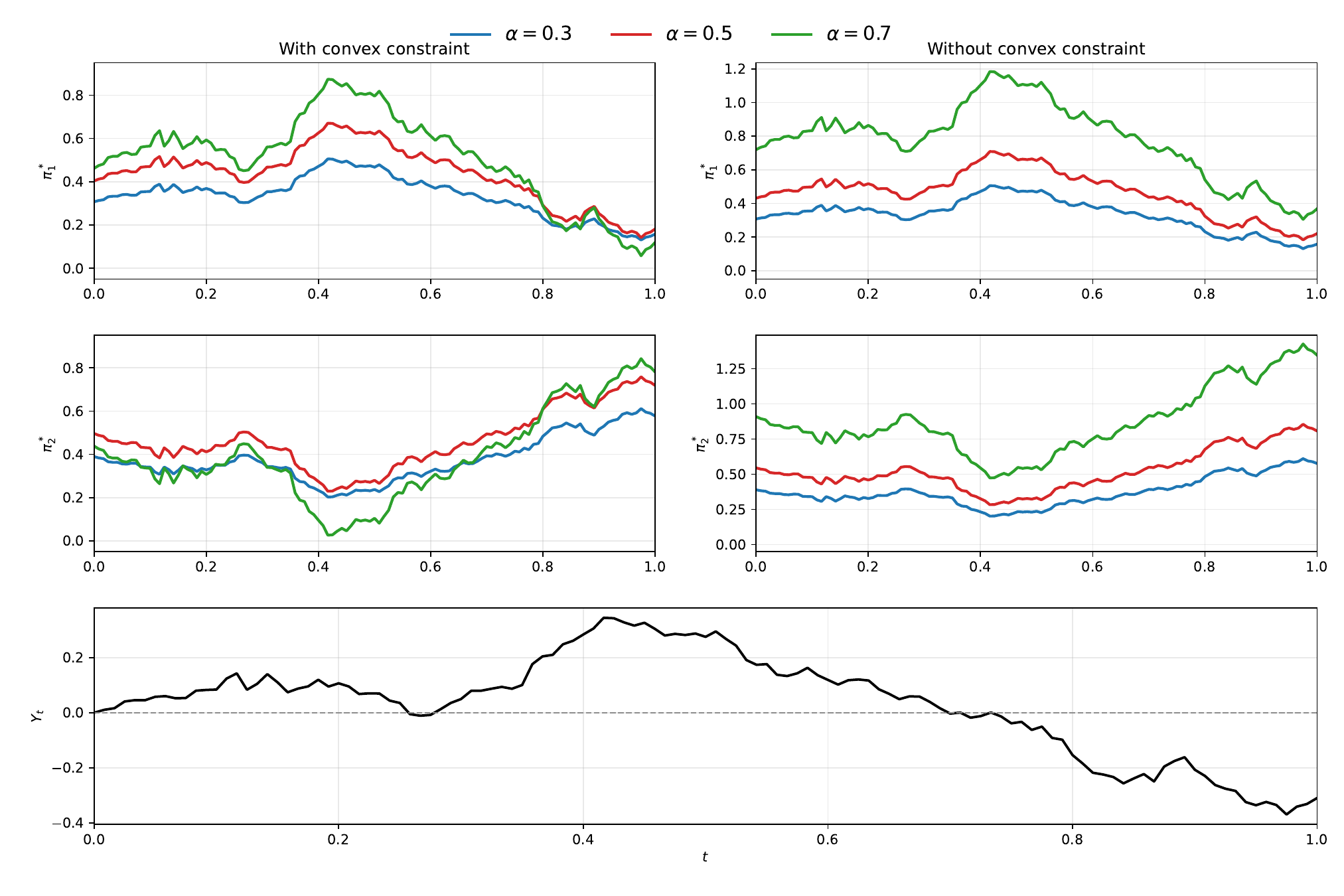}
\caption{Sensitivity of the optimal portfolio $\pi^*$ to $\alpha$ with and without convex constraints}
\label{fig:numerical-alpha-sensitivity}
\end{figure}

\subsection{Optimal wealth under ratio-type vs. difference-type criteria}
\label{subsec:numerical-difference}
 Finally, we conduct a numerical experiment to compare the induced optimal wealth processes under our ratio-type periodic evaluation and under the difference-type periodic evaluation in \cite{TZ23}. We consider the benchmark case $\gamma=1$ and $\kappa=0$. Then the ratio criterion reduces to the gross return over one evaluation period, $P_i=\frac{X_{T_i}}{X_{T_{i-1}}}.$ For the difference-type criterion, the performance variable is
$D_i=X_{T_i}-X_{T_{i-1}}.$ We illustrate the simulated results along the same stock-price path starting from three different initial wealth levels. 

Under the ratio-type criterion, the objective function creates a stronger incentive for the wealth process to catch up when the wealth level is low. The optimal strategy induced by such an objective therefore exhibits a state-dependent risk-taking incentive: the lower the wealth level, the stronger the incentive to take risk in pursuit of wealth accumulation. Figure \ref{fig:numerical-wealth-criteria} illustrates that the state-dependent risk-taking incentive can effectively promote wealth growth when stock market conditions are favorable (e.g., when $S_{t}/S_{0}\geq 1$). However, when market conditions deteriorate (e.g., when $S_{t}/S_{0}< 1$), this same incentive can expose the investor to excessive risk, thereby impeding wealth accumulation. Consequently, the ratio-type criterion is more effective under favorable market conditions, as it encourages the investor to exploit growth opportunities. By contrast, when market conditions are unfavorable, the more conservative strategy induced by the difference-type criterion is better suited to limiting wealth losses, as illustrated in Figure \ref{fig:numerical-wealth-unfavorable}.

\begin{figure}[H]
\centering
\includegraphics[width=0.95\textwidth]{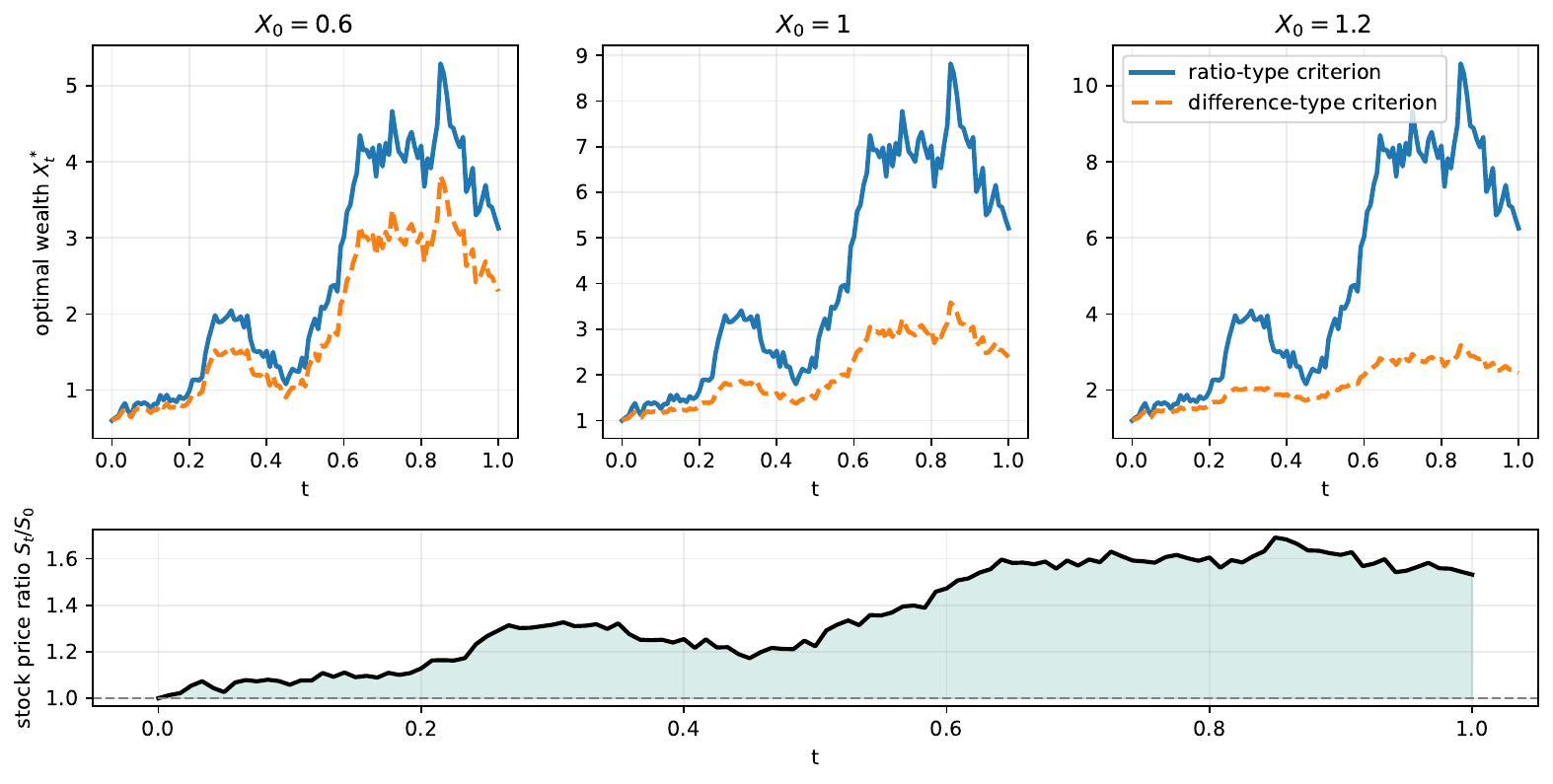}
\caption{Optimal wealth processes under the ratio-type and difference-type
criteria along the same favorable stock-price path}
\label{fig:numerical-wealth-criteria}
\end{figure}

\begin{figure}[H]
\centering
\includegraphics[width=0.95\textwidth]{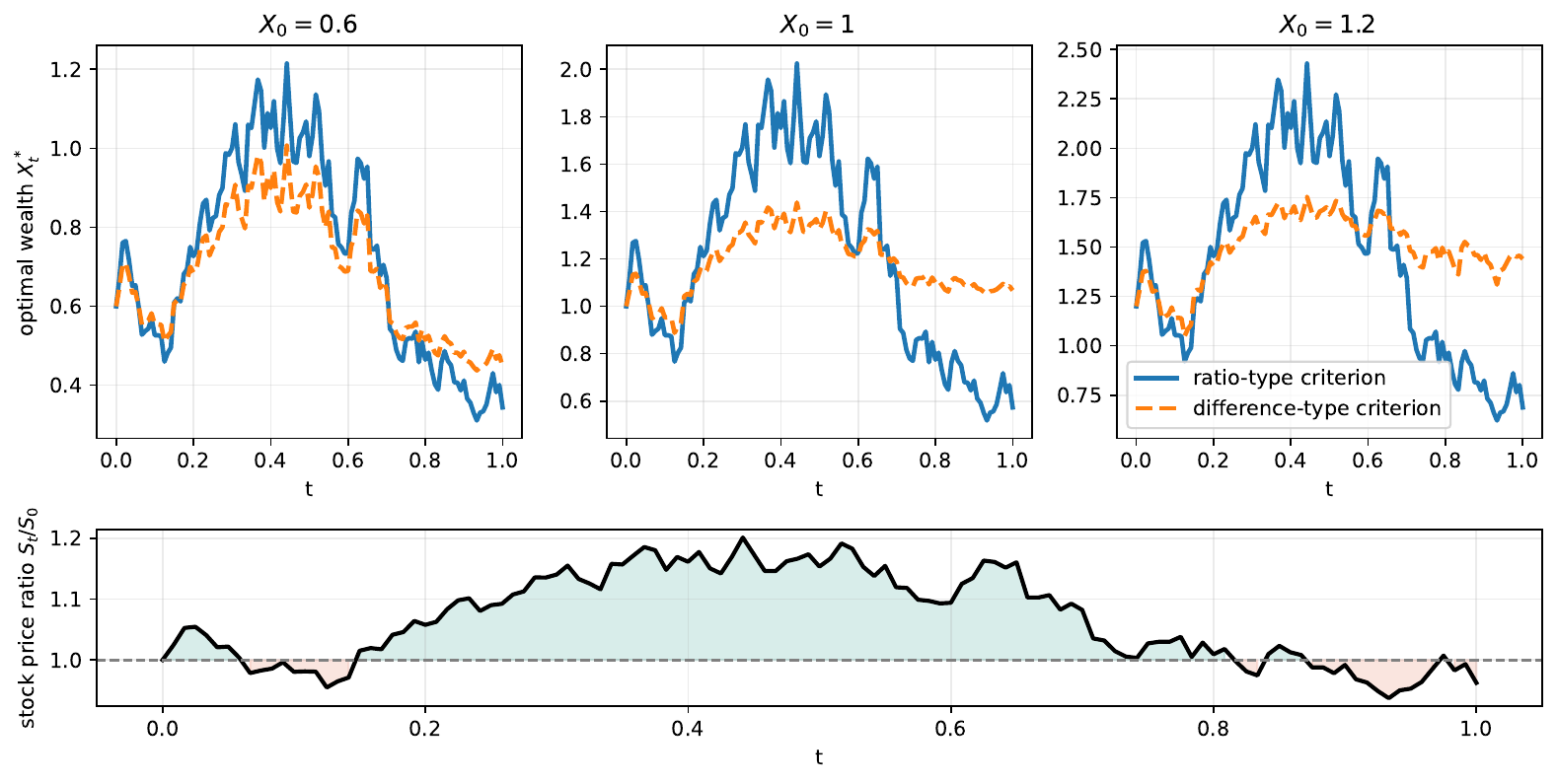}
\caption{Optimal wealth processes under the ratio-type and difference-type criteria along the same favorable/unfavorable stock-price path}
\label{fig:numerical-wealth-unfavorable}
\end{figure}

\ \\
\textbf{Acknowledgements}:  W. Wang is supported by the National Natural Science Foundation of China under no. 12571508, no. 12171405, and no. 11661074, and the Natural Science Foundation of Fujian Province
under no. 2024J01480. 
X. Yu is supported by the Hong Kong RGC General Research Fund (GRF) under grant no. 15304122 and grant no. 15306523 and by the Research Centre for
Quantitative Finance at the Hong Kong Polytechnic University under grant no. P0042708.

\appendix
\section{Technical Proofs}\label{app:technical-proofs}
\subsection{Proof of Proposition \ref{prop2.1}}\label{app:proof-prop21}
\begin{proof}
Let us assume \eqref{E=1} holds.
By Lemma \ref{lem2.1}, we know that, for any fixed $y\in\mathbb{R}$, the function $\mathbb{R}_{+}\ni x\mapsto h_{A}(x,y)$ is continuous and concave with $\frac{\partial}{\partial x}h_{A}(\infty,y)=0$. Hence, for each fixed $y\in\mathbb{R}$, $x_{h_A}^*(u,y)$ is the maximizer of the function $[0,\infty)\ni x\mapsto h_{A}(x,y)-ux$ for any $u\in\mathbb{R}_+$.
Then, for any $X\in\mathcal{F}_{\tau}^+$ and $(\eta,\lambda)\in\mathcal{H}\times\mathbb{R}_+$, using \eqref{2.8} and \eqref{phi(y)=h}, one can get
\begin{eqnarray}
\label{E<h}
h_A(X,Y_{\tau})-\lambda \frac{Z_{\tau}^{\nu,\eta}}{B^{\nu}_{\tau}}X\leq
h_A(x^*_{h_A}(\lambda\frac{Z_{\tau}^{\nu,\eta}}{B^{\nu}_{\tau}},Y_{\tau}),Y_{\tau})-\lambda x^*_{h_A}(\lambda\frac{Z_{\tau}^{\nu,\eta}}{B^{\nu}_{\tau}},Y_{\tau})\frac{Z_{\tau}^{\nu,\eta}}{B^{\nu}_{\tau}}.
\end{eqnarray}
By \eqref{dual.pro}, \eqref{E=1} and \eqref{E<h}, for each $(\eta,\lambda)\in\mathcal{H}\times\mathbb{R}_+$, it holds that
\begin{eqnarray}\label{dual.gap}
\inf_{\eta\in\mathcal{H},\lambda>0}L_{\nu}(\eta,\lambda)
\hspace{-0.3cm}&=&\hspace{-0.3cm}
\inf_{\eta\in\mathcal{H},\lambda>0}\left\{\mathbb{E}\left[h_A(x^*_{h_A}(\lambda\frac{Z^{\nu,\eta}_{\tau}}{B^{\nu}_{\tau}},Y_{\tau}),Y_{\tau})
-\lambda x^*_{h_A}(\lambda\frac{Z^{\nu,\eta}_{\tau}}{B^{\nu}_{\tau}},Y_{\tau})\frac{Z^{\nu,\eta}_{\tau}}{B^{\nu}_{\tau}}\right]+\lambda\right\}
\nonumber\\
\hspace{-0.3cm}&\leq&\hspace{-0.3cm}
\mathbb{E}\left[h_A(x^*_{h_A}(\lambda^*\frac{Z^{\nu,\eta^*}_{\tau}}{B^{\nu}_{\tau}},Y_{\tau}),Y_{\tau})
-\lambda^* x^*_{h_A}(\lambda^*\frac{Z^{\nu,\eta^*}_{\tau}}{B^{\nu}_{\tau}},Y_{\tau})\frac{Z^{\nu,\eta^*}_{\tau}}{B^{\nu}_{\tau}}\right]+\lambda^*
\nonumber\\
\hspace{-0.3cm}&=&\hspace{-0.3cm}
\mathbb{E}\left[\frac{1}{\alpha}\left(x^*_{h_A}(\lambda^*\frac{Z^{\nu,\eta^*}_{\tau}}{B^{\nu}_{\tau}},Y_{\tau})\right)^{\alpha}{h(Y_{\tau})}+{\frac{1}{\alpha}}A(Y_{\tau})\left(x^*_{h_A}(\lambda^*\frac{Z^{\nu,\eta^*}_{\tau}}{B^{\nu}_{\tau}},Y_{\tau})\right)^{\alpha(1-\gamma)}\right]
\nonumber\\
\hspace{-0.3cm}&\leq&\hspace{-0.3cm}
\sup_{X^{{\nu}}\in\Tilde{\mathcal{U}}^{\nu}_{0,\tau}(1,y)}\mathbb{E}\left[\frac{1}{\alpha}(X^{{\nu}})^{\alpha}{h(Y_{\tau})}+{\frac{1}{\alpha}}A(Y_{\tau})(X^{{\nu}})^{\alpha(1-\gamma)}\right],
\end{eqnarray}
which, together with \eqref{relationship}, implies that there is no duality gap. As a result, $X^{\nu,*}$ is the optimal solution to the unconstrained problem \eqref{problem3} and $(\eta^*,\lambda^*)$ is the optimal solution to the dual problem \eqref{dual.pro}. Hence, the first claim holds.

For the second claim, it is not hard to verify that the function $$\mathbb{R}_+\ni\lambda\mapsto\sup_{X\in\mathcal{F}_{\tau}^+}\mathbb{E}\left[h_A(X, Y_{\tau})-\lambda\frac{Z^{\nu,\eta}_{\tau}}{B^{\nu}_{\tau}}X\right]$$ is decreasing and convex. For any $y\in \mathbb{R}$ and $A(\cdot)\in{C}_b^{+}(\mathbb{R})$, we put
\begin{align}
\label{def.ell}
    \ell_{A,y}(x):=&h_{A}(x,y)-x\frac{\partial}{\partial x}h_{A}(x,y)
    \nonumber\\
    =&\frac{1}{\alpha}x^{\alpha}h(y)+{\frac{1}{\alpha}}A(y)x^{\alpha(1-\gamma)}
    -x\left(x^{\alpha-1}{h(y)}+A(y)(1-\gamma)x^{\alpha(1-\gamma)-1}\right)
    \nonumber\\
    =&\left(\frac{1}{\alpha}-1\right)x^{\alpha}h(y)+{\left(\frac{1}{\alpha}-(1-\gamma)\right)}A(y)x^{\alpha(1-\gamma)},\quad x\in\mathbb{R}_+.
\end{align}
By the expression of \eqref{def.ell}, it is obvious that, for any $y\in \mathbb{R}$ and $A(\cdot)\in{C}_b^{+}(\mathbb{R})$, the function $(0,\infty)\ni x\mapsto \ell_{A,y}(x)$ is bounded on any bounded interval and increasing on $(0,\infty)$, $\lim_{x\rightarrow\infty}\ell_{A,y}(x)=\infty\,{(0,\,\text{\text{resp.}})}$, and, $\lim_{x\rightarrow0+}\ell_{A,y}(x)=0 \,{(-\infty,\,\text{resp.})}$ for $\alpha\in(0,1)\, {(\alpha\in (-\infty,0),\,\text{resp.})}$.
Therefore, by Lemma 4.2 in \cite{KL91}, Lemma \ref{lem2.1}, \eqref{E<h}, \eqref{def.ell} and the monotone convergence theorem, it holds that
\begin{align}
\label{2.36.v0}
&\lim_{\lambda\rightarrow0+}\sup_{X\in\mathcal{F}_{\tau}^+}\mathbb{E}\left[h_A(X, Y_{\tau})-\lambda\frac{XZ^{\nu,\eta}_{\tau}}{B^{\nu}_{\tau}}\right]
\nonumber\\
=&
\lim_{\lambda\rightarrow0+}
\mathbb{E}\left[h_A(x^*_{h_A}(\lambda\frac{Z^{\nu,\eta}_{\tau}}{B^{\nu}_{\tau}},Y_{\tau}),Y_{\tau})
-\lambda x^*_{h_A}(\lambda\frac{Z^{\nu,\eta}_{\tau}}{B^{\nu}_{\tau}},Y_{\tau})\frac{Z^{\nu,\eta}_{\tau}}{B^{\nu}_{\tau}}\right]
\nonumber\\
=&
\lim_{\lambda\rightarrow0+}
\mathbb{E}\left[h_A(x^*_{h_A}(\lambda\frac{Z^{\nu,\eta}_{\tau}}{B^{\nu}_{\tau}},Y_{\tau}),Y_{\tau})
-x^*_{h_A}(\lambda\frac{Z^{\nu,\eta}_{\tau}}{B^{\nu}_{\tau}},Y_{\tau})\frac{\partial}{\partial x}h_{A}(x^*_{h_A}(\lambda\frac{Z^{\nu,\eta}_{\tau}}{B^{\nu}_{\tau}},Y_{\tau}),Y_{\tau})\right]
\nonumber\\
=&
\lim_{\lambda\rightarrow0+}
\mathbb{E}\left[\ell_{A,Y_{\tau}}(x^*_{h_A}(\lambda\frac{Z^{\nu,\eta}_{\tau}}{B^{\nu}_{\tau}},Y_{\tau}))\right]
\nonumber\\
=&\infty\,{(0,\,\text{resp.})},
\end{align}
for $\alpha\in(0,1)\,{(\alpha\in(-\infty,0),\,\text{resp.})}$, because we have $x_{h_A}^*(u,y)\uparrow\infty$ as $u\downarrow0$.
Similarly, we have
\begin{align}
\label{2.37.v0}
\lim_{\lambda\rightarrow\infty}\sup_{X\in\mathcal{F}_{\tau}^+}\mathbb{E}\left[h_A(X,Y_{\tau})-\lambda\frac{XZ^{\nu,\eta}_{\tau}}{B^{\nu}_{\tau}}\right]
=
\lim_{\lambda\rightarrow\infty}
\mathbb{E}\left[\ell_{A,Y_{\tau}}(x^*_{h_A}(\lambda\frac{Z^{\nu,\eta}_{\tau}}{B^{\nu}_{\tau}},Y_{\tau}))\right]=0\,{(-\infty,\,\text{resp.})},
\end{align}
for $\alpha\in(0,1)\,{(\alpha\in(-\infty,0),\,\text{resp.})}$, in view of $x_{h_A}^*(u,y)\downarrow0$ as $u\uparrow\infty$.
In addition, note that
\begin{eqnarray}
\frac{\partial}{\partial \lambda}L_{\nu}(\eta,\lambda)
\hspace{-0,3cm}&=&\hspace{-0.3cm}
\frac{\partial}{\partial \lambda}\left[\sup_{X\in\mathcal{F}_{\tau}^+}\mathbb{E}\left[h_A(X, Y_{\tau})-\lambda\frac{XZ^{\nu,\eta}_{\tau}}{B^{\nu}_{\tau}}\right]\right]+1
=
\frac{\partial}{\partial \lambda}\mathbb{E}\left[\ell_{A,Y_{\tau}}(x^*_{h_A}(\lambda\frac{Z^{\nu,\eta}_{\tau}}{B^{\nu}_{\tau}},Y_{\tau}))\right]+1
\nonumber\\
\hspace{-0.3cm}&=&\hspace{-0.3cm}
-\mathbb{E}\left[x^*_{h_A}\left(\lambda\frac{Z^{\nu,\eta}_{\tau}}{B^{\nu}_{\tau}},Y_{\tau}\right)\frac{Z^{\nu,\eta}_{\tau}}{B^{\nu}_{\tau}}\right]+1,\nonumber
\end{eqnarray}
which implies that there exists some $\lambda_0\in(0,\infty)$ such that $\frac{\partial}{\partial \lambda}L_{\nu}(\eta,\lambda)<0$ for $\lambda\in(0,\lambda_0)$ and $\frac{\partial}{\partial \lambda}L_{\nu}(\eta,\lambda)>0$ for $\lambda\in(\lambda_0,\infty)$ with $\lim_{\lambda\rightarrow\infty}\frac{\partial}{\partial \lambda}L_{\nu}(\eta,\lambda)=1$. This, together with \eqref{2.36.v0} and \eqref{2.37.v0}, yields that for $\eta\equiv\eta^*$, $$\lim_{\lambda\rightarrow0+}L_{\nu}(\eta^*,\lambda)=\lim_{\lambda\rightarrow\infty}L_{\nu}(\eta^*,\lambda)=\infty,\text{ for }\alpha\in(0,1),$$
and $$\lim_{\lambda\rightarrow0+}L_{\nu}(\eta^*,\lambda)=0,\,\,\lim_{\lambda\rightarrow\infty}L_{\nu}(\eta^*,\lambda)=\infty,\text{ for }\alpha\in(-\infty,0).$$ 
Hence, $L_{\nu}(\eta^*,\lambda)$ attains its infimum at the solution $\lambda^*\in(0,\infty)$.
Then, we have
\begin{eqnarray}
\hspace{-0.3cm}&&\hspace{-0.3cm}
\inf_{u\in\mathbb{R}_+}\left\{u\lambda+\sup_{X\in\mathcal{F}_{\tau}^+}\mathbb{E}\left[h_A(X,Y_{\tau})-u\lambda\frac{XZ^{\nu,\eta^*}_{\tau}}{B^{\nu}_{\tau}}\right]\right\}
\nonumber\\
\hspace{-0.3cm}&=&\hspace{-0.3cm}
\inf_{v\in\mathbb{R}_+}\left\{v+\sup_{X\in\mathcal{F}_{\tau}^+}\mathbb{E}\left[h_A(X,Y_{\tau})-v\frac{XZ^{\nu,\eta^*}_{\tau}}{B^{\nu}_{\tau}}\right]\right\}
=
L_{\nu}(\eta^*,\lambda^*).\nonumber
\end{eqnarray}
That is, the function 
\begin{align}
g(u)&:=u\lambda^*+\sup_{X\in\mathcal{F}_{\tau}^+}\mathbb{E}\left[h_A(X,Y_{\tau})-u\lambda^*\frac{XZ^{\nu,\eta^*}_{\tau}}{B^{\nu}_{\tau}}\right]
\nonumber\\
&=u\lambda^*+\mathbb{E}\left[
h_A(x^*_{h_A}(u\lambda^*\frac{Z^{\nu,\eta^*}_{\tau}}{B^{\nu}_{\tau}},Y_{\tau}),Y_{\tau})
-u\lambda^* x^*_{h_A}(u\lambda^*\frac{Z^{\nu,\eta^*}_{\tau}}{B^{\nu}_{\tau}},Y_{\tau})\frac{Z^{\nu,\eta^*}_{\tau}}{B^{\nu}_{\tau}}\right]\nonumber
\end{align}
achieves its infimum at $u=1$,
which gives that 
\begin{eqnarray}\label{g'1}
g^{\prime}(1)=\lambda^*-\lambda^*\mathbb{E}\left[x^*_{h_{A}}(\lambda^*\frac{Z^{\nu,\eta^*}_{\tau}}{B^{\nu}_{\tau}},Y_{\tau})\frac{Z^{\nu,\eta^*}_{\tau}}{B^{\nu}_{\tau}}\right]=0. 
\end{eqnarray}

Then, one can find some admissible portfolio $\hat{\pi}^{\nu}$ in the market $\mathcal{M}_{\nu}$ that finances $x^*_{h_{A}}(\lambda^*\frac{Z^{\nu,\eta^*}_{\tau}}{B^{\nu}_{\tau}},Y_{\tau})$ by an application of martingale representation theorem to the martingale $$\frac{X^{1,y,\hat{\pi}^{\nu}}_tZ^{\nu,{\eta}^*}_t}{B^{\nu}_t}=\mathbb{E}\left[x^*_{h_{A}}(\lambda^*\frac{Z^{\nu,\eta^*}_{\tau}}{B^{\nu}_{\tau}},Y_{\tau})\frac{Z^{\nu,\eta^*}_{\tau}}{B^{\nu}_{\tau}}\Bigg|\mathcal{F}_{t}\right],\quad t\in[0,\tau],$$
with $X^{1,y,\hat{\pi}^{\nu}}_0=1$ and $X^{1,y,\hat{\pi}^{\nu}}_\tau=x^*_{h_{A}}(\lambda^*\frac{Z^{\nu,\eta^*}_{\tau}}{B^{\nu}_{\tau}},Y_{\tau})$ (see Section 6 in \cite{KL91} for more details).
Furthermore, from \eqref{supermaringale}, we know $X^{1,y,\hat{\pi}^{\nu}}\frac{Z^{\nu,\eta}}{B^{\nu}}$ is a $\mathbb{P}$-supermartingale for any $\eta\in\mathcal{H}$. Hence, it holds that
$$\mathbb{E}\left[x^*_{h_{A}}(\lambda^*\frac{Z^{\nu,\eta}_{\tau}}{B^{\nu}_{\tau}},Y_{\tau})\frac{Z^{\nu,\eta}_{\tau}}{B^{\nu}_{\tau}}\right]=\mathbb{E}\left[X^{1,y,\hat{\pi}^{\nu}}_{\tau}\frac{Z^{\nu,\eta}_{\tau}}{B^{\nu}_{\tau}}\right]\leq 1,\quad \eta\in\mathcal{H},$$
which, combined with the arbitrary $\eta$, implies $x^*_{h_{A}}(\lambda^*\frac{Z^{\nu,\eta^*}_{\tau}}{B^{\nu}_{\tau}},Y_{\tau})\in\Tilde{\mathcal{U}}^{\nu}_{0,\tau}(1,y)$.
Consequently, together with \eqref{g'1}, we have \eqref{E=1} holds. Using the first claim, one knows that $X^{\nu,*}$ is the optimal solution to the unconstrained problem \eqref{problem3}.
\end{proof}

\subsection{Proof of Proposition \ref{prop4.5}}\label{app:proof-prop45}
\begin{proof}
By Theorem 2.3 in \cite{CH05}, 
there exists a portfolio process $\pi$ such that the resulting wealth process $X^{1,y,\pi,\nu^*}$ in the market $\mathcal{M}_{\nu^*}$ is given by
\begin{eqnarray}\label{eq:pro4.5:a}
X^{1,y,\pi,\nu^*}_t\frac{Z^{\nu^*,\eta^*}_t}{B^{\nu^*}_t}
\hspace{-0.3cm}&=&\hspace{-0.3cm}
\mathbb{E}\left[\left.X\frac{Z^{\nu^*,\eta^*}_\tau}{B^{\nu^*}_\tau}\right|\mathcal{F}_t\right]
\nonumber\\
\hspace{-0.3cm}&=&\hspace{-0.3cm}
1+\int_0^t\frac{X^{1,y,\pi,\nu^*}_sZ_s^{\nu^*,\eta^*}}{B^{\nu^*}_s}\left[(\sigma(Y_s)^{\mathsf{T}}\pi_s-\theta^{\nu^*}(Y_s))^{\mathsf{T}}dW_{1s}+\eta^*_sdW_{2s}\right], 
\end{eqnarray}
and satisfies
\begin{eqnarray}
dX^{1,y,\pi,\nu^*}_t 
\hspace{-0.3cm}&=&\hspace{-0.3cm} 
\left[r(Y_t)+\delta(\nu^*_t)+\pi_t^{\mathsf{T}}(\mu(Y_t)+\nu^*_t-r(Y_t)\mathbf{1})\right]X^{1,y,\pi,\nu^*}_tdt
\nonumber\\
\hspace{-0.3cm}&&\hspace{-0.3cm}
+X^{1,y,\pi,\nu^*}_t\pi^{\mathsf{T}}_t\sigma(Y_t)dW_{1t},
\end{eqnarray}
with $X^{1,y,\pi,\nu^*}_0=1$ and $X^{1,y,\pi,\nu^*}_\tau=X.$
To show $X^{1,y,\pi,\nu^*}\in\mathcal{U}_0(1,y)$, it is sufficient to prove that $\pi$ satisfies 
$$\pi\in K\quad\text{ and }\quad\delta(\nu^*_t)+\pi^{\mathsf{T}}_t\nu^*_t=0, \,\,a.s..$$
For any fixed $\nu\in\mathcal{D}$ and fixed $\epsilon\in(0,1)$, $n\in\mathbb{N}$, let us introduce the following notations 
\begin{align}
(\nu^*)^{(\nu,\epsilon,n)}_{t}&:=\nu^*_t+\epsilon(\nu_t-\nu^*_t)\mathbf{1}_{\{t\leq\tau_n\}},\quad
x(\nu):=\mathbb{E}\left[X\frac{Z^{\nu,\eta^*}_{\tau}}{B^{\nu}_{\tau}}\right],\nonumber\\
\hat{\delta}^{\nu}(\nu^*_t)&:=\begin{cases}
    -\delta(\nu^*_t), &\nu\equiv0, \\
    \delta(\nu_t-\nu^*_t),&\text{otherwise},
\end{cases}
\quad\,\,
L^{\nu}_t:=\int_0^t\hat{\delta}^{\nu}(\nu^*_s)ds,\nonumber\\
N^{\nu}_t&:=\int_0^t\left[\sigma(Y_s)^{-1}(\nu_s-\nu^*_s)\right]^{\mathsf{T}}dW^{\nu^*}_{1s},\nonumber
\end{align} 
where 
\begin{eqnarray}
\tau_n
\hspace{-0.3cm}&:=&\hspace{-0.3cm}
\tau\wedge\inf\left\{t\in[0,\tau];|L^{\nu}_t|\geq n,\text{ or }|N^{\nu}_t|\geq n,\text{ or }\int_0^t\|\sigma(Y_s)^{-1}(\nu_s-\nu^*_s)\|^2ds\geq n,\right.\nonumber\\
\hspace{-0.3cm}&&\hspace{-0.3cm}
\text{or }\int_0^t(X^{1,y,\pi,\nu^*}_s/B^{\nu^*}_s)^2\|\nu_t-\nu^*_t\|^2ds\geq n,
\text{ or }\int^{t}_0\|\theta(Y_s)+\sigma(Y_s)^{-1}\nu^*_s\|^2\geq n,
\text{ or }\int_0^t\|\eta^*_s\|^2ds\geq n,
\nonumber\\
\hspace{-0.3cm}&&\hspace{-0.3cm}
\left.\text{or }\int_0^t(X^{1,y,\pi,\nu^*}_s/B^{\nu^*}_s)^2\|\sigma(Y_s)^{-1}(\nu_s-\nu^*_s)
+(L^{\nu}_s+N^{\nu}_s)\sigma(Y_s)^{\mathsf{T}}\pi_s\|^2ds\geq n\right\},\quad n\in\mathbb{N}.
\end{eqnarray}
We next take $\nu\equiv\nu^*+\varsigma$ for arbitrary fixed $\varsigma\in\mathcal{D}$ and $\nu\equiv0$. For both choices, it holds that $\lim_{n\rightarrow\infty}\tau_n=\tau$ almost surely.
For  $\nu\equiv\nu^*+\varsigma$, using \eqref{subad.property}, one obtains that
\begin{eqnarray}
\delta(\nu^*_s+\epsilon(\nu_s-\nu^*_s))-\delta(\nu^*_s)
\hspace{-0.3cm}&\leq&\hspace{-0.3cm}
\delta(\nu^*_s)+\delta(\epsilon(\nu_s-\nu^*_s))-\delta(\nu^*_s)
\nonumber\\
\hspace{-0.3cm}&=&\hspace{-0.3cm}
\epsilon\delta(\nu_s-\nu^*_s)
=
\epsilon\hat{\delta}^{\nu}(\nu^*_s),
\end{eqnarray}
and, for $\nu\equiv0$, one has
\begin{eqnarray}
\delta(\nu^*_s+\epsilon(\nu_s-\nu^*_s))-\delta(\nu^*_s)
\hspace{-0.3cm}&=&\hspace{-0.3cm}\delta((1-\epsilon)\nu^*_s)-\delta(\nu^*_s)
\nonumber\\
\hspace{-0.3cm}&=&\hspace{-0.3cm}-\epsilon\delta(\nu^*_s)
=
\epsilon\hat{\delta}^{\nu}(\nu^*_s).
\end{eqnarray}

Hence, in either case, we have
\begin{eqnarray}\label{eq:pro4.5:h}
\frac{Z^{(\nu^*)^{(\nu,\epsilon,n)},\eta^*}_{t}/B^{(\nu^*)^{(\nu,\epsilon,n)}}_{t}}{Z^{\nu^*,\eta^*}_{t}/B^{\nu^*}_{t}}
\hspace{-0.3cm}&=&\hspace{-0.3cm}
\exp\left(-\int_0^{t\wedge\tau_n}\left(\delta(\nu^*_s+\epsilon(\nu-\nu^*))-\delta(\nu^*_s)\right)ds-\epsilon N^{\nu}_{t\wedge\tau_n}\right.
\nonumber\\
\hspace{-0.3cm}&&\hspace{1cm}
\left.-\frac{\epsilon^2}{2}\int_0^{t\wedge\tau_n}\|\sigma^{-1}(Y_s)(\nu_s-\nu^*_s)\|^2ds\right)
\nonumber\\
\hspace{-0.3cm}&\geq&\hspace{-0.3cm}
\exp\left(-\epsilon(L^{\nu}_{t\wedge\tau_n}+N^{\nu}_{t\wedge\tau_n})-\frac{\epsilon^2}{2}\int_0^{t\wedge\tau_n}\|\sigma^{-1}(Y_s)(\nu_s-\nu^*_s)\|^2ds\right)
\nonumber\\
\hspace{-0.3cm}&\geq&\hspace{-0.3cm}
e^{-3\epsilon n},
\end{eqnarray}
where in last inequality we have used the definition of the stopping time $\tau_n$ and the fact that $\epsilon\in(0,1)$. Furthermore, we have
\begin{eqnarray}
\frac{x(\nu^*)-x((\nu^*)^{(\nu,\epsilon,n)})}{\epsilon}
\hspace{-0.3cm}&=&\hspace{-0.3cm}
\mathbb{E}\left[\frac{XZ^{\nu^*,\eta^*}_{\tau}}{\epsilon B^{\nu^*}_{\tau}}\left(1-\frac{Z^{(\nu^*)^{(\nu,\epsilon,n)},\eta^*}_{\tau}/B^{(\nu^*)^{(\nu,\epsilon,n)}}_{\tau}}{Z^{\nu^*,\eta^*}_{\tau}/B^{\nu^*}_{\tau}}\right)\right]
\nonumber\\
\hspace{-0.3cm}&\leq&\hspace{-0.3cm}
\sup_{0<\epsilon<1}\frac{1-e^{-3\epsilon n}}{\epsilon}x(\nu^*)
\nonumber\\
\hspace{-0.3cm}&=&\hspace{-0.3cm}
\sup_{0<\epsilon<1}\frac{1-e^{-3\epsilon n}}{\epsilon}<\infty.
\end{eqnarray}
Hence, by Fatou's lemma, we have
\begin{eqnarray}\label{eq:pro4.5:b}
\limsup_{\epsilon\downarrow0}\frac{x(\nu^*)-x((\nu^*)^{(\nu,\epsilon,n)})}{\epsilon}
\hspace{-0.3cm}&\leq&\hspace{-0.3cm}
\mathbb{E}\left[\limsup_{\epsilon\downarrow0}\frac{XZ^{\nu^*,\eta^*}_{\tau}}{\epsilon B^{\nu^*}_{\tau}}\left(1-\frac{Z^{(\nu^*)^{(\nu,\epsilon,n)},\eta^*}_{\tau}/B^{(\nu^*)^{(\nu,\epsilon,n)}}_{\tau}}{Z^{\nu^*,\eta^*}_{\tau}/B^{\nu^*}_{\tau}}\right)\right]
\nonumber\\
\hspace{-0.3cm}&\leq&\hspace{-0.3cm}
\mathbb{E}\left[\frac{XZ^{\nu^*,\eta^*}_{\tau}}{ B^{\nu^*}_{\tau}}\left(L^{\nu}_{\tau_n}+N^{\nu}_{\tau_n}\right)\right].
\end{eqnarray}
By It\^o's lemma and \eqref{dX/B}, we have
\begin{eqnarray}
d\big(\frac{X^{1,y,\pi,\nu^*}_t}{B^{\nu^*}_t}(L_t^{\nu}+N_t^{\nu})\big)
\hspace{-0.3cm}&=&\hspace{-0.3cm}
\frac{X^{1,y,\pi,\nu^*}_t}{B^{\nu^*}_t}\pi^{\mathsf{T}}_t\sigma(Y_t)(L_t^{\nu}+N_t^{\nu})dW^{\nu^*}_{1t}+\frac{X^{1,y,\pi,\nu^*}_t}{B^{\nu^*}_t}(dL_t^{\nu}+dN_t^{\nu})
\nonumber\\
\hspace{-0.3cm}&&\hspace{-0.3cm}
+\frac{X^{1,y,\pi,\nu^*}_t}{B^{\nu^*}_t}\pi^{\mathsf{T}}_t(\nu_t-\nu^*_t)dt,
\end{eqnarray}
which implies that
\begin{eqnarray}
\frac{X^{1,y,\pi,\nu^*}_{\tau_n}}{B^{\nu^*}_{\tau_n}}(L^{\nu}_{\tau_n}+N^{\nu}_{\tau_n})
\hspace{-0.3cm}&=&\hspace{-0.3cm}
\int_0^{\tau_n}\frac{X^{1,y,\pi,\nu^*}_t}{B^{\nu^*}_t}\left(\pi^{\mathsf{T}}_t\sigma(Y_t)(L_t^{\nu}+N_t^{\nu})+\left[\sigma(Y_t)^{-1}(\nu_t-\nu^*_t)\right]^{\mathsf{T}}\right)dW^{\nu^*}_{1t}
\nonumber\\
\hspace{-0.3cm}&&\hspace{-0.3cm}
+\int_0^{\tau_n}\frac{X^{1,y,\pi,\nu^*}_t}{B^{\nu^*}_t}\left(\pi^{\mathsf{T}}_t(\nu_t-\nu^*_t)dt+dL^{\nu}_t\right).\nonumber
\end{eqnarray}
Taking the expectation under the probability measure $d\mathbb{P}_n^{\nu^*,\eta^*}=Z^{\nu^*,\eta^*}_{\tau_n}d\mathbb{P}$ on both sides of above equation, we obtain
\begin{eqnarray}\label{eq:pro4.5:c}
\mathbb{E}\left[\frac{X^{1,y,\pi,\nu^*}_{\tau_n}Z^{\nu^*,\eta^*}_{\tau_n}}{B^{\nu^*}_{\tau_n}}(L^{\nu}_{\tau_n}+N^{\nu}_{\tau_n})\right]=\mathbb{E}\left[\int_0^{\tau_n}\frac{X^{1,y,\pi,\nu^*}_tZ^{\nu^*,\eta^*}_t}{B^{\nu^*}_t}\left(\pi^{\mathsf{T}}_t(\nu_t-\nu^*_t)dt+dL^{\nu}_t\right)\right].
\end{eqnarray}
By \eqref{eq:pro4.5:a}, we have
$$\frac{Z^{\nu^*,\eta^*}_{\tau_n}}{B^{\nu^*}_{\tau_n}}X^{1,y,\pi,\nu^*}_{\tau_n}=\mathbb{E}\left[\left.\frac{Z^{\nu^*,\eta^*}_{\tau}}{B^{\nu^*}_{\tau}}X\right|\mathcal{F}_{\tau_n}\right],$$
which, together with \eqref{eq:pro4.5:c}, implies that
\begin{eqnarray}\label{eq:pro4.5:d}
\mathbb{E}\left[\frac{Z^{\nu^*,\eta^*}_{\tau}}{B^{\nu^*}_{\tau}}X(L^{\nu}_{\tau_n}+N^{\nu}_{\tau_n})\right]
\hspace{-0.3cm}&=&\hspace{-0.3cm}
\mathbb{E}\left[\mathbb{E}\left[\left.\frac{Z^{\nu^*,\eta^*}_{\tau}}{B^{\nu^*}_{\tau}}X\right|\mathcal{F}_{\tau_n}\right](L^{\nu}_{\tau_n}+N^{\nu}_{\tau_n})\right]
\nonumber\\
\hspace{-0.3cm}&=&\hspace{-0.3cm}
\mathbb{E}\left[\frac{X^{1,y,\pi,\nu^*}_{\tau_n}Z^{\nu^*,\eta^*}_{\tau_n}}{B^{\nu^*}_{\tau_n}}(L^{\nu}_{\tau_n}+N^{\nu}_{\tau_n})\right]
\nonumber\\
\hspace{-0.3cm}&=&\hspace{-0.3cm}
\mathbb{E}\left[\int_0^{\tau_n}\frac{X^{1,y,\pi,\nu^*}_tZ^{\nu^*,\eta^*}_t}{B^{\nu^*}_t}\left(\pi^{\mathsf{T}}_t(\nu_t-\nu^*_t)dt+dL^{\nu}_t\right)\right].
\end{eqnarray}
Hence, from \eqref{eq:pro4.5:b} and \eqref{eq:pro4.5:d}, we have
\begin{eqnarray}
\label{eq.5.18.w}
\limsup_{\epsilon\downarrow0}\frac{x(\nu^*)-x((\nu^*)^{(\nu,\epsilon,n)})}{\epsilon}
\hspace{-0.3cm}&\leq&\hspace{-0.3cm}
\mathbb{E}\left[\frac{XZ^{\nu^*,\eta^*}_{\tau}}{ B^{\nu^*}_{\tau}}\left(L^{\nu}_{\tau_n}+N^{\nu}_{\tau_n}\right)\right]
\nonumber\\
\hspace{-0.3cm}&=&\hspace{-0.3cm}
\mathbb{E}\left[\int_0^{\tau_n}\frac{X^{1,y,\pi,\nu^*}_tZ^{\nu^*,\eta^*}_t}{B^{\nu^*}_t}\left(\pi^{\mathsf{T}}_t(\nu_t-\nu^*_t)dt+dL^{\nu}_t\right)\right].
\end{eqnarray}
By the assumption \eqref{eq:pro4.5:e}, it holds that
$$\limsup_{\epsilon\downarrow0}\frac{x(\nu^*)-x((\nu^*)^{(\nu,\epsilon,n)})}{\epsilon}\geq0,$$
which together with \eqref{eq.5.18.w} yields that 
\begin{eqnarray}\label{eq:pro4.5:f}
\mathbb{E}\left[\int_0^{\tau_n}\frac{X^{1,y,\pi,\nu^*}_tZ^{\nu^*,\eta^*}_t}{B^{\nu^*}_t}\left(\pi^{\mathsf{T}}_t(\nu_t-\nu^*_t)dt+dL^{\nu}_t\right)\right]\geq0. 
\end{eqnarray}
Noting that $\nu\equiv\nu^*+\varsigma$, we have
\begin{eqnarray}\label{eq:pro4.5:g}
\mathbb{E}\left[\int_0^{\tau_n}\frac{X^{1,y,\pi,\nu^*}_tZ^{\nu^*,\eta^*}_t}{B^{\nu^*}_t}\left(\pi^{\mathsf{T}}_t\varsigma_tdt+\delta(\varsigma_t)dt\right)\right]\geq0,\quad n\in\mathbb{N},
\end{eqnarray}
and hence 
$$\pi^{\mathsf{T}}_t\varsigma_t+\delta(\varsigma_t)\geq0,\,a.e..$$
Then, using similar arguments of convex analysis as those in the proof of Theorem 9.1 in \cite{CK92}, we have $\pi\in K$.

On the other hand, for $\nu\equiv0$, it follows from \eqref{eq:pro4.5:f} that
$$\mathbb{E}\left[\int_0^{\tau_n}\frac{X^{1,y,\pi,\nu^*}_tZ^{\nu^*,\eta^*}_t}{B^{\nu^*}_t}\left(\pi^{\mathsf{T}}_t\nu^*_t+\delta(\nu^*_t)\right)dt \right]\leq0,\quad n\in\mathbb{N}.$$
Noting that \eqref{eq:pro4.5:g} is valid with $\varsigma$ replaced by $\nu^*$, we obtain $$\delta(\nu^*_t)+\pi^{\mathsf{T}}_t\nu^*_t=0, \quad a.e..$$
The proof is complete.
\end{proof}

\end{document}